\newtheorem{theorem}{Theorem}
\newtheorem{lemma}{Lemma}
\newtheorem{corollary}{Corollary}
\theoremstyle{definition}
\begin{document}

\title{Eccentricity queries and beyond using Hub Labels}

\author[1,2]{Guillaume Ducoffe}
\affil[1]{\small National Institute for Research and Development in Informatics, Romania}
\affil[2]{\small University of Bucharest, Romania}
\date{}

\maketitle

\begin{abstract}
    {\em Hub labeling schemes} are popular methods for computing distances on road networks and other large complex networks, often answering to a query within a few microseconds for graphs with millions of edges. 
In this work, we study their algorithmic applications beyond distance queries. Indeed, several implementations of hub labels were reported to have good practical performances, both in terms of pre-processing time and maximum label size. There are also a few relevant graph classes for applications for which we know how to compute hub labelings with sublinear labels in quasi linear time. These positive results raise the question of what can be computed efficiently given a graph and a hub labeling of small maximum label size as input. We focus on {\em eccentricity queries} and {\em distance-sum queries}, for several versions of these problems on directed weighted graphs, that is in part motivated by their importance in facility location problems. On the negative side, we show conditional lower bounds for these above problems on unweighted undirected sparse graphs, via standard constructions from ``Fine-grained'' complexity. Specifically, under the Strong Exponential-Time Hypothesis (SETH), answering to these above queries requires $\Omega(|V|^{2-o(1)})$ pre-processing time or  $\Omega(|V|^{1-o(1)})$ query time, even if we are given as input a hub labeling of maximum label size in $\omega(\log{|V|})$. However, things take a different turn when the hub labels have a {\em sublogarithmic} size. Indeed, given a hub labeling of maximum label size $\leq k$, after pre-processing the labels in total $2^{{O}(k)} \cdot |V|^{1+o(1)}$ time, we can compute both the eccentricity and the distance-sum of any vertex in $2^{{O}(k)} \cdot |V|^{o(1)}$ time. Our data structure is a novel application of the orthogonal range query framework of (Cabello and Knauer, {\it Computational Geometry}, 2009). It can also be applied to the fast {\em global} computation of some topological indices. Finally, as a by-product of our approach, on any fixed class of unweighted graphs with {\em bounded expansion}, we can decide whether the diameter of an $n$-vertex graph in the class is at most $k$ in $f(k) \cdot n^{1+o(1)}$ time, for some ``explicit'' function $f$. This result is further motivated by the empirical evidence that many classes of complex networks have bounded expansion (Demaine et al., {\it Journal of Computer and System Sciences}, 2019), and that some of these networks, such as the online social networks, have a relatively small diameter.
\end{abstract}

\section{Introduction}\label{sec:intro}
We refer to~\cite{BoM08,Die12} for basic notions and terminology in Graph Theory.
Real-world networks can be conveniently represented as a weighted graph $G=(V,E,w)$, with typical interpretations of the weight function $w : E \to \mathbb{R}_{> 0}$ being the latency of a link, the length of a road between two locations, etc. If $\forall e \in E, w_e = 1$, then we can drop the weight function, calling the graph unweighted and writing $G=(V,E)$. Unless stated otherwise, all graphs considered in this note are finite, weighted, directed and strongly connected\footnote{The strongly connected assumption could be easily dropped, but it makes the presentation simpler by avoiding the pathological case of infinite distances.}. For every ordered pair of vertices $(u,v)$, the distance $dist_G(u,v)$ is defined as the smallest weight of a path from $u$ to $v$. We omit the subscript if the graph $G$ is clear from the context. We note that distance computation is an important primitive in many scenarios, ranging from Satellite Navigation devices to forwarding network packets. However, in some practical or theoretical applications, the ``right'' notion of distance is a twist of the one presented above. Let us give a few examples (taken from~\cite{AVW16}).
\begin{itemize}
\item{\bf Source distance:} $dist(u,v)$. This is the standard distance.

\item{\bf Max distance:} $\max\{dist(u,v),dist(v,u)\}$. It turns the distances in a directed graph into a metric ({\it i.e.}, it is symmetric and it satisfies the triangular inequality).
 
\item{\bf Min distance:} $\min\{dist(u,v),dist(v,u)\}$. As a concrete application of the latter, Abboud et al.~\cite{AVW16} cited the problem of optimally locating a hospital (minimizing the time for either the patient coming to the hospital, or an ambulance drive being sent at the patient's home).
 
\item{\bf Roundtrip distance:} $dist(u,v)+dist(v,u)$. It also turns the distances in a directed graph into a metric. Many distance computation techniques for {\em undirected} graphs can be generalized to roundrip distances, such as {\it e.g.}, the existence of low-stretch sparse spanners~\cite{RTZ08}.
\end{itemize}

A {\em distance oracle} is a data structure for computing distances. Next, we present a well-known family of distance oracles, of which we explore in this work the algorithmic properties.

\medskip
\noindent
{\it Hub labeling.} A $2$-hopset or, as we call it here, a hub labeling, assigns to every vertex $v$ an ordered pair $(L^+(v),L^-(v))$ of vertex-subsets along with the distances $dist(v,x)$, for $x \in L^+(v)$, and $dist(x,v)$, for $x \in L^-(v)$. This labeling must ensure that, for any ordered pair of vertices $(u,v)$, we have $dist(u,v) = \min \{ dist(u,x) + dist(x,v) \mid x \in L^+(u) \cap L^-(v) \}$. In particular, we can compute $dist(u,v)$ -- and more generally, the source, max-, min- and roundtrip distances from $u$ to $v$ -- in ${O}(|L^+(u)|+|L^-(v)|)$ time. It has been observed in~\cite{KUL19} that most distance labeling constructions are based on hub label schemes, thereby further motivating our study of the latter as a first step toward a more general analysis of the algorithmic properties of distance oracles.

\smallskip
\noindent
The maximum label size is defined as $\max \{ |L^+(v)| \mid v \in V \} \cup \{ |L^-(v)| \mid v \in V \}$, and it is the main complexity measure for hub labelings\footnote{The {\em bit size} is larger by at least a logarithmic factor, because we need to encode the vertices' IDs and the distances values for all the vertices inside the labels.}. Every graph admits a hub labeling of maximum label size $\Theta(n)$, and this is essentially optimal~\cite{GPPR0404}. However, it has been experimentally verified that many real-world graphs, and especially road networks, admit hub labelings with small maximum label size~\cite{CHKZ03}. From a more theoretical point of view, every $n$-node planar graph admits a hub labeling with maximum label size ${O}(\sqrt{n}\log{n})$~\cite{GPPR0404}. Bounded-treewidth graphs and graphs of bounded {\em highway dimension} -- a parameter conjectured to be small for road networks -- admit hub labelings with (poly)logarithmic maximum label size~\cite{ADFG+16,GPPR0404}. Graphs of bounded tree-depth even have hub labelings with {\em constant} maximum label size, that can be computed in quasi linear time~\cite{IOO18}. 

\medskip
\noindent
{\it Eccentricity and Distance-sum computations.} Since many hub labeling schemes are readily implemented~\cite{AIY13,ADGW11,ADGW12,DGSW14,LUYK17}, our approach consists in assuming a graph to be given with a hub labeling, and to study whether this additional information helps in solving faster certain distance problems. Specifically, let $D(\cdot,\cdot)$ represent any distance function on directed graphs (source, min-, max- or roundtrip).  The {\em eccentricity} of a vertex $u$, denoted by $e^D(u)$, is the maximum distance from $u$ to any other vertex, {\it i.e.}, $e^D(u) = \max_v D(u,v)$. The {\em diameter} and the radius of a graph -- w.r.t. $D$ -- are, respectively, the maximum and minimum eccentricities amongst its vertices. The {\em distance-sum} of a vertex $u$, that we denote by $s^D(u)$, is the sum of all the distances from $u$ to any other vertex, {\it i.e.}, $s^D(u) = \sum_v D(u,v)$. A median is a vertex minimizing its distance-sum, while the Wiener index is equal to $\sum_u s^D(u)$. We stress that both computing the radius and the median set of a graph are fundamental {\em facility location} problems: with applications to optimal placement of schools, hospitals and other important facilities on a network. The problems of computing the diameter and the Wiener index found applications in network design and chemistry~\cite{RoK02}, respectively. In Sec.~\ref{sec:applications}, we consider other topological indices from chemical graph theory.

\subsection{Related work}

Computing the eccentricities in a graph is a notoriously ``hard'' problem in P, for which no quasi linear-time algorithm is likely to exist. For {\em weighted} $n$-vertex graphs, the problem is subcubic equivalent to All-Pairs Shortest-Paths (APSP). It is conjectured that we cannot solve APSP in ${O}(n^{3-\varepsilon})$ time for any $\varepsilon > 0$~\cite{AGW14}. For the special case of {\em unweighted} graphs, we can compute the eccentricities in truly subcubic time using fast matrix multiplication~\cite{Sei95,Yus10}. However if we assume the Strong Exponential-Time Hypothesis (SETH), then it cannot be improved to ${O}(n^{2-\varepsilon})$ time, for any $\varepsilon > 0$~\cite{RoV13}. Under a related hypothesis, the so-called ``Hitting Set Conjecture'', similar hardness results were proved for median computation~\cite{AVW16}. For all that, some heuristics and exact algorithms for diameter, radius and eccentricities computations do perform well in practice~\cite{BCHK+15,CGHL+13,Shu15,TaK13}, which has been partially justified from a theoretical viewpoint~\cite{DHV18+,Wag19,Wag20}.

\smallskip
\noindent
These hardness results have renewed interest in characterizing the graph classes for which we can compute some of these above distance problems in truly subquadratic time, or even  in quasi linear time. In this respect, several recent results were proved by using classical tools from Computational Geometry~\cite{AVW16,BHM20,CaK09,Cab18,DuD19+,DHV20,DHV19,GKHM+18}. Perhaps the most impressive such result is the truly subquadratic algorithm of Cabello for computing the diameter and the Wiener index of planar graphs~\cite{Cab18}. It is worth of mentioning that Cabello's techniques were the starting point of recent breakthrough results for exact {\em distance oracles} on planar graphs~\cite{CDW17,GMWW18}. In this paper, we ask whether conversely, having at hands an exact distance oracle with certain properties can be helpful in the design of fast algorithms for some distance problems on graphs. We give a partial answer for hub labelings.

\paragraph{Hub Labelings.}
Computing a hub labeling with optimal maximum label size is NP-hard~\cite{BGKS+15}, but it can be approximated up to a logarithmic factor in polynomial time~\cite{CHKZ03}.
The use of hub labels beyond distance queries was already considered in~\cite{EfEP15} and~\cite{ZhY20} for $k$-nearest neighbours queries and shortest-path counting, respectively.
%
Several generalizations of hub labels were proposed.
For instance, a {\em $3$-hopset} assigns a pair of labels $(L^-(v),L^+(v))$ for every vertex $v$ (as for hub labelings), but it computes an additional {\em global} set of distances, denoted by $E_2$. This labeling must ensure that, for every vertices $u$ and $v$, $dist(u,v) = \min\{ dist(u,x) + dist(x,y) + dist(y,v) \mid x \in L^+(u), y \in L^-(v), (x,y) \in E_2 \}$~\cite{GKV19}.
A {\em distance labeling scheme} is given by an encoding function $L$ and a decoding function $D$.
It assigns a label $L(v)$ to every vertex $v$ s.t., for every vertices $u$ and $v$, we have $dist(u,v) = D(L(u),L(v))$~\cite{GPPR0404}.
These extensions are to be investigated in future work.

\subsection{Results}

We first address the following types of queries on the vertices $v$ of a graph $G$.
\begin{itemize}
\item{\bf Eccentricity query:} compute the eccentricity of $v$; 
\item{\bf Distance-sum query:} compute the distance-sum of $v$.
\end{itemize}
Before stating our results, let us discuss about naive resolutions for these two types of queries.
If we compute APSP in ${O}(|V|^3)$ time, then we can compute all eccentricities and distance-sums in additional ${O}(|V|^2)$ time. In doing so, we get a trivial data structure in ${O}(|V|)$-space and ${O}(1)$ query time. Therefore, the main challenge here is to decrease the pre-processing time.
A hub labeling (or more generally, a distance-labeling) of maximum label size $k$ is an implicit representation of the distance-matrix in ${O}(k|V|)$ space.
It allows us to answer these above two types of queries in ${O}(k|V|)$ time.
However, this query time is in $\Omega(|V|)$, even in the very favourable case of $k = {O}(1)$.
If for instance, the graph considered is sparse, then this is no better than computing a shortest-path tree from scratch.   
%
Our first observation is that, {\it in general}, hub labelings do {\it not} help. Namely:  

\begin{theorem}\label{thm:seth}
Under SETH, for any $\varepsilon > 0$, any data structure for answering eccentricity queries or distance-sum queries requires $\Omega(|V|^{2-\varepsilon})$ pre-processing time or $\Omega(|V|^{1-\varepsilon})$ query time.
This holds even if a hub labeling of maximum label size $\omega(\log{|V|})$ is given as part of the input. 
\end{theorem}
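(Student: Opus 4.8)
The plan is to reduce from the Orthogonal Vectors problem (OV), whose $N^{2-o(1)}$ hardness under SETH is the standard source of quadratic lower bounds in ``Fine-grained'' complexity and already underlies the diameter lower bound of~\cite{RoV13}. Recall that an OV instance consists of two families $A = \{a_1,\dots,a_N\}$ and $B = \{b_1,\dots,b_N\}$ of Boolean vectors of dimension $d$, and the task is to decide whether some $a_i$ is orthogonal to some $b_j$ (that is, they share no coordinate equal to $1$); under SETH this cannot be done in $O(N^{2-\varepsilon})$ time when $d = \omega(\log N)$. I would fix such a dimension, say $d = \lceil \log^2 N \rceil$, so that $d$ is simultaneously super-logarithmic (as needed for OV hardness and for the claimed label size) and poly-logarithmic (as needed for sparsity).

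From such an instance I would build an unweighted undirected graph $G$ on the vertex set $A \cup B \cup C \cup \{u,w\}$, where $C = \{c_1,\dots,c_d\}$ represents the coordinates. The edges are: $a_ic_k$ whenever $a_i[k]=1$; $b_jc_k$ whenever $b_j[k]=1$; every $c_k$ joined to both $u$ and $w$; every $a_i$ joined to $u$; every $b_j$ joined to $w$; and finally the edge $uw$. This graph has $n = \Theta(N)$ vertices and $O(Nd) = n^{1+o(1)}$ edges, hence it is sparse; I would regard each undirected edge as a pair of opposite arcs so that $G$ is strongly connected and all four distance notions coincide (up to a global factor of $2$ for the roundtrip distance). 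The key distance computation, which I would verify case by case, is that every pair of vertices is at distance at most $3$, that $\mathrm{dist}(a_i,b_j)=2$ when $a_i$ and $b_j$ share a coordinate and $\mathrm{dist}(a_i,b_j)=3$ when they are orthogonal (the only length-$2$ detours go through a common coordinate, which orthogonality forbids, while the path $a_i\,u\,w\,b_j$ always provides length $3$), and that every other distance out of an $a_i$ is at most $2$.

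These properties turn both query types into an OV test. For eccentricities, $e(a_i) = 3$ exactly when $a_i$ has an orthogonal partner in $B$, and $e(a_i)=2$ otherwise; thus the $N$ queries on $a_1,\dots,a_N$ decide OV. For distance-sums, a direct count gives $s(a_i) = 4N + 2d + 1 - |a_i| + o_i$, where $|a_i|$ is the Hamming weight of $a_i$ and $o_i$ is the number of vectors of $B$ orthogonal to $a_i$; since $N$, $d$ and all Hamming weights are known after an $O(Nd)=n^{1+o(1)}$ reading of the input, each query recovers $o_i$, and OV holds iff some $o_i>0$. Either way, a data structure with $P(n) = O(n^{2-\varepsilon})$ preprocessing and $Q(n)=O(n^{1-\varepsilon})$ query time would solve OV in $O(N^{2-\varepsilon}) + N\cdot O(N^{1-\varepsilon}) = O(N^{2-\varepsilon})$ time, contradicting SETH.

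It remains to show that this hardness survives the extra gift of a hub labeling, which is the one point requiring a small dedicated argument, and which I expect to be the main (if mild) obstacle. I would exhibit the explicit labeling $L^+(v)=L^-(v)=\{v,u,w\}\cup C$ for every vertex $v$, of maximum label size $d+3 = \omega(\log n)$, and check that it is valid: for a non-orthogonal pair sharing coordinate $k$ the hub $c_k$ realizes the length-$2$ distance; for an orthogonal pair the hub $u$ (or $w$) realizes the length-$3$ distance; and every remaining shortest path passes through $u$, $w$, or a coordinate already present in both labels. Hence even when $G$ is presented together with a hub labeling of maximum label size $\omega(\log|V|)$, the reduction goes through unchanged, which is exactly the statement. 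The only real care needed is the bookkeeping: confirming the finite list of distance cases so that no unintended short path collapses the $2$-versus-$3$ gap, and checking that the same graph serves all of the source, min-, max-, and roundtrip versions of $D$.
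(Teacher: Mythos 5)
Your proposal is correct and follows essentially the same approach as the paper: a SETH-based diameter-$2$-versus-$3$ hardness construction on a sparse graph that admits a trivially computable hub labeling of size $\omega(\log|V|)$, composed via the observation that a data structure with preprocessing $P(n)$ and query time $Q(n)$ decides the diameter (or, through the distance sums, counts the far pairs) in $P(n)+n\cdot Q(n)$ time. The only difference is presentational: the paper cites the known hardness of diameter on split graphs with clique side $\omega(\log n)$~\cite{BCH16} and uses the folklore identity $s(v)=2(n-1)-\deg(v)$ for diameter at most $2$, whereas you inline the underlying Orthogonal Vectors gadget and verify its distances, distance sums, and explicit hub labeling directly, all of which check out.
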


\begin{proof}
We consider undirected graphs, for which all the aforementioned distances are equivalent (up to a factor $2$ for the roundtrip distance). In particular, a split graph is a graph whose vertex-set can be bipartitioned into a clique $K$ and a stable set $S$. We observe that any unweighted split graph $G=(K\cup S,E)$ admits a hub labeling with maximum label size ${O}(|K|)$, that can be computed in ${O}(|K||S|)$ time: simply store, for each vertex, its distance from and to any vertex of $K$, that is either $1$ or $2$. If we further assume $|K|=n^{o(1)}$, then this trivial hub labeling can be computed in $n^{1+o(1)}$ time.

\smallskip
\noindent
Under SETH, for any $\varepsilon > 0$, we cannot compute the diameter of $n$-vertex split graphs in ${O}(n^{2-\varepsilon})$ time, and this result holds even if $|K| = \omega(\log{n})$~\cite{BCH16}. 
Assume the existence of a data structure for eccentricity queries, with $P(n)$ pre-processing time and $Q(n)$ query time. In particular, we can compute the diameter in $P(n) + n \cdot Q(n)$ time. Therefore, $P(n) = \Omega(n^{2-\varepsilon})$, or $Q(n) = \Omega(n^{1-\varepsilon}).$

\smallskip
\noindent
In the same way, assume the existence of a data structure for distance-sum queries, with $P'(n)$ pre-processing time and $Q'(n)$ query time. Observe that the diameter of a non-complete split graph is either $2$ or $3$. It is folklore (see, {\it e.g.},~\cite{BHM20}), that an undirected unweighted graph has diameter at most $2$ if and only if the distance-sum of any vertex of degree $d$ is equal to $2(n-1) - d$. In particular, we can compute the diameter of an unweighted undirected split graph in $P'(n) + n \cdot Q'(n)$ time.  As a result, under SETH we must also have $P'(n) = \Omega(n^{2-\varepsilon})$, or $Q'(n) = \Omega(n^{1-\varepsilon}).$ 
\end{proof}

\noindent
{\it A framework for fast queries computation.} Positive results can be derived for hub labelings with sublogarithmic labels. Our algorithm is a novel application of a popular framework for fast diameter computation, using orthogonal range queries~\cite{AVW16,BHM20,CaK09,Duc19,DHV19}.

\begin{theorem}\label{thm:hub-label}
For source, min-, max- and roundtrip distances, for every graph $G=(V,E,w)$, if we are given a hub labeling with maximum label size $\leq k$, then we can compute a data structure for answering eccentricity queries and distance-sum queries with $2^{{O}(k)} \cdot |V|^{o(1)}$ query time. This takes $2^{{O}(k)} \cdot |V|^{1+o(1)}$ pre-processing time.
\end{theorem}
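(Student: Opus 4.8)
The plan is to focus first on eccentricity queries for the source distance, and to explain at the end how the other three distances and the distance-sum variant are handled by the same machinery. Fix a query vertex $u$ with $L^+(u) = \{x_1,\dots,x_m\}$, $m \le k$. By the defining property of the hub labeling, $dist(u,v) = \min\{ dist(u,x) + dist(x,v) \mid x \in L^+(u)\cap L^-(v)\}$, so $e^D(u) = \max_v \min\{ dist(u,x)+dist(x,v) \mid x \in L^+(u)\cap L^-(v)\}$. This is a max–min expression in which the inner minimum ranges over at most $k$ hubs. To linearize it, I would assign every vertex $v$ to the hub $x^\*(v)\in L^+(u)\cap L^-(v)$ that attains this minimum (breaking ties by a fixed order on the hubs), so that $e^D(u) = \max_{x\in L^+(u)} \bigl( dist(u,x) + \max\{ dist(x,v) \mid v \text{ is assigned to } x\}\bigr)$. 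The whole problem then reduces to: for a designated hub $x$, maximize $dist(x,v)$ over the vertices $v$ for which $x$ beats all competing hubs.

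The point is that the condition ``$x$ is the minimizing hub for $v$'' is a conjunction of at most $k-1$ linear inequalities of a special, separable form: for every competitor $z \in L^-(v)\cap L^+(u)$ we need $dist(u,x)+dist(x,v) \le dist(u,z)+dist(z,v)$, i.e. $dist(x,v)-dist(z,v) \le dist(u,z)-dist(u,x)$. Here the left-hand side $dist(x,v)-dist(z,v)$ is an intrinsic coordinate of $v$, while the right-hand side is a threshold determined entirely by the query $u$. Hence the set of admissible $v$ is an axis-parallel box in the coordinate system indexed by the competing hubs, and we are asking for the maximum of a stored weight ($dist(x,v)$) over the vertices falling inside this box. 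This is precisely an orthogonal range query, so I would store the relevant points in the range-tree data structures of Cabello and Knauer, which in fixed dimension $d$ answer range-max (and, for the distance-sum version, range-sum) queries in $\log^{O(d)}|V|$ time after $|V|\cdot\log^{O(d)}|V|$ preprocessing.

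The main obstacle, and the genuinely new ingredient compared with the classical separator-based uses of this framework, is that the hubs vary from vertex to vertex: the ``axes'' against which a vertex $v$ must be tested are its own backward hubs $L^-(v)$, and only those in $L^+(u)$ are actually active at query time. If we enforce too few inequalities we overestimate $dist(u,v)$ and report a spuriously large eccentricity; if we enforce inequalities for hubs outside $L^+(u)$ we wrongly discard $v$. My plan to resolve this is to key the range trees by the \emph{competitor profile}: during preprocessing I insert each vertex $v$, for each choice of designated minimizer $x\in L^-(v)$ and each subset $C\subseteq L^-(v)$ with $x\in C$, into a tree indexed by the pair $(x,C)$, using only the coordinates named by $C\setminus\{x\}$. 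This costs $\sum_v |L^-(v)|\cdot 2^{|L^-(v)|} = 2^{O(k)}\cdot|V|$ insertions into trees of dimension at most $k-1$, giving $2^{O(k)}\cdot|V|^{1+o(1)}$ total preprocessing. At query time, for each $x\in L^+(u)$ I enumerate the subsets $C\subseteq L^+(u)$ with $x\in C$ and query the tree $(x,C)$ with the box obtained from $u$'s thresholds, taking care that each vertex is matched against the profile $C = L^+(u)\cap L^-(v)$ \emph{exactly} — neither fewer nor more hubs — so that every $v$ is accounted for by precisely one profile and the reported maximum equals the true eccentricity. This is $2^{O(k)}$ range queries, each in $\log^{O(k)}|V|$ time, hence $2^{O(k)}\cdot|V|^{o(1)}$ query time; the bookkeeping that makes the profiles partition the vertices is where I expect the argument to require the most care.

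Finally, for the distance-sum query I would replace range-max trees by range-sum trees and accumulate $dist(u,x)\cdot(\#v\text{ in the box}) + \sum dist(x,v)$, which the same structures support. For the max-, min- and roundtrip distances the only change is in the inner expression: the roundtrip distance $dist(u,v)+dist(v,u)$ is a sum of two hub-minima, so I would fix a pair of designated minimizers (one from $L^+(u)\cap L^-(v)$ and one from $L^+(v)\cap L^-(u)$), yielding $2^{O(k)}$ profiles over a box of dimension $O(k)$; the max distance $\max\{dist(u,v),dist(v,u)\}$ splits, for eccentricity, into the forward and backward eccentricities computed independently, and for the distance-sum it additionally requires conditioning on which of the two terms dominates, contributing only more (but still $2^{O(k)}$) profiles. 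In every case the dimension stays $O(k)$ and the profile count stays $2^{O(k)}$, so the claimed $2^{O(k)}\cdot|V|^{1+o(1)}$ preprocessing and $2^{O(k)}\cdot|V|^{o(1)}$ query bounds are preserved.
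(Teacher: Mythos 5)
Your overall architecture matches the paper's: enumerate the profile $X=L^+(u)\cap L^-(v)$ together with a designated (tie-broken) minimizing hub $x$, observe that ``$x$ beats every competitor $z$'' becomes the separable constraints $dist(x,v)-dist(z,v)\leq dist(u,z)-dist(u,x)$, and answer with range-max/range-sum trees over the distance-difference coordinates; the distance-sum, roundtrip and max-distance adaptations you sketch are also essentially the paper's. But there is a genuine gap exactly at the point you flag as ``where the argument requires the most care,'' and it is not mere bookkeeping. Your tree indexed by $(x,C)$ contains every vertex $v$ with $C\subseteq L^-(v)$, so a query against it returns, besides the vertices with $L^+(u)\cap L^-(v)=C$, all vertices whose true profile strictly contains $C$. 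For such a $v$ the box constraints only certify that $x$ beats the hubs in $C\setminus\{x\}$, not the hubs in $(L^+(u)\cap L^-(v))\setminus C$; if the true minimizer lies outside $C$, the reported value $dist(u,x)+dist(x,v)$ strictly exceeds $dist(u,v)$ and can exceed $e(u)$, so the eccentricity is overestimated, and for the distance-sum the same $v$ is counted once per subset of its true profile that it happens to pass. Nothing in your data structure can filter these vertices out: excluding them requires testing, at query time, that no element of $L^+(u)\setminus C$ belongs to $L^-(v)$, and $L^+(u)$ is query-dependent while $L^-(v)$ is data-dependent, so it cannot be folded into the index $(x,C)$.

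The missing idea (Lemma~\ref{lem:red-box} in the paper) is to store, as additional coordinates of each point, the sorted list of vertex identifiers of $L^-(v)$ itself, and then, at query time, to enumerate for each $z\in L^+(u)\setminus X$ the position $\ell_z$ at which $z$ would be inserted into that sorted list; the condition $z\notin L^-(v)$ then becomes the box constraints $q^-_{\ell}(v)<z$ for $\ell<\ell_z$ and $q^-_{\ell}(v)>z$ for $\ell\geq\ell_z$, and the number of choices of all the $\ell_z$ is still $2^{O(k)}$. Equality of the profile, i.e.\ $X\subseteq L^-(v)$ in prescribed positions together with the exclusion of $L^+(u)\setminus X$, is what makes the retrieved sets partition $V$ and the reported hub genuinely minimizing. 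A secondary, smaller omission: for the \emph{min}-distance eccentricity $\max_v\min\{dist(u,v),dist(v,u)\}$ the two directions do not decouple the way they do for the max-distance, so you also need the extra coordinate comparing $dist(u,v)$ with $dist(v,u)$ (the paper's split into $V^{\leq}_{u,\varphi}$ and $V^{>}_{u,\varphi}$), which you only mention for the distance-sum.
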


We stress that in practice, the bottleneck of Theorem~\ref{thm:hub-label} is the computation of a hub labeling. For this task, we are bound to use heuristics~\cite{DGSW14}. In a few restricted classes, such as graphs of bounded tree-depth, {\em constant-size} hub labels can be computed in quasi linear time~\cite{IOO18}. -- Below, we give an application of this result to graph classes of bounded expansion. -- Let us cite, as another example of graphs with small hub labels, the graphs of bounded vertex-cover. Indeed, we can easily derive a hub labeling of maximum label size $k$ from a vertex-cover of cardinality $k$. However, for the {\em unweighted} graphs of bounded vertex-cover, there exist slightly faster methods for eccentricities and distance-sum computations than our Theorem~\ref{thm:hub-label}~\cite{BHM20,CDP19}.
We hope that our results will encourage the quest for other graph classes that admit constant-size hub labels. For instance, it was conjectured in~\cite{ADFG+16} that all graphs of highway dimension at most $h$ admit a hub labeling with maximum label size ${O}(h)$. Our results in this paper show that solving this conjecture in the affirmative would be a first step toward fast diameter computation within these graphs, with applications to road networks.

\smallskip
\noindent
Our techniques are inspired from those in~\cite{AGW14} for bounded treewidth graphs, of which they are to some extent a generalization. Indeed, the algorithms in~\cite{AGW14} for graphs of treewidth at most $k$ parse a hub labeling of maximum label size in ${O}(k\log{n})$, but where every label is composed of ${O}(\log{n})$ levels, each containing ${O}(k)$ vertices. Roughly, it allows the authors from~\cite{AGW14} to build an algorithm of logarithmic recursion depth, where at each call we may assume to be given hub labels of maximum size ${O}(k)$. However, in~\cite{AGW14}, all the nodes considered have the same ${O}(k)$ vertices in their labels (the latter forming a balanced separator). This is no more true for general hub labels of maximum size ${O}(k)$, a case which requires more complex range queries than in~\cite{AGW14}. Incidentally, for {\em roundtrip distances}, we improved on our way the $2^{{O}(k^2)}n^{1+o(1)}$-time algorithm from~\cite{AGW14} for directed graphs of treewidth at most $k$: 

\begin{corollary}\label{cor:treewidth}
For the roundtrip distance, for any $n$-vertex directed graph of treewidth at most $k$, we can compute all the eccentricities and distance-sums in $2^{{O}(k)}n^{1+o(1)}$ time.
\end{corollary}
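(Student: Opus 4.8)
The plan is to feed a \emph{structured} hub labeling of bounded-treewidth graphs into the machinery behind Theorem~\ref{thm:hub-label}, while being careful to keep the dimension of the range queries linear in $k$. First I would replace a given width-$k$ tree decomposition by a balanced one, of width $O(k)$ and depth $O(\log n)$ (as is classical, see~\cite{AGW14}), rooted and binary. Assigning each vertex to a node of this decomposition, every node $a$ carries a bag $S_a$ of $O(k)$ vertices that separates the vertices of its left subtree from those of its right subtree. I would then precompute the associated hub labeling, which stores for every vertex $u$ and every ancestor $a$ of $u$ the true distances $dist_G(u,x)$ and $dist_G(x,u)$ for all $x \in S_a$. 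This labeling has $O(\log n)$ levels, each a single shared separator of $O(k)$ vertices, and it can be built in $2^{O(k)}n^{1+o(1)}$ time by a pass over the decomposition.

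Next I would decompose the eccentricities and distance-sums along the tree. Because $S_a$ separates the two subtrees of $a$, for $u$ in one subtree and $v$ in the other every path from $u$ to $v$ and every path from $v$ to $u$ crosses $S_a$; hence the roundtrip distance splits as $D(u,v) = D_1(u,v) + D_2(u,v)$, with $D_1(u,v) = \min_{x \in S_a}\bigl(dist(u,x)+dist(x,v)\bigr)$ the true distance from $u$ to $v$ and $D_2(u,v) = \min_{y \in S_a}\bigl(dist(v,y)+dist(y,u)\bigr)$ the true distance from $v$ to $u$. Grouping every vertex $v$ by the lowest node $a$ separating it from $u$, the roundtrip eccentricity $e^D(u)$ becomes a maximum, and the distance-sum $s^D(u)$ a sum, over these $O(\log n)$ groups, each handled at the corresponding node $a$ using only its $O(k)$-size level $S_a$. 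Processing the decomposition node by node, at node $a$ I would run, for $u$ ranging over one subtree and $v$ over the other, the orthogonal range searches behind Theorem~\ref{thm:hub-label}~\cite{CaK09}, aggregating a maximum for eccentricities and a count together with a sum for distance-sums (breaking ties so that each $v$ is counted once).

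The hard part --- and exactly the point where we gain a factor over the $2^{O(k^2)}$-time algorithm of~\cite{AGW14} --- is to keep these range queries in $O(k)$ dimensions. Treating $D(u,v) = \min_{(x,y) \in S_a \times S_a}\bigl(dist(u,x)+dist(x,v)+dist(v,y)+dist(y,u)\bigr)$ head-on would force one to condition on the optimal \emph{pair} $(x,y)$ among $|S_a|^2 = O(k^2)$ candidates, producing $O(k^2)$ separating inequalities and hence $O(k^2)$-dimensional searching. Instead I would use the factorization $\min_{(x,y)}(\cdots) = D_1(u,v) + D_2(u,v)$: the two directions decouple, so I condition \emph{separately} on the minimizer $x$ of $D_1$ and the minimizer $y$ of $D_2$. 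Each conditioning is a system of only $|S_a|-1$ inequalities of the form $(\text{function of } v) \le (\text{function of } u)$, so together they define an $O(k)$-dimensional dominance region; the $O(k^2)$ choices of $(x,y)$ are merely an outer loop, costing a harmless $2^{O(\log k)}$ factor. For a fixed $(x,y)$, the part of $D(u,v)$ depending on $v$ is $dist(x,v)+dist(v,y)$, which is what the data structure aggregates, while the part $dist(u,x)+dist(y,u)$ depending on $u$ is added after the query; on the region where $(x,y)$ is optimal this recovers $D(u,v)$ exactly.

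Finally I would sum the costs. In dimension $O(k)$, the range-query framework answers all the queries at a node $a$ in $2^{O(k)}\,|V_a|^{1+o(1)}$ time, where $V_a$ is the vertex set of $a$'s subtree. Since each vertex belongs to the subtrees of only its $O(\log n)$ ancestors, $\sum_a |V_a| = O(n \log n)$, and the whole traversal runs in $2^{O(k)} n^{1+o(1)}$ time; the $O(k)$ contributions from vertices placed inside $S_a$ and from same-node pairs are negligible. I expect the dimension control of the previous paragraph to be the only real obstacle: without the factorization one is pinned at $O(k^2)$ dimensions and the $2^{O(k^2)}$ running time of~\cite{AGW14}, whereas the remaining steps are a routine bookkeeping of the balanced decomposition and of the max- versus count/sum-aggregates used for eccentricities and distance-sums.
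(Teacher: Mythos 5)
Your proposal is correct and follows essentially the same route as the paper: a balanced-separator strategy \`a la~\cite{AGW14} with ${O}(\log n)$ overhead, the ${O}(k)$-size separator bags serving as shared (pseudo-)hub labels, and --- crucially --- the decoupling of the forward minimizer $x$ and the backward minimizer $y$ of the roundtrip distance so that each contributes only ${O}(k)$ dominance constraints (with the ${O}(k^2)$ pairs $(x,y)$ as an outer loop), which is exactly how Lemma~\ref{lem:red-box} keeps the range queries in ${O}(k)$ dimensions and yields the improvement from $2^{{O}(k^2)}$ to $2^{{O}(k)}$. The only cosmetic difference is that you make one pass over a balanced binary tree decomposition, grouping targets by their lowest separating ancestor, whereas the paper recurses on $G'[A\cup C]$ and $G'[B\cup C]$ after adding a shortcut clique on $C$ and invokes the bi-chromatic version of Theorem~\ref{thm:hub-label}; the two formulations are equivalent.
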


\noindent
{\it Application to graph classes of bounded expansion.} Our initial motivation for this work was to study constant diameter computation within graph classes of {\em bounded expansion}. Without entering too much into technical details, the graph classes of bounded expansion are exactly those whose so-called shallow minors are all sparse (see Sec.~\ref{sec:nowhere-dense} for a formal definition). In particular, it generalizes bounded-degree graphs, bounded-treewidth graphs, and more generally properly minor-closed graph classes~\cite{NeO12}. In~\cite{DRRV19}, evidence was given that many classes of complex networks exhibit a bounded-expansion structure. Among those networks, social networks are well-known to obey the so-called ``small-world'' property~\cite{WaS98}, that implies a relatively small diameter. In this context, our theoretical results for constant-diameter computation within graph classes of bounded expansion might be a first step toward more practical algorithms for computing the diameter of social networks and other complex networks with similar properties.

\begin{theorem}\label{thm:nowhere-dense}
For every class of graphs $\mathcal{G}$ of bounded expansion, for every $n$-vertex unweighted graph $G \in \mathcal{G}$ and positive integer $k$, we can decide whether the diameter of $G$ is at most $k$ in ${O}(f(k) \cdot n^{1+o(1)})$ time, for some function $f$.
\end{theorem}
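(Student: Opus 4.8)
The plan is to combine the \emph{low tree-depth colorings} available in any bounded-expansion class with the per-subgraph data structure of Theorem~\ref{thm:hub-label}. Recall that for every $p$ there is an integer $N = N(\mathcal{G},p)$ so that every $G \in \mathcal{G}$ admits a vertex coloring with $N$ colors, computable in $f(k)\cdot n^{1+o(1)}$ time, such that the union of any $\le p$ color classes induces a subgraph of tree-depth $\le p$ (see~\cite{NeO12}). First I would compute such a coloring for $p = k+1$. The key elementary observation is that a shortest $u$--$v$ path of length $\le k$ has at most $k+1$ pairwise distinct vertices, hence uses at most $k+1$ colors; writing $G[C]$ for the subgraph induced by a set $C$ of colors, this path survives in $G[C]$ for $C$ equal to its color set, and since $G[C]$ is a subgraph of $G$ distances can only grow. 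Therefore
\[
dist_G(u,v) \le k \iff \exists\, C,\ |C| \le k+1,\ dist_{G[C]}(u,v) \le k .
\]
Consequently $G$ has diameter $\le k$ if and only if every ordered pair $(u,v)$ is ``covered'' by some such $C$.

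The number of color sets $C$ with $|C| \le k+1$ is at most $(k+2)N^{k+1} =: M$, a function of $k$ and $\mathcal{G}$ only. Each $G[C]$ has tree-depth $\le k+1$, so by~\cite{IOO18} it admits a hub labeling of maximum label size $O_k(1)$, computable in $n^{1+o(1)}$ time; over all $M$ subgraphs this costs $f(k)\cdot n^{1+o(1)}$. What remains is to decide, using these $M$ labelings, whether $e_G(u) \le k$ for \emph{every} vertex $u$, where by the displayed equivalence $e_G(u) \le k$ amounts to $|B(u,k)| = n$ with $B(u,k) = \bigcup_{C} B_C(u)$ and $B_C(u) = \{v : dist_{G[C]}(u,v) \le k\}$ the radius-$k$ ball of $u$ in $G[C]$. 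The difficulty is that the relevant $G$-distance is a \emph{minimum over subgraphs} taken \emph{before} the maximization over $v$, so it is not expressible from the individual eccentricities $e_{G[C]}(u)$ returned by Theorem~\ref{thm:hub-label}.

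To get around this I would compute $|B(u,k)| = |\bigcup_C B_C(u)|$ by inclusion--exclusion over sub-collections $\mathcal{S} \subseteq \{C : |C|\le k+1\}$:
\[
|B(u,k)| = \sum_{\emptyset \ne \mathcal{S}} (-1)^{|\mathcal{S}|+1}\, \big|\{\, v : dist_{G[C]}(u,v) \le k \ \text{for all } C \in \mathcal{S}\,\}\big| ,
\]
so that the decision reduces to the $2^{M} = f(k)$ quantities indexed by $\mathcal{S}$. Each such quantity is, for every query vertex $u$, the number of $v$ lying simultaneously in $|\mathcal{S}|$ radius-$k$ balls; expanding each ball membership through the corresponding hub labeling (for $C \in \mathcal{S}$, vertex $v$ lies in $B_C(u)$ iff some hub $x \in L^+_C(u) \cap L^-_C(v)$ satisfies $dist_{G[C]}(u,x) + dist_{G[C]}(x,v) \le k$) turns the count into an orthogonal range query on the concatenated labels, of dimension at most $|\mathcal{S}| \cdot O_k(1)$. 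Batching these queries over all $u$ through the range framework underlying Theorem~\ref{thm:hub-label} (itself a use of the Cabello--Knauer data structure) should yield total time $f(k)\cdot n^{1+o(1)}$, after which we output ``diameter $\le k$'' iff $|B(u,k)| = n$ for all $u$. \textbf{The main obstacle} is precisely this last step: unlike in Theorem~\ref{thm:hub-label}, the range queries must run across \emph{several} hub labelings at once, and one must (i) reconcile the ``exists hub'' disjunction defining ball membership with the conjunction over $C \in \mathcal{S}$ without double-counting a given $v$, and (ii) keep the query dimension --- and hence the exponent hidden in $f(k)$ --- bounded by a function of $k$ alone. Handling (i) by partitioning the $v$'s according to their optimal meeting hub in each subgraph, exactly as the single-labeling case does, is what I expect to require the most care.
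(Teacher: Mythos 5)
Your first half matches the paper: compute a low tree-depth decomposition (Lemma~\ref{lem:compute-dec}), observe that any shortest path of length $\le k$ lives inside the union of at most $k{+}1$ colour classes, and build a constant-size hub labeling for each such union via~\cite{IOO18}. But the final reduction -- the inclusion--exclusion over sub-collections $\mathcal{S}$ of colour sets, with multi-labeling range queries to count $\bigl|\bigcap_{C\in\mathcal{S}} B_C(u)\bigr|$ -- is exactly the part you flag as ``the main obstacle'' and do not carry out. As written this is a genuine gap: the whole difficulty of the theorem is concentrated in that step (reconciling the per-subgraph ``exists a hub'' disjunctions with the conjunction over $\mathcal{S}$ without double counting, while keeping the range-tree dimension under control), and ``should yield'' plus ``I expect to require the most care'' is not a proof. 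It is plausible the machinery could be pushed through by partitioning the $v$'s according to their least optimal hub in every $C\in\mathcal{S}$ simultaneously, but you would then be re-deriving a substantially harder version of Lemma~\ref{lem:red-box}.

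The paper sidesteps all of this with one observation you are missing: simply \emph{concatenate} the $\binom{c}{k}$ per-subgraph labelings into a single labeling of size $\le k\binom{c}{k}$ per vertex. Every stored value $dist_{G[C]}(u,x)+dist_{G[C]}(x,v)$ is an upper bound on $dist_G(u,v)$ because $G[C]$ is a subgraph, and for any pair with $dist_G(u,v)\le k$ the witnessing colour set $C$ contributes a hub realizing the exact distance. Hence the minimum over the concatenated label is always $\ge dist_G(u,v)$ and equals it whenever $dist_G(u,v)\le k$; in particular the ``diameter'' computed from this labeling is $\le k$ if and only if the true diameter is. So one can feed the concatenated labeling directly into Theorem~\ref{thm:hub-label} as if it were a genuine hub labeling (of maximum label size $k\binom{c}{k}$, a function of $k$ and $\mathcal{G}$ only) and read off the answer in $2^{O\left(k\binom{c}{k}\right)}n^{1+o(1)}$ time -- no inclusion--exclusion and no cross-labeling range queries needed. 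I recommend you replace your last step with this argument; the rest of your write-up then goes through essentially unchanged.
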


Diameter computation within unweighted graph classes of bounded expansion has already received some attention in the literature. In particular, for the special case of undirected graphs of maximum degree $3$, Dahlgaard and Evald proved that under SETH, we cannot compute the diameter in truly subquadratic time~\cite{EvD16}. But their hardness results hold for bounded-degree graphs of super-logarithmic diameter. In contrast to this negative result, testing whether a graph has diameter at most some {\em constant} $k$ can be written as a first-order formula of size ${O}(k)$. Therefore, in any class of unweighted graphs of bounded expansion, we can derive from a prior work of Dvo\u{r}\'{a}k et al. a quasi linear-time {\em parameterized} algorithm for constant diameter computation~\cite{DKT10}. Unfortunately, the hidden dependency in the parameter $k$ is rather huge due to the use of Courcelle's theorem~\cite{Cou90}. Recently, we proposed a different approach for constant diameter computation within nowhere dense graph classes -- a broad generalization of the graph classes of bounded expansion--, based on a VC-dimension argument~\cite{DHV20}. However, the running time of our algorithm for deciding whether the diameter of an $n$-vertex graph is at most $k$ was of order $\tilde{O}(n^{2-f(k)})$ for some super-exponential function $f$.

Our Theorem~\ref{thm:nowhere-dense} improves on these previous works by using {\em low tree-depth decompositions}: a covering of $n$-vertex graphs of bounded expansion by relatively few subgraphs of bounded {\em tree-depth}, so that for a fixed $k$, each $k$-vertex subgraph is contained in at least one subgraph of this covering~\cite{NeO12}. Combined with previous results on hub labelings within bounded tree-depth graph classes~\cite{IOO18}, it allows us to prove the existence, for every fixed $k$, of ${O}(f(k))$-size hub labels for the pairs of vertices at distance $\leq k$, for some ``explicit'' function $f$ -- about a tower of exponentials of height four. Then, Theorem~\ref{thm:nowhere-dense} follows from our Theorem~\ref{thm:hub-label}. 
We left open whether our approach could be generalized to {\em nowhere dense} graph classes.

\subsection{Organization of the paper}
Our main technical contribution (Theorem~\ref{thm:hub-label}) is proved in Sec.~\ref{sec:hub}. In Sec.~\ref{sec:tw} and~\ref{sec:ti}, we discuss applications of Theorem~\ref{thm:hub-label} to, respectively, bounded-treewidth graphs (Corollary~\ref{cor:treewidth}) and the computation of topological indices. In particular, in Sec.~\ref{sec:ti}, we exploit recent results of Cabello from~\cite{Cab19}, and we present a new type of distance information which can be computed from the orthogonal range query framework. Finally, we conclude this paper in Sec.~\ref{sec:nowhere-dense} with our results for graph classes of bounded expansion (Theorem~\ref{thm:nowhere-dense}).

\section{Properties of Hub Labels}\label{sec:hub}
\subsection{Range Queries}\label{sec:range-query}

We first review some important terminology, and prior results. Let $V$ be a set of $k$-dimensional points, where each point $\overrightarrow{p} = (p_1,p_2,\ldots,p_k) \in V$ is assigned some value $f(\overrightarrow{p})$.
A box is the cartesian product of $k$ intervals $I_i = (l_i,u_i)$, for $1 \leq i \leq k$, denoted by $\mathcal{R} = I_1 \times I_2 \times \ldots \times I_k$. Note that we allow each of the $k$ intervals to exclude either of its ends, and that we allow these ends to be infinite. Furthermore, $\overrightarrow{p} \in \mathcal{R} \cap V$ if and only if $\forall 1 \leq i \leq k, \ p_i \in I_i$.
A {\em range query} asks for some information about the points within a given box.
We use the following types of range queries:
\begin{itemize}
\item{\bf Max-Query:} Compute a point $\overrightarrow{p} \in \mathcal{R}$ maximizing $f(\overrightarrow{p})$;
\item{\bf Sum-Query:} Compute the sum $\sum_{\overrightarrow{p} \in \mathcal{R}} f(\overrightarrow{p})$;
\item{\bf Count-Query:} Count $| V \cap \mathcal{R} |$ (the latter can be obtained from the above sum-query by setting $\forall \overrightarrow{p} \in V, f(\overrightarrow{p}) = 1$).
\end{itemize}
We refer to~\cite{BHM20} for a thorough treatment of range queries and their applications to distance problems on graphs. The $k$-dimensional range tree is a classic data structure in order to answer range queries efficiently for static point sets. Evidence of its practicality for graph problems was given in~\cite{Mag18}.

\begin{lemma}[\cite{BHM20}]\label{lem:range-tree}
For all the aforementioned types of range queries, for every $k$-dimensional point set $V$ of size $n$, we can construct a $k$-dimensional range tree in $2^{{O}(k)} n^{1+o(1)}$ time, that allows to answer a query in $2^{{ O}(k)}n^{o(1)}$ time. 
\end{lemma}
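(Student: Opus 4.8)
The plan is to build, recursively on the dimension, the classical multi-level range tree, augmenting each node with a precomputed aggregate so that all three query types reduce to combining $O(\log^k n)$ such aggregates. I would first treat the one-dimensional base case: store the points in a balanced binary search tree keyed by their coordinate, and annotate every node $t$ with the aggregate of $f$ over its canonical subset (the points in the subtree of $t$) --- namely their number for Count-Query, the value $\sum f$ for Sum-Query, and a point attaining $\max f$ for Max-Query. Since counting, summation, and maximization are all \emph{decomposable} (the aggregate of a disjoint union is recovered in $O(1)$ time from the aggregates of the parts), these annotations are filled in bottom-up in $O(n)$ time once the tree is built. For $k \geq 2$, I would build a balanced search tree on the first coordinate and attach to each of its nodes $t$ a $(k-1)$-dimensional range tree, built by the very same procedure on coordinates $2,\dots,k$ and storing exactly the canonical subset of $t$; the annotations now live in these associated structures.

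To answer a query for a box $\mathcal{R}=I_1\times\cdots\times I_k$, I would locate in the primary tree the $O(\log n)$ canonical nodes whose subtrees partition the points with first coordinate in $I_1$; these hang off the two root-to-leaf search paths for the endpoints of $I_1$, which is where a factor $2$ per dimension, hence the overall $2^{O(k)}$, first appears (open, closed, or infinite endpoints only alter per-dimension constants). For each such node I recurse on its associated $(k-1)$-dimensional structure with the box $I_2\times\cdots\times I_k$, and finally combine the returned aggregates using decomposability of the query operator. Unrolling the recursion, a query visits $O(\log^k n)$ canonical nodes and spends $2^{O(k)}$ time per dimension on bookkeeping, for a query time of $2^{O(k)}\log^k n$. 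For the preprocessing, the key accounting is that each point occurs in $O(\log n)$ canonical subsets per level, so that the total size of the level-$i$ associated structures telescopes to $2^{O(k)}\,n\log^{k-1} n$; building all of them, and then their annotations, stays within that bound.

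It remains to rewrite these polylogarithmic-in-$n$ running times in the advertised form $2^{O(k)}n^{1+o(1)}$ and $2^{O(k)}n^{o(1)}$, and this is the one step demanding genuine care. Writing $\log^k n = 2^{k\log\log n}$ exposes that the $\log$-factors are governed by the ratio between $k$ and $\log n$: treating $k$ as the parameter and letting $n\to\infty$, one has $k\log\log n = o(\log n)$ for every fixed $k$, whence $\log^k n = n^{o(1)}$, and the query and preprocessing bounds collapse to $2^{O(k)} n^{o(1)}$ and $2^{O(k)}n^{1+o(1)}$ respectively --- which is exactly the sublogarithmic-label regime relevant to all our applications of the lemma. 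I expect this absorption of an exponent-$k$ polylogarithmic factor into the $n^{o(1)}$ term --- and, if one insists on a bound that remains meaningful as $k$ grows with $n$, the finer interplay between the range-tree cost $2^{O(k)}\log^k n$ and the trivial $O(kn)$ linear-scan alternative --- to be the only real technical obstacle. Everything preceding it is the standard range-tree machinery, and the sharp parameterized bounds are those established in~\cite{BHM20}.
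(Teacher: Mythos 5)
The paper offers no proof of this lemma: it is imported verbatim from~\cite{BHM20}, so the only thing to assess is whether your argument actually establishes the stated bounds. Your construction (multi-level range trees with decomposable aggregates stored at canonical nodes) is the standard one and is correct, as is the accounting $2^{O(k)}\log^{k}n$ for queries and $2^{O(k)}n\log^{k-1}n$ for preprocessing. The genuine gap is in the step you yourself flag as delicate: the claim that $\log^{k}n=2^{k\log\log n}$ can be absorbed into $2^{O(k)}n^{o(1)}$. This holds for fixed $k$, and more generally for $k=o(\log n/\log\log n)$, but it is \emph{false} as a uniform statement: at $k=\Theta(\log n/\log\log n)$ one has $\log^{k}n=n^{\Theta(1)}$ while $2^{O(k)}n^{o(1)}=n^{o(1)}$, and at $k=\Theta(\log n)$ one has $\log^k n = n^{\Theta(\log\log n)}$, which is superpolynomial. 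Your proposed rescue via the trivial $O(kn)$ linear scan does not close this: $\min\{\log^{k}n,\,n\}$ still equals $n\gg n^{o(1)}$ at $k=\log n/\log\log n$. This regime is not academic for this paper --- Corollary~\ref{cor:treewidth} advertises a $2^{O(k)}n^{1+o(1)}$ bound whose linear dependence on $k$ in the exponent is claimed optimal under SETH, which is only meaningful when $k$ may grow to $\Theta(\log n)$.

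The missing idea, which is precisely the contribution of~\cite{BHM20} over the earlier $\log^{k}n$-style analyses of Cabello and Knauer, is a sharper count of canonical nodes: the associated $(k-1)$-dimensional structures hanging off the two root-to-leaf search paths store point sets whose sizes decrease geometrically, so the recurrence for the number of canonical nodes visited is $Q(k,\log n)\le 2\sum_{j\le\log n}Q(k-1,j)$ rather than $Q(k,n)\le 2\log n\cdot Q(k-1,n)$. This solves to $2^{O(k)}\binom{k+\lceil\log n\rceil}{k}$, and the binomial coefficient --- unlike $\log^{k}n$ --- is $2^{O(k)}n^{o(1)}$ across the whole range of $k$ (it is $n^{o(1)}$ when $k=o(\log n)$ and $2^{O(k)}$ when $k=\Omega(\log n)$). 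The same refinement applies to the total size of the associated structures for the preprocessing bound. Without this refinement your argument proves a strictly weaker lemma than the one stated, and one insufficient for the paper's treewidth application.
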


\subsection{Proof of Theorem~\ref{thm:hub-label}}

We are now ready to prove our main algorithmic tool for this paper. Our approach is essentially independent from the type of query and the distance considered. We first present our results for eccentricity queries, postponing the slight changes to be made for distance-sum queries until the end of this section. Similarly, we postpone the specific parts for each distance function to the end of the section. In what follows, let $D(\cdot,\cdot)$ be an arbitrary distance function ({\it i.e.}, source, min-, max- or roundtrip).

\medskip
\noindent
{\it Decomposition-based techniques.} Let us fix a hub labeling for $G=(V,E,w)$ of maximum label size $\leq k$. For every $1 \leq i,j\leq k$, let $V_{i,j} = \{ v \in V \mid |L^-(v)| = i, \ |L^+(v)| = j \}$. In particular, $(V_{i,j})_{1 \leq i,j \leq k}$ is a partition of $V$. For a fixed choice of $i,j$ and any vertex $u$, we want to compute: $e^D(u,V_{i,j}) = \max_{v \in V_{i,j}} D(u,v).$
Indeed, $e^D(u) = \max_{i,j}e^D(u,V_{i,j})$. For that, {\em let us fix a total ordering over $V$}.
For any vertex $u \in V$ and $X \subseteq L^+(u), \ Y \subseteq L^-(u)$, we now define: \begin{align*}e^D(u,V_{i,j},X,Y) = \max\{ D(u,v) \mid v \in V_{i,j}, &\ L^+(u) \cap L^-(v) = X, \\ &\ L^-(u) \cap L^+(v) = Y \}.\end{align*}
We observe that $e^D(u,V_{i,j}) = \max_{X \subseteq L^+(u), \ Y \subseteq L^-(u)} e^D(u,V_{i,j},X,Y)$, that follows from the definition of hub labelings. 
Finally, let $(x,y) \in X \times Y$ be arbitrary. 
Recall that we totally ordered the vertices of the graph. We further reduce the computation of all eccentricities to computing:
\begin{align*}
e^D(u,&V_{i,j},X,Y,x,y) = \max\{ D(u,v) \mid v \in V_{i,j}, \\
& L^+(u) \cap L^-(v) = X, \ L^-(u) \cap L^+(v) = Y, \\
&x \ \text{is the least vertex of} \ X \ \text{s.t.} \ dist(u,v) = dist(u,x) + dist(x,v), \\
&y \ \text{is the least vertex of} \ Y \ \text{s.t.} \ dist(v,u) = dist(v,y) + dist(y,u) \},
\end{align*}
for all $u \in V$ and $\varphi = (V_{i,j},X,Y,x,y)$.
As before we can observe that $e^D(u,V_{i,j},X,Y) = \max_{x \in X,y \in Y}  e^D(u,V_{i,j},X,Y,x,y)$.  
Furthermore, because the maximum label size is $\leq k$, for any fixed vertex $u$ there are at most ${O}(k^44^k) = 2^{{O}(k)}$ tuples $(V_{i,j},X,Y,x,y)$ to consider.
We will show how to reduce the computation of $e^D(u,V_{i,j},X,Y,x,y)$ to $2^{{O}(k)}$ range queries. 
The following technical lemma is the gist of our approach in the paper.

\begin{lemma}\label{lem:red-box}
Let a hub labeling with maximum label size $\leq k$ be given for a graph $G=(V,E,w)$.
In $O(2^{O(k)}|V|)$ time, we can map every vertex $v \in V$ to a $(2(|L^-(v)|+|L^+(v)|)-1)$-dimensional point-set $\{ \overrightarrow{p(v,s,t)} \mid 1 \leq s \leq |L^-(v)|, \ 1 \leq t \leq |L^+(v)| \}$.

\smallskip
Moreover, for every $u \in V$ and $\varphi=(V_{i,j},X,Y,x,y)$, let $V_{u,\varphi}^{\leq}$ (resp., $V_{u,\varphi}^{>}$) be the set of all vertices $v \in V_{i,j}$ s.t.: $L^+(u) \cap L^-(v) = X$, $L^-(u) \cap L^+(v) = Y$, $x$ is the least vertex of $X$ on a shortest $uv$-path, $y$ is the least vertex of $Y$ on a shortest $vu$-path, and $dist(u,v) \leq dist(v,u)$ (resp., $dist(u,v) > dist(v,u)$).
In $2^{O(k)}$ time, we can compute a family of $2^{O(k)}$ boxes $\mathcal{R}\langle \overrightarrow{\ell_X}, \overrightarrow{\ell_Y} \rangle$ where $\overrightarrow{\ell_X} = (\ell_{x^*})_{x^* \in X} \subseteq [i+1]^X$, $\overrightarrow{\ell_Y} = (\ell_{y^*})_{y^* \in Y} \subseteq [j+1]^Y$, and we have:
$$ v \in V_{u,\varphi}^{\leq} \ \text{(resp.,} \ V_{u,\varphi}^{>} \text{)} \ \Longleftrightarrow \exists (\overrightarrow{\ell_X},\overrightarrow{\ell_Y}) \ \text{s.t.} \ \overrightarrow{p(v,\ell_x,\ell_y)} \in  \mathcal{R}\langle \overrightarrow{\ell_X}, \overrightarrow{\ell_Y} \rangle.$$
If $v \in V_{u,\varphi}^{\leq} \ \text{(resp.,} \ V_{u,\varphi}^{>} \text{)}$, then there exists a {\em unique} $(\overrightarrow{\ell_X},\overrightarrow{\ell_Y}) \ \text{s.t.} \ \overrightarrow{p(v,\ell_x,\ell_y)} \in  \mathcal{R}\langle \overrightarrow{\ell_X}, \overrightarrow{\ell_Y} \rangle.$
\end{lemma}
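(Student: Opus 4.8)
The plan is to reduce, for each fixed source $u$ and tuple $\varphi=(V_{i,j},X,Y,x,y)$, the two membership predicates ``$v\in V_{u,\varphi}^{\le}$'' and ``$v\in V_{u,\varphi}^{>}$'' to point-in-box tests, so that afterwards a single Max-Query (resp.\ Sum-Query) of Lemma~\ref{lem:range-tree} per box computes the corresponding contribution to the eccentricity (resp.\ to the distance-sum). The whole reduction rests on one decomposition: under the conditions defining $\varphi$, the labeling gives $dist(u,v)=dist(u,x)+dist(x,v)$ and $dist(v,u)=dist(v,y)+dist(y,u)$, so that, separately on $V_{u,\varphi}^{\le}$ and on $V_{u,\varphi}^{>}$, the quantity $D(u,v)$ is an affine function of the two $v$-intrinsic distances $dist(x,v)$ and $dist(v,y)$. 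Their $u$-independent parts will be stored as the value of a point, while the $u$-dependent constants $dist(u,x)$ and $dist(y,u)$ are added back after the query (they are constant on each box). The structural constraint I must respect throughout is that the points depend on $v$ only, hence are precomputed once, whereas the boxes are assembled at query time from the constants $dist(u,x^*)$ and $dist(y^*,u)$.

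First I would construct the points. Writing $L^-(v)=\{z_1<\dots<z_i\}$ and $L^+(v)=\{w_1<\dots<w_j\}$ in the fixed total order, for each ordered pair of candidate witness slots $(s,t)\in[i]\times[j]$ I emit one point $\overrightarrow{p(v,s,t)}$ carrying three kinds of coordinates: the backward gaps $dist(z_\ell,v)-dist(z_s,v)$ and forward gaps $dist(v,w_m)-dist(v,w_t)$ relative to the chosen witnesses $z_s,w_t$; the single cross-gap $dist(z_s,v)-dist(v,w_t)$; and the identities (positions in the total order) of $z_1,\dots,z_i,w_1,\dots,w_j$. This is exactly $2(i+j)-1$ coordinates, gives $ij$ points per vertex, and all of them are read off the distances already stored in the labels in $\mathrm{poly}(k)\cdot|V|\le 2^{O(k)}|V|$ time. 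Storing gaps rather than raw distances is what makes the next, crucial step work.

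The crux of the proof, and the step I expect to be the main obstacle, is turning the two ``least witness on a shortest path'' requirements -- which are naturally about minima and about differences of $v$-distances with $u$-dependent thresholds -- into axis-parallel constraints. Using the decomposition above, ``$x$ is the least vertex of $X$ with $dist(u,v)=dist(u,x)+dist(x,v)$'' is equivalent to asking, for every $h\in X$, that $dist(h,v)-dist(x,v)\ge dist(u,x)-dist(u,h)$, the inequality being strict exactly when $h<x$; this is how the tie-break ``least vertex'' is enforced, namely through open versus closed interval ends. When $x=z_s$, each such difference is literally one backward-gap coordinate of $\overrightarrow{p(v,s,\cdot)}$, so the conjunction over $h\in X$ becomes a product of half-lines, i.e.\ a box; the analogous requirement on $y$ constrains the forward-gap coordinates, and the case split $dist(u,v)\le dist(v,u)$ versus $dist(u,v)>dist(v,u)$ reduces to a single threshold on the cross-gap coordinate, since $dist(u,v)-dist(v,u)$ equals a $u$-fixed constant plus $dist(z_s,v)-dist(v,w_t)$. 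The delicate points are that the thresholds must all be computable from the stored constants, and that the open/closed boundaries have to be chosen consistently so that, together with the set conditions below, the family of boxes is a genuine \emph{partition} (each eligible $v$ caught once): this partition property, not merely a reduction to boxes, is what a later Sum-Query needs in order not to overcount $v$.

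Finally I would account for the a priori unknown positions of the hubs of $X$ and $Y$ inside $v$'s sorted labels, which is the role of the level vectors $\overrightarrow{\ell_X},\overrightarrow{\ell_Y}$. Since the witness slot, and more generally the slot of every hub of $X$ in $L^-(v)$, varies with $v$, I enumerate the admissible level vectors $\overrightarrow{\ell_X}\in[i+1]^X$ and $\overrightarrow{\ell_Y}\in[j+1]^Y$, namely those order-consistent with the total order; these are order-preserving injections, of which there are only $\binom{i}{|X|}\binom{j}{|Y|}=2^{O(k)}$, which is exactly where the claimed box count comes from. For each such pattern I instantiate one box, imposing the gap and cross-gap intervals of the previous paragraph on the coordinates of the point $\overrightarrow{p(v,\ell_x,\ell_y)}$ selected by the witness slots $\ell_x,\ell_y$, and using the identity coordinates, with degenerate and (where needed) open ends, to force the hub at slot $\ell_{x^*}$ to equal $x^*$ and to secure the \emph{exact} equalities $L^+(u)\cap L^-(v)=X$ and $L^-(u)\cap L^+(v)=Y$ rather than mere containment -- the latter being what makes the boxes disjoint. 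For a fixed $v$ its hub positions are determined, so its point meets exactly one box; this yields both implications of the stated equivalence and the required uniqueness, and since each box is described by $O(i+j)$ interval ends computed from the stored constants, the whole family is produced in $2^{O(k)}$ time.
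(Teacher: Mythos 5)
Your construction of the points (identity coordinates for the sorted labels, backward/forward gap coordinates relative to the witness slots, and the single cross-gap coordinate, for a total of $2(i+j)-1$), your translation of the ``least witness'' conditions into strict/non-strict thresholds on the gap coordinates, and your reduction of the split $dist(u,v)\le dist(v,u)$ versus $dist(u,v)>dist(v,u)$ to one threshold on the cross-gap coordinate all coincide with the paper's proof. There is, however, one genuine gap: the enforcement of the \emph{exact} equality $L^+(u)\cap L^-(v)=X$ (and symmetrically $L^-(u)\cap L^+(v)=Y$). You enumerate only the $\binom{i}{|X|}\binom{j}{|Y|}$ order-preserving placements of $X$ and $Y$ inside the sorted labels, and claim that within each such pattern a single box, using degenerate and open ends on the identity coordinates, secures equality rather than mere containment. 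This cannot work: pinning the hubs of $X$ to their slots and confining each remaining identity coordinate to the open interval between two consecutive pinned values only prevents elements of $X$ from reappearing; it does not exclude a hub $z\in L^+(u)\setminus X$ whose position in the total order lies strictly inside such an interval, because the complement of an interior point is not an interval and hence not expressible by one axis-parallel constraint. Without this exclusion, a vertex $v$ with $L^+(u)\cap L^-(v)\supsetneq X$ is caught by the boxes for $X$, which breaks both the stated equivalence and the uniqueness/partition property that the downstream sum-queries rely on.

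The repair is an additional enumeration, which is precisely the step of the paper's proof that your box count omits: for each $z\in L^+(u)\setminus X$, since $L^-(v)$ is sorted, the indices $\ell$ with $q^-_\ell(v)<z$ form a prefix and those with $q^-_\ell(v)>z$ form a suffix, so one enumerates the breakpoint $\ell_z\in[i+1]$ and adds the constraints $q^-_\ell(v)<z$ for all $\ell<\ell_z$ and $q^-_\ell(v)>z$ for all $\ell\ge\ell_z$ (and likewise for $L^-(u)\setminus Y$). This multiplies the family of boxes by a further $2^{O(k)}$ factor, so the claimed $2^{O(k)}$ bound and the uniqueness statement survive, but the enumeration is necessary for correctness. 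Aside from this omission, your argument is sound and follows the same route as the paper.
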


\begin{proof}
For every $v \in V_{i,j}$ and $1 \leq s \leq i, \ 1 \leq t \leq j$, the point $\overrightarrow{p(v,s,t)} = (p_\ell(v,s,t))_{1 \leq \ell \leq 2(i+j)-1}$ is defined as follows: 
\begin{itemize}
\item The elements $p_1(v,s,t),p_2(v,s,t),\ldots,p_i(v,s,t)$ are the vertices in $L^-(v)$ totally ordered. In particular, these $i$ first coordinates form an $i$-dimensional point $\overrightarrow{q^-(v)} = (q^-_\ell(v))_{1 \leq \ell \leq i}$ that is a common prefix to all the points $\overrightarrow{p(v,s',t')}$. 
\item The elements $p_{i+1}(v,s,t),p_{i+2}(v,s,t),\ldots,p_{i+j}(v,s,t)$ are the vertices in $L^+(v)$ totally ordered. These $j$ consecutive coordinates form a $j$-dimensional point $\overrightarrow{q^+(v)} = (q^+_\ell(v))_{1 \leq \ell \leq j}$ that is also common to all the points $\overrightarrow{p(v,s',t')}$. 
\item The elements $p_{i+j+1}(v,s,t),p_{i+j+2}(v,s,t),\ldots,p_{2i+j-1}(v,s,t)$ are equal to the values: $dist_G(q^-_r(v),v) - dist_G(q^-_s(v),v),$ for $1 \leq r \leq i, r \neq s$. In particular, these coordinates may be different between $\overrightarrow{p(v,s,t)}$ and $\overrightarrow{p(v,s',t')}$, for $s \neq s'$.   
\item The elements $p_{2i+j}(v,s,t),p_{2i+j+1}(v,s,t),\ldots, p_{2(i+j-1)}(v,s,t)$ are equal to: $dist_G(v,q^+_r(v)) - dist_G(v,q^+_t(v)),$ for $1 \leq r \leq j, r \neq t$. In particular, these coordinates may be different between $\overrightarrow{p(v,s,t)}$ and $\overrightarrow{p(v,s',t')}$, for $t \neq t'$.    
\item Finally, we set the last coordinate to $p_{2(i+j)-1}(v,s,t) =  dist_G(q^-_s(v),v) - dist_G(v,q^+_t(v))$.
\end{itemize}
Then, let $u \in V$ and $\varphi = (V_{i,j},X,Y,x,y)$ be fixed. In order to restrict ourselves to the vertices $v \in V_{i,j}$ s.t. $L^+(u) \cap L^-(v) = X$, we define a family of $2^{{O}(k)}$ range queries over the points $\overrightarrow{q^-(v)}, \ v \in V_{i,j}$.
\begin{enumerate}
\item We encode the set of $|X|$ indices in which the nodes of $X$ must be found. That is, let $X = (x_1,x_2,\ldots,x_{|X|})$ be totally ordered. For a fixed choice of $1 \leq \ell_1 < \ell_2 < \ldots < \ell_{|X|} \leq i$, we restrict ourselves the vertices $v$ such that $\forall 1 \leq r \leq |X|, \ q^-_{\ell_r}(v) = x_r$. We stress that there are only $\binom{i}{|X|} = 2^{{O}(k)}$ possibilities. 
\item In order to exclude $L^+(u)\setminus X$ from $L^-(v)$, we encode, for each vertex $z$ in this subset, the least index $\ell$ s.t. $q^-_{\ell}(v)$, and so, all subsequent vertices of $L^-(v)$, is greater than $z$; if no such index $\ell$ exists, then by convention we associate to $z$ the value $i+1$. Specifically, for every $z \in L^+(u)\setminus X$, we pick $1 \leq \ell_z \leq i+1$ and we add the following range constraints for our query:
$$\begin{cases}
\forall 1 \leq \ell < \ell_z, \ q^-_{\ell}(v) < z \\
\forall \ell_z \leq \ell \leq i, \ q^-_{\ell}(v) > z
\end{cases}$$
Since we totally ordered $L^+(u)$, each possibility is fully characterized by: \texttt{(i)} the set of indices $\{ \ell_z \mid z \in L^+(u) \setminus X \}$; \texttt{(ii)} and an ordered partition of $L^+(u) \setminus X$ such that two vertices $z,z'$ are in the same group if and only if $\ell_z = \ell_{z'}$. Hence, the number of possibilities here is at most $\sum_{\ell = 1}^{|L^+(u)|-|X|} \binom{i+1}{\ell} \cdot 2^{|L^+(u)|-|X|} \leq 2^{k+1} \cdot 2^{|L^+(u)|-|X|} = 2^{{O}(k)}$.
\end{enumerate} 
Furthermore, in the exact same way as above, in order to restrict ourselves to the points $v \in V_{i,j}$ s.t. $L^-(u) \cap L^+(v) = Y$, we can define a family of $2^{{O}(k)}$ range queries over the points $\overrightarrow{q^+(v)}, \ v \in V_{i,j}$. For the remaining of the proof, let us fix one query over the points $\overrightarrow{q^-(v)}, \ v \in V_{i,j}$, and one query over the points $\overrightarrow{q^+(v)}, \ v \in V_{i,j}$. Note that we can represent the latter as two sequences of indices $\overrightarrow{\ell_X} = (\ell_{x^*})_{x^* \in X}$ and $\overrightarrow{\ell_Y} = (\ell_{y^*})_{y^* \in Y}$. 
We want to further restrict ourselves to the vertices $v$ s.t. $x$ is the least vertex of $X$ on a shortest $uv$-path, and in the same way $y$ is the least vertex of $Y$ on a shortest $vu$-path. The above condition on $x$ is equivalent to have $\forall x' \in X$:
$$\begin{cases}
(1) \ x' < x \Longrightarrow dist_G(u,x') + dist_G(x',v) > dist_G(u,x) + dist_G(x,v) \\
(2) \ x' > x \Longrightarrow dist_G(u,x') + dist_G(x',v) \geq dist_G(u,x) + dist_G(x,v) 
\end{cases}$$
which can be rewritten as:
$$\begin{cases}
(1) \ dist_G(q^-_{\ell_{x'}}(v),v) - dist_G(q^-_{\ell_{x}}(v),v) > \left(dist_G(u,x) - dist_G(u,x')\right) \\
(2) \ dist_G(q^-_{\ell_{x'}}(v),v) - dist_G(q^-_{\ell_{x}}(v),v) \geq \left(dist_G(u,x) - dist_G(u,x')\right) 
\end{cases}$$
and under this form, can be encoded as additional range constraints over the points $\overrightarrow{p(v,\ell_x,\ell_y)}$, for the indices between $i+j+1$ and $2i+j-1$. We proceed similarly for the desired condition on the vertex $y$. Finally, in order to complete these inequalities into a range query for $V_{u,\varphi}^{\leq}$, we further impose: $p_{2(i+j)-1}(v,\ell_x,\ell_y) = dist_G(q^-_{\ell_x}(v),v) - dist_G(v,q^+_{\ell_y}(v)) \leq (dist_G(y,u)-dist_G(u,x)).$
Indeed, in this situation, $dist_G(u,v) = dist_G(u,x) + dist_G(x,v) \leq dist_G(v,y) + dist_G(y,u) = dist_G(v,u)$. We proceed similarly for $V_{u,\varphi}^{>}$ by changing direction of the last inequality. 
\end{proof}

We split the proof for eccentricity queries in two lemmas so as to take into account additional technicalities for the min- and max-distance. Specifically:

\begin{lemma}\label{lem:ecc-sourc-round}
Let a hub labeling with maximum label size $\leq k$ be given for a graph $G=(V,E,w)$.
If $D(\cdot,\cdot)$ is the source distance or the roundtrip distance, then after a pre-processing in $2^{{O}(k)}|V|^{1+o(1)}$ time, for any $u \in V$ and $\varphi=(V_{i,j},X,Y,x,y)$ we can compute $e^D(u,V_{i,j},X,Y,x,y)$ in $2^{{ O}(k)}|V|^{o(1)}$ time.
\end{lemma}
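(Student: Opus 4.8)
The plan is to combine Lemma~\ref{lem:red-box} with the Max-Query primitive of Lemma~\ref{lem:range-tree}. Recall that, for a fixed query $(u,\varphi)$ with $\varphi = (V_{i,j},X,Y,x,y)$, the set $V_{u,\varphi} = V_{u,\varphi}^{\leq} \cup V_{u,\varphi}^{>}$ consists exactly of the vertices $v$ contributing to $e^D(u,V_{i,j},X,Y,x,y)$, and Lemma~\ref{lem:red-box} produces a family of $2^{O(k)}$ boxes over the point-set $\{\overrightarrow{p(v,s,t)}\}$ whose disjoint union captures $V_{u,\varphi}$: each $v \in V_{u,\varphi}$ places exactly one of its points, namely $\overrightarrow{p(v,\ell_x,\ell_y)}$, in exactly one box. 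It therefore suffices to turn $D(u,v)$ into a suitable Max-Query value $f$.

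The crucial observation, specific to the source and roundtrip distances, is that the defining constraints of $V_{u,\varphi}$ force a shortest-path decomposition through the fixed hubs $x$ and $y$. Indeed, $x \in X \subseteq L^+(u)\cap L^-(v)$ lies on a shortest $uv$-path, so $dist(u,v) = dist(u,x) + dist(x,v)$ holds as an \emph{equality} for \emph{every} $v \in V_{u,\varphi}$; similarly $y \in Y \subseteq L^-(u)\cap L^+(v)$ gives $dist(v,u) = dist(v,y) + dist(y,u)$. Since $x,y$ are part of the fixed tuple $\varphi$ and $x \in L^+(u)$, $y \in L^-(u)$, the quantities $dist(u,x)$ and $dist(y,u)$ are stored in the label of $u$, hence are constants over the whole set $V_{u,\varphi}$ that can be read off in $O(k)$ time.

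For the source distance I would then set, during pre-processing, $f(\overrightarrow{p(v,s,t)}) = dist_G(q^-_s(v),v)$ for every point of the set of Lemma~\ref{lem:red-box}. When a box $\mathcal{R}\langle \overrightarrow{\ell_X},\overrightarrow{\ell_Y}\rangle$ selects the point $\overrightarrow{p(v,\ell_x,\ell_y)}$, we have $q^-_{\ell_x}(v) = x$, so $f(\overrightarrow{p(v,\ell_x,\ell_y)}) = dist(x,v)$; a Max-Query over the box thus returns $\max_v dist(x,v)$, and adding the constant $dist(u,x)$ yields $\max_v dist(u,v)$ over the vertices placed in that box. For the roundtrip distance I would instead store $f(\overrightarrow{p(v,s,t)}) = dist_G(q^-_s(v),v) + dist_G(v,q^+_t(v))$, so that the selected point carries $dist(x,v) + dist(v,y)$, and I would add the constant $dist(u,x) + dist(y,u)$. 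In both cases, $e^D(u,V_{i,j},X,Y,x,y)$ is the maximum, over the $2^{O(k)}$ boxes of Lemma~\ref{lem:red-box} (taken over both the $\leq$ and $>$ families, whose union is $V_{u,\varphi}$), of the Max-Query answer plus the corresponding constant.

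For the running time, the point-set has dimension $2(i+j)-1 = O(k)$ and size $O(k^2|V|)$, and the function values above are computable in $O(1)$ per point from the stored distances; so building the range tree of Lemma~\ref{lem:range-tree} costs $2^{O(k)}|V|^{1+o(1)}$ pre-processing time, and answering one query amounts to $2^{O(k)}$ Max-Queries at $2^{O(k)}|V|^{o(1)}$ each, i.e.\ $2^{O(k)}|V|^{o(1)}$ total. The one point that requires care — and that fails for the min- and max-distances, explaining why they are postponed — is that the clean, $v$-independent decomposition $D(u,v) = (\text{const}) + (\text{a function of the labels of } v)$ relies on $dist(u,v) = dist(u,x) + dist(x,v)$ being an \emph{equality} forced by the membership constraints; one must also check that the function value assigned to each geometric point is consistent across all boxes reusing it, which holds because a point $\overrightarrow{p(v,s,t)}$ is only selected by boxes with $\ell_x = s$ and $\ell_y = t$, pinning down $x = q^-_s(v)$ and $y = q^+_t(v)$.
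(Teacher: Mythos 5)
Your proposal is correct and follows essentially the same route as the paper: the same point sets from Lemma~\ref{lem:red-box}, the same values $f_D(\overrightarrow{p(v,s,t)}) = dist_G(q^-_s(v),v)$ (resp.\ $dist_G(q^-_s(v),v) + dist_G(v,q^+_t(v))$), the same additive constants $dist_G(u,x)$ (resp.\ $dist_G(u,x)+dist_G(y,u)$) read off from the label of $u$, and a max-query per box via Lemma~\ref{lem:range-tree}. The only organizational point worth making precise is your final consistency remark: the paper guarantees that a box only ever selects points of the form $\overrightarrow{p(v,\ell_x,\ell_y)}$ by maintaining $ij$ separate range trees indexed by $(s,t)$ and querying only the tree indexed by $(\ell_x,\ell_y)$, rather than relying on the box constraints alone to exclude the points $\overrightarrow{p(v,s',t')}$ with $(s',t')\neq(\ell_x,\ell_y)$.
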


\begin{proof}
We create ${O}(k^4)$ range trees, for point sets of various dimensions.
More specifically, let $i,j \in \{1,\ldots,k\}$ be fixed.
We create $ij = {O}(k^2)$ different $(2(i+j)-1)$-dimensional range trees, that are indexed by all possible pairs $(s,t) \in [i] \times [j]$.
For every vertex $v \in V_{i,j}$, we insert the point $\overrightarrow{p(v,s,t)}$ (defined in Lemma~\ref{lem:red-box}) in the range tree with same index $(s,t)$.
The corresponding value $f_D(\overrightarrow{p(v,s,t)})$ depends on the distance $D(\cdot,\cdot)$ considered. This will be discussed at the end of the proof.
By Lemmas~\ref{lem:range-tree} (applied ${O}(k^4)$ times) and~\ref{lem:red-box}, this overall pre-processing phase can be executed in total ${O}(k^4) \cdot 2^{{O}(k)} \cdot |V|^{1+o(1)} = 2^{{O}(k)} \cdot |V|^{1+o(1)}$ time. 

\medskip
\noindent
\underline{Answering a query.} 
In what follows, let $u \in V$ and $\varphi = (V_{i,j},X,Y,x,y)$ be fixed.
Applying Lemma~\ref{lem:red-box}, we compute $2^{O(k)}$ boxes for the points $\overrightarrow{p(v,s,t)}$ to which the vertices $v \in V_{u,\varphi} = V_{u,\varphi}^{\leq} \cup V_{u,\varphi}^>$ were mapped.
For every such box $\mathcal{R}\langle \overrightarrow{\ell_X},  \overrightarrow{\ell_Y} \rangle$, let $v_{x,y} \in V_{i,j}$ be such that $\overrightarrow{p(v_{x,y},\ell_x,\ell_y)} \in \mathcal{R}\langle \overrightarrow{\ell_X},  \overrightarrow{\ell_Y} \rangle$ and $f_D(\overrightarrow{p(v_{x,y},\ell_x,\ell_y)})$ is maximized.
\begin{itemize}
\item If $D(\cdot,\cdot)$ is the source distance, then we set $\forall (v,s,t) \ f_D(\overrightarrow{p(v,s,t)}) = dist_G(q^-_s(v),v)$. In particular, we have: $e^D(u,V_{i,j},X,Y,x,y) = dist_G(u,x) + \max_{\overrightarrow{\ell_X},  \overrightarrow{\ell_Y}}\max\{ f_D(\overrightarrow{p(v,\ell_x,\ell_y)}) \mid \overrightarrow{p(v,\ell_x,\ell_y)} \in \mathcal{R}\langle \overrightarrow{\ell_X},  \overrightarrow{\ell_Y} \rangle \}$.
Note that in this special case, $Y,y$ are irrelevant.
\item If $D(\cdot,\cdot)$ is the roundtrip distance, then we set $\forall (v,s,t) \ f_D(\overrightarrow{p(v,s,t)}) = dist_G(q^-_s(v),v) + dist_G(v,q^+_t(v)$. In particular, we have:
$e^D(u,V_{i,j},X,Y,x,y) = dist_G(u,x) + dist_G(y,u) 
+ \max_{\overrightarrow{\ell_X},  \overrightarrow{\ell_Y}}\max\{ f_D(\overrightarrow{p(v,\ell_x,\ell_y)}) \mid \overrightarrow{p(v,\ell_x,\ell_y)} \in \mathcal{R}\langle \overrightarrow{\ell_X},  \overrightarrow{\ell_Y} \rangle \}.$
\end{itemize}
We are done by applying Lemma~\ref{lem:range-tree} for max-queries.
\end{proof}

\begin{lemma}\label{lem:ecc-min-max}
Let a hub labeling with maximum label size $\leq k$ be given for a graph $G=(V,E,w)$.
If $D(\cdot,\cdot)$ is either the min-distance or the max-distance, then after a pre-processing in $2^{{O}(k)}|V|^{1+o(1)}$ time, for any $u \in V$ and $\varphi = (V_{i,j},X,Y,x,y)$ we can compute $e^D(u,V_{i,j},X,Y,x,y)$ in $2^{{ O}(k)}|V|^{o(1)}$ time.
\end{lemma}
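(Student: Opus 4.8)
The plan is to reduce, on each of the two subsets $V_{u,\varphi}^{\le}$ and $V_{u,\varphi}^{>}$ furnished by Lemma~\ref{lem:red-box}, the min- and max-distances to a \emph{single} one-directional distance, and then to proceed as in Lemma~\ref{lem:ecc-sourc-round}. The difficulty specific to the min- and max-distance is that they are not linear expressions in the stored distances: they depend on the sign of $dist_G(u,v) - dist_G(v,u)$. But that sign is exactly what the split of Lemma~\ref{lem:red-box} records. For every $v \in V_{u,\varphi}^{\le}$ we have $dist_G(u,v) \le dist_G(v,u)$, so that
$$\min\{dist_G(u,v),dist_G(v,u)\} = dist_G(u,v) = dist_G(u,x) + dist_G(x,v),$$
$$\max\{dist_G(u,v),dist_G(v,u)\} = dist_G(v,u) = dist_G(v,y) + dist_G(y,u),$$
where the last equalities hold because the constraints built into Lemma~\ref{lem:red-box} force $x = q^-_{\ell_x}(v)$ to lie on a shortest $uv$-path and $y = q^+_{\ell_y}(v)$ on a shortest $vu$-path. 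On $V_{u,\varphi}^{>}$ the two roles are exchanged. Thus, on each subset, both the min- and the max-distance collapse to a one-directional distance that decomposes, through the fixed hub $x$ (resp.\ $y$), as a query-dependent constant $dist_G(u,x)$ (resp.\ $dist_G(y,u)$) plus a value of $v$ recoverable by a range query.

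I would refine the data structure of Lemma~\ref{lem:ecc-sourc-round} as follows. For each pair $(i,j)$ and index $(s,t) \in [i]\times[j]$, I build the same $(2(i+j)-1)$-dimensional range tree on the points $\overrightarrow{p(v,s,t)}$, $v \in V_{i,j}$, but store at each point \emph{two} values: $f^{\mathrm{in}}(\overrightarrow{p(v,s,t)}) = dist_G(q^-_s(v),v)$, which for $s=\ell_x$ equals $dist_G(x,v)$ on the queried set, and $f^{\mathrm{out}}(\overrightarrow{p(v,s,t)}) = dist_G(v,q^+_t(v))$, which for $t=\ell_y$ equals $dist_G(v,y)$. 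This at most doubles the number of range trees, so by Lemma~\ref{lem:range-tree} the pre-processing still runs in $2^{O(k)}|V|^{1+o(1)}$ time. The offsets $dist_G(u,x)$ and $dist_G(y,u)$ are available in $O(1)$ time from the label of $u$, since $x \in X \subseteq L^+(u)$ and $y \in Y \subseteq L^-(u)$.

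To answer a query $(u,\varphi)$, I would apply Lemma~\ref{lem:red-box} to obtain the box families describing $V_{u,\varphi}^{\le}$ and $V_{u,\varphi}^{>}$, then run max-queries with the value function dictated by the case analysis. Writing $M^{\mathrm{in}}(S)$ (resp.\ $M^{\mathrm{out}}(S)$) for the maximum of $f^{\mathrm{in}}$ (resp.\ $f^{\mathrm{out}}$) over the boxes describing $S$, each box being queried in the range tree of its matching index $(\ell_x,\ell_y)$, I obtain for the min-distance
$$e^D(u,\varphi) = \max\Bigl( dist_G(u,x) + M^{\mathrm{in}}(V_{u,\varphi}^{\le}),\ dist_G(y,u) + M^{\mathrm{out}}(V_{u,\varphi}^{>}) \Bigr),$$
and for the max-distance
$$e^D(u,\varphi) = \max\Bigl( dist_G(y,u) + M^{\mathrm{out}}(V_{u,\varphi}^{\le}),\ dist_G(u,x) + M^{\mathrm{in}}(V_{u,\varphi}^{>}) \Bigr).$$
Since Lemma~\ref{lem:red-box} yields $2^{O(k)}$ boxes and places each relevant $v$ in a unique box, every vertex of the intended subset is accounted for, and as we only take maxima, any overlap is harmless. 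Each box costs one max-query in $2^{O(k)}|V|^{o(1)}$ time by Lemma~\ref{lem:range-tree}, for a total of $2^{O(k)}|V|^{o(1)}$.

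The step I expect to require the most care is precisely the case analysis linking the sign condition relating $dist_G(u,v)$ and $dist_G(v,u)$ to the correct one-directional distance and to the correct value function ($f^{\mathrm{in}}$ versus $f^{\mathrm{out}}$) on each of $V_{u,\varphi}^{\le}$ and $V_{u,\varphi}^{>}$; everything else is a routine reuse of Lemma~\ref{lem:red-box} and the range-tree machinery of Lemma~\ref{lem:range-tree}. One must also double-check that the decompositions $dist_G(u,v)=dist_G(u,x)+dist_G(x,v)$ and $dist_G(v,u)=dist_G(v,y)+dist_G(y,u)$ hold verbatim on the queried subsets, which is guaranteed because $x$ (resp.\ $y$) is forced onto a shortest $uv$-path (resp.\ $vu$-path) by the constraints encoded in the boxes.
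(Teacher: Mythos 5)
Your proposal is correct and follows essentially the same route as the paper: the paper likewise doubles the range trees (indexing them by $(s,t,0)$ and $(s,t,1)$ with values $f_0(\overrightarrow{p(v,s,t)})=dist_G(q^-_s(v),v)$ and $f_1(\overrightarrow{p(v,s,t)})=dist_G(v,q^+_t(v))$, matching your $f^{\mathrm{in}}$ and $f^{\mathrm{out}}$) and uses the $V_{u,\varphi}^{\le}$ versus $V_{u,\varphi}^{>}$ split of Lemma~\ref{lem:red-box} to decide which one-directional distance realizes the min (resp.\ max), exactly as in your case analysis. Your final formulas for the min- and max-distance coincide with the paper's, including the swap of roles between the two box families.
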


\begin{proof}
We only detail the necessary modifications for the proof of Lemma~\ref{lem:ecc-sourc-round}. Let $i,j \in \{1,\ldots,k\}$ be fixed. We create $2ij = {O}(k^2)$ different $(2(i+j)-1)$-dimensional range trees, that are indexed by $(s,t,0)$ and $(s,t,1)$ for all possible pairs $(s,t) \in [i] \times [j]$. -- In particular, we need twice more range trees than for Lemma~\ref{lem:ecc-sourc-round}. -- For every $v \in V_{i,j}$ and $1 \leq s \leq i, \ 1 \leq t \leq j$, we insert two identical copies of $\overrightarrow{p(v,s,t)}$ in the range trees that are indexed by $(s,t,0)$ and $(s,t,1)$, with different values associated:
\begin{itemize}
\item For the index $(s,t,0)$, $f_0(\overrightarrow{p(v,s,t)}) = dist_G(q^-_s(v),v)$;
\item For the index $(s,t,1)$, $f_1(\overrightarrow{p(v,s,t)}) = dist_G(v,q^+_t(v))$.
\end{itemize}
Let $u \in V$ and $\varphi=(V_{i,j},X,Y,x,y)$ be fixed. 
Applying Lemma~\ref{lem:red-box}, we compute a family of $2^{O(k)}$ boxes $\mathcal{R}_0\langle \overrightarrow{\ell_X},  \overrightarrow{\ell_Y} \rangle$ for the points $\overrightarrow{p(v,s,t)}$ to which the vertices $v \in V_{u,\varphi}^{\leq}$ were mapped. In the same way, we compute a family of $2^{O(k)}$ boxes $\mathcal{R}_1\langle \overrightarrow{\ell_X},  \overrightarrow{\ell_Y} \rangle$ for the points $\overrightarrow{p(v,s,t)}$ to which the vertices $v \in V_{u,\varphi}^{>}$ were mapped
In doing so, if $D(\cdot,\cdot)$ is the min-distance then  we have: $e^D(u,V_{i,j},X,Y,x,y) = \max_{\overrightarrow{\ell_X},  \overrightarrow{\ell_Y}} \max\{ dist_G(u,x) 
+ f_0(\overrightarrow{p(v,\ell_x,\ell_y)}) \mid \overrightarrow{p(v,\ell_x,\ell_y)} \in \mathcal{R}_0\langle \overrightarrow{\ell_X},  \overrightarrow{\ell_Y} \rangle \} 
 \cup \{ dist_G(y,u) + f_1(\overrightarrow{p(v,\ell_x,\ell_y)}) \mid \overrightarrow{p(v,\ell_x,\ell_y)} \in \mathcal{R}_1\langle \overrightarrow{\ell_X},  \overrightarrow{\ell_Y} \rangle \}$.
It is straightforward to adapt the above to the max-distance, {\it i.e.}, by reversing the respective roles of $\mathcal{R}_0$ and $\mathcal{R}_1$.
\end{proof}

Lemmas~\ref{lem:ecc-sourc-round} and~\ref{lem:ecc-min-max} complete the proof of Theorem~\ref{thm:hub-label} for eccentricity queries. 
For adapting our approach to distance-sum queries, the key observation is that, for any fixed $u \in V$, the sets $V_{u,\varphi}^{\leq}$ and $V_{u,\varphi}^{>}$, over all possible tuples $\varphi=(V_{i,j},X,Y,x,y)$, form a partition of $V$. In particular, for any distance $D(\cdot,\cdot)$ we have: $s^D(u) = \sum_\varphi \left[ \sum_{v \in V_{u,\varphi}^{\leq}} D(u,v) +  \sum_{v \in V_{u,\varphi}^{>}} D(u,v) \right]$.
Then, for a fixed $\varphi$, applying Lemma~\ref{lem:red-box} we compute a family of $2^{O(k)}$ boxes $\mathcal{R}\langle \overrightarrow{\ell_X},\overrightarrow{\ell_Y}\rangle$ for the points $\overrightarrow{p(v,s,t)}$ to which the vertices $v \in V_{u,\varphi}^{\leq}$ were mapped.
We have that $\{ v \in V \mid \overrightarrow{p(v,\ell_x,\ell_y)} \in \mathcal{R}\langle \overrightarrow{\ell_X},\overrightarrow{\ell_Y}\rangle \}$, for all above boxes $\mathcal{R}\langle \overrightarrow{\ell_X},\overrightarrow{\ell_Y}\rangle$, is a partition of $V_{u,\varphi}^{\leq}$. Furthermore, we defined in Lemmas~\ref{lem:ecc-sourc-round} and~\ref{lem:ecc-min-max} some value $D_u$ and function $f_D(\cdot)$ so that $\forall v \in  V_{u,\varphi}^{\leq}, \ \overrightarrow{p(v,\ell_x,\ell_y)} \in \mathcal{R}\langle \overrightarrow{\ell_X},\overrightarrow{\ell_Y}\rangle \Longrightarrow D(u,v) = D_u + f_D(\overrightarrow{p(v,\ell_x,\ell_y)})$. As a result: $\sum_{v \in V_{u,\varphi}^{\leq}} D(u,v) = D_u \cdot |V_{u,\varphi}^{\leq}| + \sum_{(\overrightarrow{\ell_X},\overrightarrow{\ell_Y})}  \sum \{ f_D(\overrightarrow{p(v,\ell_x,\ell_y)}) \mid \overrightarrow{p(v,\ell_x,\ell_y)} \in \mathcal{R}\langle \overrightarrow{\ell_X},\overrightarrow{\ell_Y}\rangle \}$.
The latter computation reduces to a count-query and a sum-query for each box considered.
We proceed similarly for the vertices $v \in V_{u,\varphi}^{>}$.
\qed

\section{Applications}\label{sec:applications}
\subsection{Bounded-treewidth graphs}\label{sec:tw}
A {\em tree-decomposition} for an undirected graph $G$ is a pair $(T,\mathcal{X})$, where $T$ is a tree and $\mathcal{X} = (X_t)_{t \in V(T)}$ is a collection of subsets of $V(G)$ satisfying the following two properties:
\begin{itemize}
\item for every vertex $v$ of $G$, $\{ t \in V(T) \mid v \in X_t\}$ induces a nonempty subtree of $T$;
\item for every edge $uv$ of $G$, there exists a $t \in V(T)$ s.t. $u,v \in X_t$.
\end{itemize}
The {\em width} of a tree-decomposition is equal to $\max_{t \in V(T)} |X_t| - 1$. The {\em treewidth} of an undirected graph $G$ is the minimum width over its tree-decompositions. Finally, the treewidth of a directed graph $G$ is the treewidth of its underlying graph (obtained by removing the arcs orientation)\footnote{There also exist directed variants of treewidth~\cite{JRST01}. We do {\em not} address these variants in the paper.}. 
We refer to~\cite{Bod97} for a compendium of many algorithmic applications of bounded-treewidth graphs, and to~\cite{FLSP+18} for more recent such applications in the field of Fine-Grained complexity in P. Some real-world graphs have bounded-treewidth, {\it e.g.}, the control-flow graphs of well-structured programs in C and various other programming languages~\cite{Tho98}. 
Eccentricity and distance-sum computations for bounded-treewidth graphs have been considered in~\cite{AGW14,BHM20,CaK09}.
In particular, with the notable exception of roundtrip distance, for all other distance functions considered in this article, on any directed graph of order $n$ and treewidth at most $k$ we can compute the eccentricity and the distance-sum of all the vertices in total $2^{{O}(k)}n^{1+o(1)}$ time.
However, if one is interested in the roundtrip distance, then the algorithms proposed in~\cite{AGW14} -- which, to the best of our knowledge, were the best ones known for these two problems before our work -- run in $2^{{ O}(k^2)}n^{1+o(1)}$ time. We replace this quadratic dependency in the treewidth by a linear one, that is optimal under SETH.

\begin{proof}[Proof of Corollary~\ref{cor:treewidth}]
We start by reminding the reader of the overall strategy in~\cite{AGW14}. Let $G$ be a directed graph of order $n$ and treewidth at most $k$.
\begin{enumerate}
\item We compute a tree-decomposition of width ${O}(k)$. It can be done in $2^{{O}(k)}n$ time~\cite{BDDF+16}.
\item We find a node $t \in V(T)$ such that $C = X_t$ is a balanced separator. {\it I.e.}, we can partition $V$ into $A,B,C$ s.t. $\max\{|A|,|B|\} \leq 2n/3$ and all the paths between $A$ and $B$ intersect $C$. It can be computed in linear time if a tree-decomposition is given.
\item Then, for every $a \in A$, we compute $e_A(a) = \max\{dist_G(a,b) + dist_G(b,a) \mid b \in B\}$ and $s_A(a) = \sum \{dist_G(a,b) + dist_G(b,a) \mid b \in B\}$. We proceed similarly for every vertex $b \in B$ ({\it i.e.}, by reversing the roles of $A$ and $B$).
\item Add all possible arcs between the vertices in $C$, so that for every $c,c' \in C$ the arc $cc'$ has weight equal to $dist_G(c,c')$. In doing so, we get a supergraph $G'$. We recurse on $G'[A \cup C]$ and $G'[B \cup C]$ separately.
\end{enumerate}
If Step $3$ can be executed in $T(n,k)$ time, then the total running time of the algorithm is in ${O}(\max\{2^{{O}(k)}n, T(n,k)\}\log{n})$ time~\cite{AGW14}. We prove in what follows that we can solve Step $3$ in $2^{{ O}(k)}n^{1+o(1)}$ time. For that, for every vertex $v$, let us define $L^-(v) = (dist_G(c,v))_{c \in C}$ and $L^+(v) = (dist_G(v,c))_{c \in C}$. These labels can be computed in total $kn^{1+o(1)}$ time. We may not get a hub labeling, however we have the following weaker property: $\forall a \in A, \ b \in B, \ dist_G(a,b) = \min\{dist_G(a,c) + dist_G(c,b) \mid c \in C\}$, and in the same way $dist_G(b,a) = \min\{dist_G(b,c) + dist_G(c,a) \mid c \in C\}$. We are left solving a bi-chromatic version of the eccentricity and distance-sum queries studied in Sec.~\ref{sec:hub}. Up to doubling the number of range trees we use -- in order to store the points corresponding to vertices in $A$ and $B$ separately --, and modifying the range queries of our framework accordingly, our techniques for Theorem~\ref{thm:hub-label} can still be applied to the bi-chromatic versions of these queries.
\end{proof}

\subsection{Topological indices}\label{sec:ti}
The Wiener index has attracted attention in recent ``fine-grained'' analysis of polynomial-time solvable problems. Indeed, the best-known algorithms for diameter computation on bounded-treewidth graphs and planar graphs can also be applied to the computation of the Wiener index on these graph classes~\cite{BHM20,Cab18}. In this section, we extend this line of work to a broader family of topological indices -- with practical applications to chemistry. Our list excludes important such indices, {\it e.g.}, the so-called Randi\`c index~\cite{Ran75}. This is because the latter is linear-time computable on any graph. By contrast, the naive algorithms for computing all the indices below run in cubic time. We chose to study these specific indices in part because they illustrate some interesting applications of the orthogonal range query framework to distance problems on graphs.

In what follows, {\it we only consider the source distance, for undirected unweighted graphs}. This is because most chemical graphs are undirected. 
Given a graph $G=(V,E)$ and a hub labeling of maximum label size $\leq k$, we may assume $L^-(v) = L^+(v) = L(v)$ for all vertices $v$.
For every $1 \leq i \leq k$, let $V_i = \{ v \in V \mid |L(v)| = i \}$.
The following result simplifies Lemma~\ref{lem:red-box} for undirected graphs:

\begin{lemma}\label{lem:red-box-simpl}
Let a hub labeling with maximum label size $\leq k$ be given for a graph $G=(V,E)$.
In $O(2^{O(k)}|V|)$ time, we can map every vertex $v \in V$ to an $(2|L(v)|-1)$-dimensional point-set $\{ \overrightarrow{p(v,s)} \mid 1 \leq s \leq |L(v)| \}$.
Moreover, for every $u \in V$ and $\varphi=(V_i,X,x)$, let $V_{u,\varphi}$ be the set of all vertices $v \in V_{i}$ s.t.: $L(u) \cap L(v) = X$, and $x$ is the least vertex of $X$ on a shortest $uv$-path.
In $2^{O(k)}$ time, we can compute a family of $2^{O(k)}$ boxes $\mathcal{R}\langle \overrightarrow{\ell_X} \rangle$ where $\overrightarrow{\ell_X} = (\ell_{x^*})_{x^* \in X} \subseteq [i+1]^X$ and:
$$ v \in V_{u,\varphi}  \ \Longleftrightarrow \exists \ \overrightarrow{\ell_X} \ \text{s.t.} \ \overrightarrow{p(v,\ell_x)} \in  \mathcal{R}\langle \overrightarrow{\ell_X} \rangle.$$
If $v \in V_{u,\varphi}$, then there exists a {\em unique} $\overrightarrow{\ell_X} \ \text{s.t.} \ \overrightarrow{p(v,\ell_x)} \in  \mathcal{R}\langle \overrightarrow{\ell_X}\rangle.$
\end{lemma}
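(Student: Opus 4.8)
The plan is to specialize the construction of Lemma~\ref{lem:red-box} to the undirected symmetric setting, where the two label blocks $L^-(v),L^+(v)$ collapse into a single block $L(v)$ and the distinction between the $uv$- and $vu$-directions disappears. First I would define, for every $v \in V_i$ and $1 \leq s \leq i$, the point $\overrightarrow{p(v,s)}$ of dimension $2i-1$ as follows: the first $i$ coordinates hold the vertices of $L(v)$ listed in the fixed total order, forming the common prefix $\overrightarrow{q(v)} = (q_\ell(v))_{1 \leq \ell \leq i}$; the remaining $i-1$ coordinates hold the values $dist_G(q_r(v),v) - dist_G(q_s(v),v)$ for $1 \leq r \leq i,\ r \neq s$. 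Since each vertex contributes at most $k$ points of dimension at most $2k-1$ and every coordinate is read off directly from the stored label, this map is computed in $O(2^{O(k)}|V|)$ time. Note that the $L^+$-block and the final cross-direction coordinate of Lemma~\ref{lem:red-box} are dropped: by symmetry $dist_G(v,q_t(v)) = dist_G(q_t(v),v)$, so a single block of differences already records all the relevant distance information.

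Next, fixing $u \in V$ and $\varphi = (V_i,X,x)$, I would carry over the two families of range constraints from Lemma~\ref{lem:red-box} that isolate the vertices $v \in V_i$ with $L(u) \cap L(v) = X$: (i) for each strictly increasing choice of indices $1 \leq \ell_1 < \cdots < \ell_{|X|} \leq i$ we force $q_{\ell_r}(v) = x_r$, pinning the vertices of $X$ to fixed positions of $\overrightarrow{q(v)}$; and (ii) for each $z \in L(u) \setminus X$ we encode, via the least index at which $\overrightarrow{q(v)}$ overtakes $z$, the requirement $z \notin L(v)$. Exactly as in Lemma~\ref{lem:red-box}, there are only $2^{O(k)}$ such combinations, each described by a position sequence $\overrightarrow{\ell_X} = (\ell_{x^*})_{x^* \in X} \subseteq [i+1]^X$ together with the auxiliary exclusion indices.

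I would then add the constraints forcing $x$ to be the least vertex of $X$ on a shortest $uv$-path. For every $x' \in X$ this is the pair of implications $x' < x \Rightarrow dist_G(u,x')+dist_G(x',v) > dist_G(u,x)+dist_G(x,v)$ and $x' > x \Rightarrow dist_G(u,x')+dist_G(x',v) \geq dist_G(u,x)+dist_G(x,v)$, which, after substituting $dist_G(x',v) = dist_G(q_{\ell_{x'}}(v),v)$, rewrite as lower bounds on $dist_G(q_{\ell_{x'}}(v),v) - dist_G(q_{\ell_x}(v),v)$ and hence as range constraints on the difference coordinates $i+1,\ldots,2i-1$ of $\overrightarrow{p(v,\ell_x)}$. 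Crucially, because the graph is undirected the analogous condition on a shortest $vu$-path is identical to the one on the $uv$-path; consequently the $Y,y$ bookkeeping and the whole $V_{u,\varphi}^{\leq}$ / $V_{u,\varphi}^{>}$ split of Lemma~\ref{lem:red-box} collapse to the single set $V_{u,\varphi}$, and the final direction-comparing coordinate is no longer needed. Each resulting box $\mathcal{R}\langle \overrightarrow{\ell_X}\rangle$ is obtained in $2^{O(k)}$ time.

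Finally I would verify the membership equivalence and uniqueness. If $v \in V_{u,\varphi}$ then $X \subseteq L(v)$ occupies a unique set of positions in the sorted sequence $\overrightarrow{q(v)}$, the exclusion indices for $L(u)\setminus X$ are likewise determined by $v$'s label, and $\ell_x$ is simply the position of $x$; hence $\overrightarrow{p(v,\ell_x)}$ lies in exactly one box, which gives both directions of the equivalence as well as uniqueness. The main obstacle is not computational but a matter of care: one must check that the undirected symmetry genuinely renders the single least-vertex constraint sufficient -- that is, that keeping only one direction loses none of the information encoded by the two-direction version of Lemma~\ref{lem:red-box} -- so that the simplified point of dimension $2i-1$ still separates the classes $V_{u,\varphi}$ exactly.
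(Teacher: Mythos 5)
Your proposal is correct and follows exactly the route the paper intends: the paper gives no separate proof of Lemma~\ref{lem:red-box-simpl}, presenting it only as the specialization of Lemma~\ref{lem:red-box} to the undirected case where $L^-(v)=L^+(v)=L(v)$, which is precisely what you carry out (dropping the $L^+$-block, the $Y,y$ bookkeeping, and the final direction-comparing coordinate, all of which become vacuous by symmetry of the distance). Your closing check --- that the single least-vertex constraint suffices because a vertex lies on a shortest $uv$-path iff it lies on a shortest $vu$-path in an undirected graph --- is the right point to verify and is exactly why the dimension drops to $2|L(v)|-1$.
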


Note in particular that for every $1 \leq i \leq k$ and $v \in V_i$, all points $\overrightarrow{p(v,s)}$ (as defined above) start with a common $i$-dimensional prefix $\overrightarrow{q(v)}$, {\it a.k.a.}, the ordered list of all the vertices in $L(v)$.

\paragraph{Wiener index and its relatives.}
Let us recall that the Wiener index of a graph $G$ is equal to $\sum_{u \neq v} dist_G(u,v)$.
The following result is a direct consequence of Theorem~\ref{thm:hub-label} (and in particular, of the distance-sum variant of Lemma~\ref{lem:ecc-sourc-round}):

\begin{corollary}\label{cor:wiener}
For every graph $G = (V,E)$, if we are given a hub labeling with maximum label size $\leq k$, then we can compute its Wiener index in $2^{{O}(k)} \cdot |V|^{1+o(1)}$ time.
\end{corollary}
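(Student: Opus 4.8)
The plan is to treat this as an immediate application of the distance-sum machinery already developed for Theorem~\ref{thm:hub-label}. Recall that for the source distance the Wiener index is exactly $\sum_{u} s^D(u)$, so it suffices to evaluate the distance-sum at every vertex and add up the results. First I would invoke the distance-sum incarnation of Theorem~\ref{thm:hub-label} for the source distance on the given undirected unweighted graph $G=(V,E)$: one runs the pre-processing phase once, building the relevant range trees. Here I would use the simplified Lemma~\ref{lem:red-box-simpl} in place of Lemma~\ref{lem:red-box}, since $L^+(v)=L^-(v)=L(v)$, which reduces the tuples $\varphi$ to $(V_i,X,x)$ and roughly halves the number of trees we maintain. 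By Lemma~\ref{lem:range-tree} this pre-processing costs $2^{O(k)}\,|V|^{1+o(1)}$ time and yields a data structure answering any single distance-sum query in $2^{O(k)}\,|V|^{o(1)}$ time.

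Next I would simply issue one distance-sum query per vertex. For each fixed $u$, the distance-sum variant described at the end of Sec.~\ref{sec:hub} partitions $V$ into the classes $V_{u,\varphi}$ over all tuples $\varphi$; summing the contributions of the associated boxes via one count-query and one sum-query apiece recovers $s^D(u)$ in $2^{O(k)}\,|V|^{o(1)}$ time. Iterating over all $|V|$ vertices and accumulating the returned values produces $\sum_u s^D(u)$, which is the Wiener index.

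The only point needing a word is the running-time bookkeeping. Pre-processing is $2^{O(k)}\,|V|^{1+o(1)}$, and the $|V|$ queries together cost $|V|\cdot 2^{O(k)}\,|V|^{o(1)} = 2^{O(k)}\,|V|^{1+o(1)}$, so the total stays within $2^{O(k)}\,|V|^{1+o(1)}$, as claimed. There is essentially no genuine obstacle here: the statement is a direct corollary, and the only thing to verify is that answering every vertex individually does not inflate the complexity -- which it does not, precisely because each query is sublinear in $|V|$. Should one wish to avoid the per-vertex loop entirely, one could instead aggregate the box-contributions across all source vertices $u$ sharing the same combinatorial type of tuple $\varphi$; but since the naive summation already meets the target bound, this refinement is unnecessary for the corollary.
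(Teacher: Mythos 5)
Your proposal is correct and matches the paper's (implicit) argument exactly: the paper states Corollary~\ref{cor:wiener} as a direct consequence of the distance-sum variant of Theorem~\ref{thm:hub-label} (via Lemma~\ref{lem:ecc-sourc-round}), i.e., one pre-processing pass followed by $|V|$ distance-sum queries, each costing $2^{O(k)}|V|^{o(1)}$ time. Your running-time bookkeeping and the use of the simplified undirected point-sets are precisely what the paper intends.
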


Several other indices can be computed using similar methods as for the Wiener index.
For instance, the {\em Hyper-Wiener} index of $G$ is equal to $\frac 1 2 \sum_{u \neq v} [ dist_G(u,v) + dist_G(u,v)^2]$~\cite{Ran93}.
By Corollary~\ref{cor:wiener}, we can reduce the computation of the latter to $\sum_{u \neq v} dist_G(u,v)^2$.

\begin{lemma}\label{lem:pow-sum}
For every graph $G = (V,E)$ and integer $\alpha \geq 1$, if we are given a hub labeling with maximum label size $\leq k$, then we can compute $\sum_{u \neq v} dist_G(u,v)^\alpha$ in ${O}(\alpha 2^{{O}(k)} \cdot |V|^{1+o(1)})$ time.
\end{lemma}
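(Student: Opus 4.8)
The plan is to generalize the distance-sum computation behind Corollary~\ref{cor:wiener} (which is exactly the case $\alpha = 1$), replacing each single sum-query with a short binomial combination of sum-queries over the powers of the stored distances. Exactly as in the distance-sum part of the proof of Theorem~\ref{thm:hub-label}, and using the simplified undirected machinery of Lemma~\ref{lem:red-box-simpl}, I would write
\[
\sum_{u \neq v} dist_G(u,v)^\alpha = \sum_{u \in V}\ \sum_{\varphi}\ \sum_{v \in V_{u,\varphi}} dist_G(u,v)^\alpha ,
\]
where $\varphi = (V_i,X,x)$ ranges over the $2^{O(k)}$ admissible tuples, the sets $V_{u,\varphi}$ partition $V$ for each fixed $u$, and the diagonal term $v=u$ may be kept or discarded since $dist_G(u,u)^\alpha = 0$ for $\alpha \geq 1$.

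The key observation is that inside any single box $\mathcal{R}\langle \overrightarrow{\ell_X}\rangle$ returned by Lemma~\ref{lem:red-box-simpl}, the distance splits as $dist_G(u,v) = dist_G(u,x) + dist_G(x,v)$, where $dist_G(u,x)$ depends only on $(u,\varphi)$ and is therefore \emph{constant} over the box, while $dist_G(x,v) = dist_G(q_{\ell_x}(v),v) = f(\overrightarrow{p(v,\ell_x)})$ is precisely the value stored at the point to which $v$ is mapped in the proof of Lemma~\ref{lem:ecc-sourc-round}. By the binomial theorem,
\[
dist_G(u,v)^\alpha = \sum_{\beta=0}^{\alpha} \binom{\alpha}{\beta}\, dist_G(u,x)^{\alpha-\beta}\, f(\overrightarrow{p(v,\ell_x)})^{\beta},
\]
so summing over all $v$ in the box yields $\sum_{\beta=0}^{\alpha} \binom{\alpha}{\beta}\, dist_G(u,x)^{\alpha-\beta}\, S_\beta$, where $S_\beta = \sum_v f(\overrightarrow{p(v,\ell_x)})^{\beta}$ is a sum-query whose value function is the $\beta$-th power of the stored distance (and $S_0$ is just a count-query).

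Concretely, I would build, for each $\beta \in \{0,1,\dots,\alpha\}$, its own family of range trees, structurally identical to those of Lemma~\ref{lem:ecc-sourc-round}, but storing at each point $\overrightarrow{p(v,s)}$ the value $dist_G(q_s(v),v)^{\beta}$; precomputing all these powers costs $O(\alpha k|V|)$ arithmetic operations, and building the $\alpha+1$ tree families costs $O(\alpha\, 2^{O(k)} |V|^{1+o(1)})$. To answer the global query I would iterate over all $u$, all $2^{O(k)}$ tuples $\varphi$, and all $2^{O(k)}$ boxes, issuing $O(\alpha)$ sum-/count-queries per box and recombining them with the precomputed binomial coefficients; since each query costs $2^{O(k)} |V|^{o(1)}$, the total is again $O(\alpha\, 2^{O(k)} |V|^{1+o(1)})$.

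The approach has no real algorithmic depth beyond the binomial trick, so the two points to check carefully are, first, that the constant-within-a-box decomposition of $dist_G(u,v)$ carries over verbatim from the distance-sum argument once the distinguished index $\ell_x$ is fixed, and that $dist_G(u,x)$ is readable from the label of $u$ in $O(k)$ time; and second, the bit-complexity accounting. The entries $dist_G(q_s(v),v)^\beta$ and the accumulated sums reach $\Theta(\alpha \log |V|)$ bits, so landing on a \emph{single} factor of $\alpha$ in the running time requires the standard word-RAM assumption that arithmetic on such words is unit-cost (otherwise the extra bit-length is absorbed into the $|V|^{o(1)}$ factor). This is where I expect the only genuine subtlety to lie; everything else is routine bookkeeping inherited from Lemmas~\ref{lem:red-box-simpl} and~\ref{lem:ecc-sourc-round}.
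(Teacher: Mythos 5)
Your proposal is correct and follows essentially the same route as the paper: the identical decomposition into tuples $\varphi=(V_i,X,x)$ and boxes from Lemma~\ref{lem:red-box-simpl}, the same binomial expansion of $(dist(u,x)+dist(x,v))^\alpha$, and the same device of $\alpha+1$ parallel families of range trees holding the values $dist_G(q_s(v),v)^\beta$ and answering sum-queries. Your added remark on the word-RAM/bit-length accounting is a reasonable caveat the paper leaves implicit, but it does not change the argument.
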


\begin{proof}
We store the points $\overrightarrow{p(v,s)}$, for every $1 \leq i \leq k$, $v \in V_i$ and $1 \leq s \leq i$ (as defined in Lemma~\ref{lem:red-box-simpl}), in a family of ${O}(k^2)$ range trees for point-sets of various dimensions in ${O}(k)$.
Each such tree is indexed by a pair $(i,s)$.
Then, let $u \in V$ be fixed.
We have $\sum_v dist(u,v)^{\alpha} = \sum_{\varphi=(V_i,X,x)}\sum_{v \in V_{u,\varphi}} dist(u,v)^{\alpha}$.
Therefore, in what follows, let $\varphi=(V_i,X,x)$ be fixed.
Applying Lemma~\ref{lem:red-box-simpl}, we compute a family of $2^{O(k)}$ rectangles $\mathcal{R}\langle \overrightarrow{\ell_X}\rangle$ s.t. $\{ v \in V \mid \overrightarrow{p(v,\ell_x)} \in \mathcal{R}\langle \overrightarrow{\ell_X}\rangle \}$ partitions $V_{u,\varphi}$.
Furthermore, for every $v \in V_{u,\varphi}$, we have:
$$dist_G(u,v)^\alpha = \sum_{t=0}^\alpha \binom \alpha t dist_G(u,x)^t dist_G(x,v)^{\alpha -t}.$$
We define $\alpha+1$ functions $f_t$ over the point-sets, $0 \leq t \leq \alpha$, so that $f_t(\overrightarrow{p(v,s)}) = dist_G(q_s(v),v)^t$. Doing so, we obtain: $\sum_{v \in V_{u,\varphi}} dist(u,v)^{\alpha} = \sum_{\overrightarrow{\ell_X}}\sum_{t=0}^\alpha \Big[\binom \alpha t dist_G(u,x)^t\Big] \cdot \sum \big\{ f_{\alpha-t}(\overrightarrow{p(v,\ell_x)}) \mid \overrightarrow{p(v,\ell_x)} \in \mathcal{R}\langle\overrightarrow{\ell_X}\rangle \big\}.$
As a result, $\sum_{v \in V_{u,\varphi}} dist(u,v)^{\alpha}$ can be derived from $2^{{O}(k)}$ sum-queries, applied $\alpha+1$ times to different families of range trees. We stress that all these families of range trees contain the same point-sets, but with different values, chosen according to $f_t$.  
\end{proof}

\begin{corollary}\label{cor:hyp-w}
For every graph $G = (V,E)$, if we are given a hub labeling with maximum label size $\leq k$, then we can compute its Hyper-Wiener index in $2^{{O}(k)} \cdot |V|^{1+o(1)}$ time.
\end{corollary}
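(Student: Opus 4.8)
The plan is to reduce the Hyper-Wiener index computation to quantities we already know how to compute efficiently, and then invoke the previous results directly. Recall that the Hyper-Wiener index of $G$ is $\frac{1}{2}\sum_{u \neq v}[dist_G(u,v) + dist_G(u,v)^2]$. The first summand, $\sum_{u \neq v} dist_G(u,v)$, is exactly the Wiener index, which by Corollary~\ref{cor:wiener} can be computed in $2^{O(k)} \cdot |V|^{1+o(1)}$ time given the hub labeling. The second summand, $\sum_{u \neq v} dist_G(u,v)^2$, is precisely the case $\alpha = 2$ of Lemma~\ref{lem:pow-sum}, which yields a running time of $O(2 \cdot 2^{O(k)} \cdot |V|^{1+o(1)}) = 2^{O(k)} \cdot |V|^{1+o(1)}$ time, since the constant factor $\alpha+1 = 3$ is absorbed into the $2^{O(k)}$ term.

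The key steps, in order, are as follows. First I would state that by linearity of the sum defining the Hyper-Wiener index, it suffices to compute $\sum_{u \neq v} dist_G(u,v)$ and $\sum_{u \neq v} dist_G(u,v)^2$ separately and then combine them with the appropriate coefficients $\tfrac{1}{2}$ each. Second, I would apply Corollary~\ref{cor:wiener} to obtain the Wiener index within the claimed time bound. Third, I would apply Lemma~\ref{lem:pow-sum} with $\alpha = 2$ to obtain the sum of squared distances within the claimed time bound. Finally, I would add the two quantities and halve the result, which takes $O(1)$ additional time and therefore does not affect the overall running time.

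Since both invoked results provide running times of the form $2^{O(k)} \cdot |V|^{1+o(1)}$, and we make only a constant number of calls followed by $O(1)$ arithmetic, the total running time remains $2^{O(k)} \cdot |V|^{1+o(1)}$, establishing the corollary. There is essentially no genuine obstacle here: the entire content of the statement has already been carried by Lemma~\ref{lem:pow-sum} and Corollary~\ref{cor:wiener}, and the only thing to verify is that the algebraic decomposition of the Hyper-Wiener index into a Wiener-index term and a squared-distance term is correct and that the time bounds compose additively without blowing up. The mildest point worth a sentence of care is confirming that the factor $\alpha = 2$ in Lemma~\ref{lem:pow-sum} contributes only a constant, which is immediate.
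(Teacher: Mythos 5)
Your proposal is correct and follows exactly the paper's route: the Hyper-Wiener index is split into the Wiener-index term (Corollary~\ref{cor:wiener}) and the squared-distance term handled by Lemma~\ref{lem:pow-sum} with $\alpha = 2$, and the time bounds compose as you state. Nothing further is needed.
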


The Schultz molecular topological index of $G$ (MTI) is equal to $\sum_{u \neq v} \left( d(u) + d(v) \right) \cdot dist_G(u,v)$, where $d(u)$ and $d(v)$ are the degrees of vertices $u$ and $v$, respectively~\cite{Sch89}.

\begin{lemma}\label{lem:mti}
For every graph $G = (V,E)$, if we are given a hub labeling with maximum label size $\leq k$, then we can compute its MTI in ${O}(|E| + 2^{{O}(k)} \cdot |V|^{1+o(1)})$ time.
\end{lemma}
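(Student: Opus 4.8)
The plan is to reduce the computation of the MTI to a \emph{degree-weighted} variant of the distance-sum query already handled by Theorem~\ref{thm:hub-label}. First I would precompute the degree $d(v)$ of every vertex $v$, which takes $O(|V|+|E|)$ time and accounts for the additive $O(|E|)$ term in the statement; these degrees are the only data needed beyond the given hub labeling. Then I split the index along its two parts:
$$\sum_{u \neq v}(d(u)+d(v))\,dist_G(u,v) = \sum_u d(u)\sum_{v} dist_G(u,v) + \sum_u \sum_{v} d(v)\,dist_G(u,v).$$
The first summand equals $\sum_u d(u)\cdot s^D(u)$, where $s^D(u)=\sum_v dist_G(u,v)$ is the ordinary distance-sum (here $D$ is the source distance, and $dist_G(u,u)=0$); I would obtain each $s^D(u)$ from the distance-sum machinery of Theorem~\ref{thm:hub-label} (equivalently Corollary~\ref{cor:wiener}), multiply by $d(u)$, and accumulate over the $|V|$ sources.

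For the second summand, for a fixed source $u$ I want $\sum_v d(v)\,dist_G(u,v)$, that is, a distance-sum in which the contribution of each target $v$ is reweighted by its degree $d(v)$. Following the distance-sum part of the proof of Theorem~\ref{thm:hub-label} together with Lemma~\ref{lem:red-box-simpl}, I fix a tuple $\varphi=(V_i,X,x)$ and a box $\mathcal{R}\langle\overrightarrow{\ell_X}\rangle$; for every $v$ that lands in that box we have $dist_G(u,v)=dist_G(u,x)+f(\overrightarrow{p(v,\ell_x)})$, where $f(\overrightarrow{p(v,s)})=dist_G(q_s(v),v)$ and $q_{\ell_x}(v)=x$. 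Hence $\sum_{v}d(v)\,dist_G(u,v)$ restricted to the box equals $dist_G(u,x)\cdot\big(\text{sum of the }d(v)\big)+\big(\text{sum of the }d(v)\,f(\overrightarrow{p(v,\ell_x)})\big)$. So it suffices to store, in the same families of range trees as in Lemma~\ref{lem:red-box-simpl} (same point-sets, different values, exactly as in Lemma~\ref{lem:pow-sum}), the two value functions $g_0(\overrightarrow{p(v,s)})=d(v)$ and $g_1(\overrightarrow{p(v,s)})=d(v)\cdot dist_G(q_s(v),v)$, and to evaluate the quantity by $2^{O(k)}$ sum-queries, precisely as for the unweighted distance-sum. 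Summing over all $\varphi$ (whose sets $V_{u,\varphi}$ partition $V$ for fixed $u$) and over all sources $u$ yields the second summand in $2^{O(k)}\cdot|V|^{1+o(1)}$ time, so the whole computation runs in $O(|E|+2^{O(k)}\cdot|V|^{1+o(1)})$ time.

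I expect no serious obstacle. The one genuinely new ingredient relative to Lemma~\ref{lem:pow-sum} — which reweights by powers of the \emph{distance} rather than by a \emph{vertex attribute} — is that the per-target factor $d(v)$ must be folded into the range-tree \emph{values}, both into a weighted count (via a sum-query over the $g_0=d(v)$) and into the distance-dependent value $g_1$, rather than into the boxes; once this is arranged the accounting is identical to the distance-sum case. As an aside, for undirected graphs the two summands above are in fact equal, so the MTI equals $2\sum_u d(u)\,s^D(u)$ and could be read off directly from $|V|$ ordinary distance-sum queries; I would nonetheless prefer the degree-weighted presentation above, since it is the version that extends to other degree-weighted topological indices.
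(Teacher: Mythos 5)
Your proof is correct and essentially identical to the paper's: the paper likewise expands $dist(u,v)=dist(u,x)+dist(x,v)$ over the boxes of Lemma~\ref{lem:red-box-simpl} and stores the value functions $f_0=1$, $f_1(\overrightarrow{p(v,s)})=dist_G(q_s(v),v)$, $f_2(\overrightarrow{p(v,s)})=d(v)$ and $f_3(\overrightarrow{p(v,s)})=d(v)\,dist_G(q_s(v),v)$ in the same range trees (your $g_0,g_1$ are its $f_2,f_3$), answering with count- and sum-queries after an $O(|V|+|E|)$ degree computation; your regrouping into two summands is only cosmetic. Your aside that, for undirected graphs, the MTI equals $2\sum_u d(u)\,s^D(u)$ and hence follows from ordinary distance-sum queries alone is a valid simplification that the paper does not exploit.
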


\begin{proof}
Let $u \in V$ and $\varphi = (V_i,X,x)$ be fixed.
We have:
\begin{align*}
\sum_{v \in V_{u,\varphi}} (d(u)+d(v)) &\cdot dist(u,v) \\
&= \sum_{v \in V_{u,\varphi}} (d(u)+d(v)) \cdot [dist(u,x) + dist(x,v)] \\
&= d(u)dist(u,x) \cdot |V_{u,\varphi}| + d(u) \cdot \sum_{v \in V_{u,\varphi}} dist(x,v) \\
&+ d(u,x) \cdot \sum_{v \in V_{u,\varphi}} d(v) + \sum_{v \in V_{u,\varphi}} d(v)dist(x,v).
\end{align*}
Let us define four functions over the point-sets so that: $f_0(\overrightarrow{p(v,s)}) = 1$ (counting), $f_1(\overrightarrow{p(v,s)}) = dist_G(q_s(v),v)$, $f_2(\overrightarrow{p(v,s)}) = d(v)$, $f_3(\overrightarrow{p(v,s)}) = d(v)dist_G(q_s(v),v)$. These functions require the knowledge of the degree sequence of the graph, that can be computed in ${O}(|V|+|E|)$ time. In order to compute the MTI of $G$, we can apply the same multi-valued range query framework as for Lemma~\ref{lem:pow-sum} w.r.t. the functions $f_0,f_1,f_2,f_3$.
\end{proof}

\paragraph{Harary index and beyond.} Another important topological index is the Harary index, defined for any graph $G$ as $\sum_{u \neq v} \frac 1 {dist(u,v)}$~\cite{PNTM93}. The Reciprocal complementary Wiener index of a graph $G$ is equal to $\sum_{u \neq v} \frac 1 {diam(G) + 1 - dist(u,v)}$, where $diam(G)$ is the diameter~\cite{IIB02}. In a recent paper~\cite{Cab19}, Cabello designed an efficient batched evaluation of rational functions, that he applied to the computation of the Harary index on bounded-treewidth graphs. Namely, as in Sec.~\ref{sec:range-query}, let $V$ be a set of $k$-dimensional points, where each point $\overrightarrow{p} = (p_1,p_2,\ldots,p_k) \in V$ is assigned some value $f(\overrightarrow{p})$. An {\em inverse shifted-weight query} (ISW) is given by a pair $(\mathcal{R},\delta)$, for some box $\mathcal{R}$ and non-negative value $\delta$. It outputs $\sum\limits_{\overrightarrow{p} \in V \cap \mathcal{R}} \frac 1 {\delta + f(\overrightarrow{p})}$.

\begin{lemma}[\cite{Cab19}]\label{lem:isw}
Let $V$ be a set of $n$ points in $k$ dimensions, and assume that each point $\overrightarrow{p}$ has a positive value $f(\overrightarrow{p})$.
Consider ISW queries specified by $(\mathcal{R}_1,\delta_1), (\mathcal{R}_2,\delta_2), \ldots, (\mathcal{R}_m,\delta_m)$. These values can be computed in total ${O}(2^{{O}(k)}(n+m)^{1+o(1)})$ time.
\end{lemma}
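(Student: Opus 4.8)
The plan is to reproduce the batched rational-function evaluation of Cabello~\cite{Cab19}, whose structure closely mirrors the additive range-tree machinery of Lemma~\ref{lem:range-tree} but replaces the separable sum-aggregation by the non-separable reciprocal aggregation of an ISW query. The idea is to handle the ``box'' part of each query by a $k$-dimensional range tree, so that the only genuinely new ingredient becomes a one-dimensional problem: evaluating a single rational function of $\delta$ at many shifts. Concretely, I would first build a $k$-dimensional range tree over $V$ as in Lemma~\ref{lem:range-tree}. By the canonical-subset decomposition, every box $\mathcal{R}_q$ is a disjoint union of $2^{O(k)}n^{o(1)}$ canonical point sets $S$, each attached to a node of the innermost (one-dimensional) layer; crucially, these sets are fixed during preprocessing and do not depend on the queries. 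Hence $\sum_{\overrightarrow{p} \in V \cap \mathcal{R}_q} \frac{1}{\delta_q + f(\overrightarrow{p})} = \sum_S \sum_{\overrightarrow{p} \in S} \frac{1}{\delta_q + f(\overrightarrow{p})}$, and it suffices to compute, for each canonical set $S$ and each shift $\delta$ of a query whose decomposition uses $S$, the quantity $g_S(\delta) = \sum_{\overrightarrow{p} \in S} \frac{1}{\delta + f(\overrightarrow{p})}$. I would route every query to its canonical sets, tagging each set with the multiset of shifts it must answer.

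The core step is algebraic. Setting $P_S(x) = \prod_{\overrightarrow{p} \in S}(x + f(\overrightarrow{p}))$, one has that $g_S$ is the logarithmic derivative $g_S(x) = P_S'(x)/P_S(x)$. The polynomials $P_S$ can be produced for \emph{all} canonical sets simultaneously and bottom-up, since $P_S = P_{S_1} P_{S_2}$ whenever $S$ splits into its two children $S_1, S_2$; this is precisely a subproduct tree, built with fast (FFT-based) polynomial multiplication, and the derivative $P_S'$ is then obtained coefficient-wise for free. For each node I would batch-evaluate $P_S$ and $P_S'$ at all shifts tagged to $S$ using fast multipoint evaluation, and divide. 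Here the hypotheses $f(\overrightarrow{p}) > 0$ and $\delta \geq 0$ are exactly what guarantees $P_S(\delta) = \prod(\delta + f(\overrightarrow{p})) > 0$, so no division by zero ever occurs. A final pass sums the values $g_S(\delta_q)$ over the canonical sets of each query to recover its answer.

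For the running time, within a node whose set $S$ has size $s$ and receives $t$ shifts, building the local product, differentiating it, and the multipoint evaluation each cost $\tilde{O}(s + t)$. Summing over the innermost layer yields $\tilde{O}\big(\sum_S (|S| + t_S)\big)$; since each point lies in $(\log n)^{O(k)}$ canonical sets and each query is split into $2^{O(k)} n^{o(1)}$ of them, this telescopes to $2^{O(k)}(n+m)^{1+o(1)}$, as claimed. The step I expect to be the main obstacle is controlling the arithmetic cost rather than the combinatorial count: the coefficients of $P_S$ are products of up to $|S|$ of the input values, so they are \emph{not} of constant bit length, and a naive Real-RAM accounting of the polynomial operations would be illegitimate. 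I would resolve this as in~\cite{Cab19} by exploiting that the relevant values are integers bounded polynomially in $n$ (for the Harary application, distances in an unweighted graph), carrying out all polynomial arithmetic modulo a handful of $O(\log n)$-bit primes and recombining via the Chinese Remainder Theorem, so that the extra overhead is absorbed into the $(n+m)^{o(1)}$ factor. A secondary, more routine point to check is that routing the shifts down to the canonical sets and accumulating the partial sums back per query stay within the same budget, which follows directly from the $2^{O(k)} n^{o(1)}$ bound on each query's decomposition size.
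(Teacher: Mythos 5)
The paper does not actually prove this lemma; it is imported verbatim from \cite{Cab19}, so the only meaningful comparison is with Cabello's own argument, and your reconstruction is in substance exactly that argument: decompose each box into $2^{O(k)}n^{o(1)}$ canonical subsets of a range tree, observe that the one-dimensional subproblem $g_S(x)=\sum_{\overrightarrow{p}\in S}1/(x+f(\overrightarrow{p}))$ is the logarithmic derivative $P_S'/P_S$ of $P_S(x)=\prod_{\overrightarrow{p}\in S}(x+f(\overrightarrow{p}))$ (Cabello builds the same degree-$|S|$ rational function by pairwise merging of fractions, which is equivalent), and batch-evaluate with subproduct trees and fast multipoint evaluation; the hypotheses $f(\overrightarrow{p})>0$ and $\delta\geq 0$ are used exactly where you use them, to keep all evaluation points away from the poles, and the combinatorial accounting over canonical sets is correct. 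The one imprecise step is your closing remark on bit complexity: computing modulo a handful of $O(\log n)$-bit primes and recombining by CRT recovers $g_S(\delta)$ only modulo a polynomially bounded integer, whereas the exact rational $P_S'(\delta)/P_S(\delta)$ has numerator and denominator of magnitude up to roughly $n^{|S|}$, so this does not reconstruct the exact value. The clean fix is to state the bound, as Cabello does, in terms of arithmetic operations on the underlying numbers (the present paper uses the lemma as a black box in that same model), or to argue that fixed-precision approximation suffices for the intended applications; as written, the CRT sentence should be either corrected or dropped.
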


\begin{lemma}\label{lem:inv}
For every graph $G = (V,E)$, if we are given a hub labeling with maximum label size $\leq k$, then we can compute both its Harary index and its Reciprocal complementary Wiener index in $2^{{O}(k)} \cdot |V|^{1+o(1)}$ time.
\end{lemma}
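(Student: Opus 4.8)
The plan is to reduce both indices to the inverse shifted-weight query framework of Lemma~\ref{lem:isw}, in essentially the same way that Corollary~\ref{cor:wiener} and Lemma~\ref{lem:pow-sum} reduce the Wiener-type indices to sum-queries. The key observation is that both the Harary index and the Reciprocal complementary Wiener index are sums, over pairs $(u,v)$, of reciprocals of an affine expression in $dist_G(u,v)$, and that by Lemma~\ref{lem:red-box-simpl} we can decompose the distance $dist_G(u,v)$ as $dist_G(u,x) + dist_G(x,v)$ over the boxes $\mathcal{R}\langle \overrightarrow{\ell_X}\rangle$ that partition each $V_{u,\varphi}$.

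First I would set up the range trees exactly as in the proof of Lemma~\ref{lem:pow-sum}: store the points $\overrightarrow{p(v,s)}$ of Lemma~\ref{lem:red-box-simpl} in a family of ${O}(k^2)$ range trees indexed by pairs $(i,s)$, but now with value function $f(\overrightarrow{p(v,s)}) = dist_G(q_s(v),v)$, which equals $dist_G(x,v)$ whenever $v$ lands in a box with $s = \ell_x$. For the \emph{Harary index}, I fix $u \in V$ and $\varphi = (V_i,X,x)$, apply Lemma~\ref{lem:red-box-simpl} to obtain the $2^{{O}(k)}$ boxes $\mathcal{R}\langle \overrightarrow{\ell_X}\rangle$ partitioning $V_{u,\varphi}$, and observe that for every $v$ in such a box,
$$\frac{1}{dist_G(u,v)} = \frac{1}{dist_G(u,x) + f(\overrightarrow{p(v,\ell_x)})}.$$
This is precisely an ISW query with box $\mathcal{R}\langle \overrightarrow{\ell_X}\rangle$ and shift $\delta = dist_G(u,x)$, which is non-negative as required. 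Summing over all $u$, all $\varphi$, and all boxes yields $\sum_{u \neq v} \tfrac{1}{dist(u,v)}$ as a batch of $2^{{O}(k)} \cdot |V|$ ISW queries, and the claimed running time follows from Lemma~\ref{lem:isw} with $m = 2^{{O}(k)}|V|$.

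For the \emph{Reciprocal complementary Wiener index}, the only extra ingredient is the value $diam(G)$, which appears inside the denominator $diam(G) + 1 - dist_G(u,v)$. I would first compute $diam(G)$ via the eccentricity machinery of Theorem~\ref{thm:hub-label} (taking the maximum eccentricity over all vertices), within the stated time budget. Writing the denominator as $[diam(G) + 1 - dist_G(u,x)] - f(\overrightarrow{p(v,\ell_x)})$, I again obtain a batch of ISW-style queries, one per box, with shift $\delta_{u,\varphi} = diam(G) + 1 - dist_G(u,x)$; summing over all $u$, $\varphi$ and boxes gives the index. The main obstacle here is a sign issue: the ISW query of Lemma~\ref{lem:isw} sums $\tfrac{1}{\delta + f(\overrightarrow{p})}$ with a \emph{positive} value $f$ and a non-negative shift $\delta$, whereas the complementary index subtracts $dist_G(x,v)$ rather than adding it. To accommodate this, I would replace the stored value by $f'(\overrightarrow{p(v,s)}) = -dist_G(q_s(v),v)$, or equivalently re-derive Lemma~\ref{lem:isw} for the form $\tfrac{1}{\delta - f(\overrightarrow{p})}$ where the boxes guarantee $\delta > f(\overrightarrow{p}) \geq 0$ so that no denominator vanishes (this positivity is exactly what $dist_G(u,v) \leq diam(G)$ ensures). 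Since Cabello's technique is agnostic to the sign of the linear form as long as denominators stay bounded away from zero on the queried point-set, this adaptation is routine, and both indices fall out within $2^{{O}(k)} \cdot |V|^{1+o(1)}$ time.
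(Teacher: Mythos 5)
Your overall reduction to ISW queries is the same as the paper's, and your treatment of the Reciprocal complementary Wiener index (compute the diameter via Theorem~\ref{thm:hub-label} first, store $-dist_G(q_s(v),v)$ as the point value, and observe that Cabello's batched evaluation only needs the rational functions never to be evaluated at a singularity, which $diam(G)+1-dist(u,v)>0$ guarantees) matches the paper's argument. However, your Harary-index argument has a genuine gap around the hypotheses of Lemma~\ref{lem:isw} and the diagonal term $v=u$. Lemma~\ref{lem:isw} requires every stored point to carry a \emph{positive} value, but with $f(\overrightarrow{p(v,s)})=dist_G(q_s(v),v)$ the value is $0$ whenever $q_s(v)=v$, which happens for every vertex $v$ (a hub labeling must contain $v$ in $L(v)$ to certify $dist(v,v)=0$). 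Worse, since the sets $V_{u,\varphi}$ partition all of $V$, your batch of queries includes the vertex $v=u$ itself, for which the query would attempt to evaluate $1/dist(u,u)=1/0$; and the vertices $v\in L(u)$ are precisely those for which the relevant point value may vanish. So ``summing over all $u$, all $\varphi$ and all boxes'' does not directly yield $\sum_{u\neq v}1/dist(u,v)$ through legal ISW queries.

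The paper's proof is built around exactly this difficulty: it inserts into the range tree indexed by $(i,s)$ only the points with $q_s(v)\neq v$ (so all stored values are positive), handles the vertices of $L(u)\setminus\{u\}$ by summing $1/dist(u,x)$ directly from the label in ${O}(k)$ time, and then must \emph{exclude} all of $L(u)\cup\{u\}$ from the boxes to avoid double-counting, because a vertex $x'\in L(u)$ can still land in a box through a hub $x\neq x'$ with $dist(x,x')>0$. This exclusion is achieved by augmenting each point $\overrightarrow{p(v,s)}$ with an extra coordinate equal to the vertex identity $v$ and splitting each box $\mathcal{R}\langle\overrightarrow{\ell_X}\rangle$ into $|L(u)|+2$ sub-boxes whose last-coordinate intervals avoid the identities in $L(u)\cup\{u\}$. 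None of this appears in your write-up; without it the reduction either violates the preconditions of Lemma~\ref{lem:isw}, hits a singularity at $v=u$, or miscounts the vertices of $L(u)$. The Reciprocal complementary Wiener half of your proof is otherwise in line with the paper.
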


\begin{proof}
Let us first discuss the case of the Harary index.
For every $v \in V$ and $1 \leq s \leq |L(v)|$, we augment the point $\overrightarrow{p(v,s)}$ (as defined in Lemma~\ref{lem:red-box-simpl}) with one more coordinate, set equal to $v$.
In particular, we map $v$ to an $2|L(v)|$-dimensional point-set.
We create $O(k^2)$ range trees of various dimensions in $O(k)$, that are indexed by $(i,s)$ for every $1 \leq i \leq k$ and $1 \leq s \leq i$.
For every such pair $(i,s)$, the corresponding range tree is $2i$-dimensional; we insert in it all points $\overrightarrow{p(v,s)}$ s.t. $v \in V_i$ {\em and} $q_s(v) \neq v$ (the $s^{th}$ vertex of $L(v)$ is different from $v$ itself). This is because we want to assign to this point the value $f( \overrightarrow{p(v,s)} ) = dist(q_s(v),v)$, that needs to be positive so that we can apply Lemma~\ref{lem:isw} later in the proof.

Then, let $u \in V$ be fixed.
We consider all possible tuples $\varphi = (V_i,X,x)$. In particular:
\begin{align*}
\sum_{u \neq v} \frac 1 {dist(u,v)} &= \sum_{\varphi} \sum_{v \in V_{u,\varphi} \setminus \{u\}} \frac 1 {dist(u,v)} = \sum_{x \in L(u) \setminus \{u\}} \frac 1 {dist(u,x)} \\
&+ \sum_{\varphi} \sum_{v \in V_{u,\varphi} \setminus (L(u) \cup \{u\})} \frac 1 {dist(u,v)}.
\end{align*}
Obviously, $\sum_{x \in L(u) \setminus \{u\}} \frac 1 {dist(u,x)}$ can be computed in ${O}(k)$ time from the hub label. 
Furthermore, for any fixed $\varphi = (V_i,X,x)$, applying Lemma~\ref{lem:red-box-simpl}, we compute a family of $2^{O(k)}$ boxes $\mathcal{R}\langle \overrightarrow{\ell_X}\rangle$ s.t. $\{v \in V \mid \overrightarrow{p(v,\ell_x)} \in  \mathcal{R}\langle \overrightarrow{\ell_X}\rangle \}$ partitions $V_{u,\varphi}$.
However, in order to avoid over-counting, we need to exclude all vertices of $L(u) \cup \{u\}$ from the latter set.
This was already done implicitly for vertex $x$, since having $dist(x,x) = 0$ prevented the corresponding points to be inserted in the range trees.
Nevertheless, we may also have $dist(u,x') = dist(u,x) + dist(x,x')$ for some $x' \in L(u) \setminus \{x\}$. 
Let $\tilde{L}(u) = L(u) \cup \{u\}$.
We replace each box $\mathcal{R}\langle \overrightarrow{\ell_X}\rangle$ by $|\tilde{L}(u)|+1$ boxes, that are obtained by imposing further constraints on the last coordinate.
Specifically, let $q_0(u) = -\infty$, $q_{|\tilde{L}(u)|+1}(u) = +\infty$, and for every $1 \leq s \leq |\tilde{L}(u)|$, let $q_s(u)$ be the $s^{th}$ vertex of $\tilde{L}(u)$.
For every $0 \leq t \leq |\tilde{L}(u)|$, the box $\mathcal{R}_t\langle \overrightarrow{\ell_X}\rangle$ further imposes $p_{2i}(v,\ell_x) \in (q_t(v),q_{t+1}(v))$.
In doing so, we can now partition $V_{u,\varphi} \setminus (L(u) \cup \{u\})$.
Let $\mathcal{R}_1, \mathcal{R}_2, \ldots, \mathcal{R}_m$ be the resulting family of boxes, for some $m = O(k2^{{O}(k)}) = 2^{O(k)}$. In order to compute $\sum_{v \in V_{u,\varphi} \setminus (L(u) \cup \{u\})} \frac 1 {dist(u,v)} = \sum_{v \in V_{u,\varphi} \setminus (L(u) \cup \{u\})} \frac 1 {dist(u,x) + dist(x,v)}$, it suffices to solve the ISW queries $(\mathcal{R}_1,dist(u,x)), (\mathcal{R}_2,dist(u,x)), \ldots, (\mathcal{R}_m,dist(u,x))$.
Overall, we define $2^{{O}(k)}$ ISW queries per vertex. By Lemma~\ref{lem:isw}, we can solve all queries in total $2^{{O}(k)}|V|^{1+o(1)}$ time.

\medskip
Finally, we discuss the changes to bring to the above strategy so that we can also compute the Reciprocal complementary Wiener index. First, we need to compute the diameter, which by Theorem~\ref{thm:hub-label} can be done in $2^{{O}(k)}|V|^{1+o(1)}$ time. The value of a point $\overrightarrow{p(v,s)}$ is now defined as $f( \overrightarrow{p(v,s)} ) = - dist(q_s(v),v)$. Furthermore, given a vertex $u \in V$ and a tuple $\varphi = (V_i,X,x)$, we specify as ISW queries $(\mathcal{R}_1,diam(G)+1-dist(u,x)), (\mathcal{R}_2,diam(G)+1-dist(u,x)), \ldots, (\mathcal{R}_m,diam(G)+1-dist(u,x))$, where $\mathcal{R}_1, \mathcal{R}_2, \ldots, \mathcal{R}_m$ is the same family of $m = 2^{{O}(k)}$ hyper-rectangles as the one described earlier in the proof for computing the Harary index. We {\em cannot} apply Lemma~\ref{lem:isw} directly because we assigned negative values to the points $\overrightarrow{p(v,s)}$. However, a careful analysis of the proof of Lemma~\ref{lem:isw} in~\cite{Cab19} shows that we only need to ensure that we never evaluate a rational function at a singularity (which is always the case under the sufficient conditions stated in the lemma). In the case of the Reciprocal complementary Wiener index, the rational functions to be evaluated are of the form $x \to \frac 1 {x - dist(x,v)}$. They are evaluated at $x = diam(G) +1 - dist(u,x)$ for some vertices $u$ s.t. $dist(u,v) = dist(u,x) + dist(x,v)$. In particular, since we always have $diam(G) +1 - dist(u,v) > 0$, we never need to evaluate these rational functions at a singularity.
\end{proof}

\paragraph{Padmakar-Ivan index and Szeged index.} For the last two indices which we study in this section, we need a few more terminology. For every two vertices $u$ and $v$ in $G$, we denote $n_{uv}$ the number of vertices that are closer to $u$ than to $v$. The Szeged index of $G$ is defined as $\sum_{uv \in E} n_{uv}n_{vu}$~\cite{Gut94}. Similarly, let $n^e_{uv}$ be the number of edges that are closer to $u$ than to $v$ (where the distance between an edge $e=xy$ and a vertex $u$ is defined as $\min\{dist(x,u),dist(y,u)\}$). The Padmakar-Ivan index of $G$ is defined as $\sum_{uv \in E} \left[ n^e_{uv} + n^e_{vu} \right]$~\cite{KKA01}.

\begin{lemma}\label{lem:new}
For every graph $G = (V,E)$, if we are given a hub labeling with maximum label size $\leq k$, then after a pre-processing in $2^{{O}(k)}|V|^{1+o(1)}$ time, for any vertices $u$ and $v$ we can compute the value $n_{uv}$ in $2^{{O}(k)}|V|^{o(1)}$ time.
\end{lemma}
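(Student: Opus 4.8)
The key identity to exploit is that $n_{uv}$ counts the vertices $w$ with $dist(u,w) < dist(v,w)$. The plan is to reduce the computation of $n_{uv}$, for an arbitrary fixed query pair $(u,v)$, to a family of $2^{O(k)}$ count-queries over the same range trees built for Lemma~\ref{lem:red-box-simpl} and used in Lemma~\ref{lem:pow-sum}. The crucial observation is that, for any vertex $w$, the quantities $dist(u,w)$ and $dist(v,w)$ are both recovered from the hub label of $w$: there are hubs $x \in L(u) \cap L(w)$ and $x' \in L(v) \cap L(w)$ realizing these two distances. So once I fix which hub of $u$ and which hub of $v$ are the respective witnesses, both distances become \emph{affine} expressions in the per-$w$ quantities $dist(q_s(w),w)$, which are exactly the coordinates stored in the points $\overrightarrow{p(w,s)}$.

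\textbf{Setting up the decomposition.} First I would, as in the previous lemmas, pre-process the point-sets $\{\overrightarrow{p(w,s)}\}$ into a family of $O(k^2)$ range trees indexed by pairs $(i,s)$, in $2^{O(k)}|V|^{1+o(1)}$ time by Lemmas~\ref{lem:red-box-simpl} and~\ref{lem:range-tree}. For a query $(u,v)$, I would range over all tuples $\varphi = (V_i, X, x)$ describing how $w$ sees $u$ (with $x \in X = L(u)\cap L(w)$ the least hub on a shortest $uw$-path) \emph{and simultaneously} over the analogous data $\varphi' = (X', x')$ describing how $w$ sees $v$. Applying Lemma~\ref{lem:red-box-simpl} to the ``$u$-side'' tuple $\varphi$ yields a box $\mathcal{R}\langle\overrightarrow{\ell_X}\rangle$ isolating exactly the vertices $w$ whose nearest realized $u$-hub is $x$; the $v$-side intersection conditions $L(v)\cap L(w) = X'$ and the witness condition on $x'$ can be encoded as \emph{additional} range constraints on the shared prefix coordinates $\overrightarrow{q(w)}$ of the same points, in the same manner as the $X$-conditions are encoded in Lemma~\ref{lem:red-box-simpl}. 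This keeps the total number of boxes at $2^{O(k)}$, since each side contributes a $2^{O(k)}$ factor.

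\textbf{The inequality as a box constraint.} Having fixed $(x, x')$, on the resulting vertex set I have $dist(u,w) = dist(u,x) + dist(x,w)$ and $dist(v,w) = dist(v,x') + dist(x',w)$, where $dist(u,x)$ and $dist(v,x')$ are query-time constants. The defining inequality $dist(u,w) < dist(v,w)$ therefore becomes
\[
dist(q_{\ell_x}(w),w) - dist(q_{\ell_{x'}}(w),w) < dist(v,x') - dist(u,x),
\]
which is a single half-space constraint on two of the coordinates of $\overrightarrow{p(w,\ell_x)}$ (both $dist(x,w)$-type and $dist(x',w)$-type quantities appear among the stored coordinates, up to the additive reference already built into the point construction). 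I would append this as one more interval constraint, refining each box into a box on which the inequality holds identically; then $n_{uv}$ is obtained by summing $|V \cap \mathcal{R}|$ over all $2^{O(k)}$ refined boxes via count-queries, each answered in $2^{O(k)}|V|^{o(1)}$ time.

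\textbf{Main obstacle.} The delicate point is the \emph{joint} encoding of the $u$-side and $v$-side witness structure while guaranteeing that every vertex $w$ is counted \emph{exactly once}. The single-side construction of Lemma~\ref{lem:red-box-simpl} already provides a partition of $V_{u,\varphi}$ into disjoint boxes, but I must verify that intersecting the $u$-side boxes with the $v$-side constraints and the strict inequality still yields a partition of $\{w : dist(u,w) < dist(v,w)\}$ — in particular that the ``least witness'' conventions on both $x$ and $x'$ break ties consistently so no $w$ falls into two boxes and none is lost. Boundary cases where $u$ or $v$ itself lies in $L(w)$, or where $x$ and $x'$ coincide or where $dist(u,w) = dist(v,w)$ (which must be excluded), require the same careful handling with strict versus non-strict inequalities as in the proof of Lemma~\ref{lem:red-box}; this bookkeeping, rather than any new geometric idea, is where the real work lies.
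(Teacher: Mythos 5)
Your overall plan (decompose by the hub-intersection pattern and by the realizing hubs on both the $u$-side and the $v$-side, then turn $dist(u,w)<dist(v,w)$ into a range constraint and count) is the right one, and your final inequality is indeed expressible on $\overrightarrow{p(w,\ell_x)}$, since $dist(q_{\ell_x}(w),w)-dist(q_{\ell_{x'}}(w),w)$ is, up to sign, one of its stored difference coordinates. The gap is in how you handle the $v$-side \emph{witness} condition. You claim that ``$x'$ is the least vertex of $X'$ on a shortest $vw$-path'' can be encoded as additional range constraints on the shared prefix $\overrightarrow{q(w)}$; it cannot. The prefix only records which vertices belong to $L(w)$, whereas the witness condition requires, for every $x''\in X'$, comparing $dist(v,x'')+dist(x'',w)$ against $dist(v,x')+dist(x',w)$, i.e., bounding $dist(q_{\ell_{x''}}(w),w)-dist(q_{\ell_{x'}}(w),w)$. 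In the point $\overrightarrow{p(w,\ell_x)}$ all difference coordinates are referenced to $\ell_x$, so this quantity is a \emph{difference of two coordinates}, and a constraint of the form $p_a-p_b>c$ is not an orthogonal box constraint. Your data structure (the ${O}(k^2)$ range trees over the points $\overrightarrow{p(w,s)}$) therefore cannot isolate the relevant vertex sets for the joint tuple; and you cannot simply drop the witness condition while keeping your single inequality for the chosen $x'$, because $dist(v,x')+dist(x',w)$ may strictly overestimate $dist(v,w)$, making the test inequivalent to $dist(u,w)<dist(v,w)$.

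The paper closes exactly this hole by changing the pre-processing: for each $w$ and each pair $(s,t)$ it stores the concatenated point $\overrightarrow{p(w,s,t)}$, obtained by concatenating $\overrightarrow{p(w,s)}$ with $\overrightarrow{p(w,t)}$, in a range tree indexed by $(i,s,t)$ (${O}(k^3)$ trees in total). Then the $u$-side witness condition is a box constraint on the first half, the $v$-side witness condition is a box constraint on the second half, and the comparison $dist(q_{\ell_x}(z),z)-dist(q_{\ell_y}(z),z)<dist(v,y)-dist(u,x)$ is a constraint on a single coordinate of the second half. Alternatively, you could have rescued your $(i,s)$-indexed setup by discarding the $v$-side witness altogether and replacing your single inequality by the conjunction, over all $x''\in X'$, of $dist(u,x)+dist(x,w)<dist(v,x'')+dist(x'',w)$: each of these is a one-sided bound on a distinct coordinate of $\overrightarrow{p(w,\ell_x)}$, so together they form a box, and the conjunction is equivalent to $dist(u,w)<dist(v,w)$ once $x$ is known to realize $dist(u,w)$. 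As written, however, your proof does not go through.
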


\begin{proof}
Consider the point-sets defined in Lemma~\ref{lem:red-box-simpl}.
For every $v \in V$ and $1 \leq s,t \leq |L(v)|$, we define a point $\overrightarrow{p(v,s,t)}$ which is obtained by concatenating $\overrightarrow{p(v,s)}$ with $\overrightarrow{p(v,t)}$.
Observe that all these points have a dimension in $O(|L(v)|) = O(k)$. 
If $v \in V_i$, then we store the point $\overrightarrow{p(v,s,t)}$ in some range tree indexed by $(i,s,t)$. Overall, there are ${ O}(k^3)$ range trees.

\medskip
\noindent
\underline{Answering a query.} In what follows, let $u,v \in V$ be fixed. For every $i \in \{1,\ldots,k\}$, $X \subseteq L(u), Y \subseteq L(v)$ and $(x,y) \in X \times Y$, we get a tuple $\varphi = (V_i,X,Y,x,y)$ and we define:
\begin{align*}
V_{u,\varphi} = \{ z \in V_i \mid& L(u) \cap L(z) = X, \ L(v) \cap L(z) = Y, \\
& \ x \ \text{is the least vertex of} \ X \ \text{on a shortest} \ uz\text{-path}, \\
& \ y \ \text{is the least vertex of} \ Y \ \text{on a shortest} \ vz\text{-path} \}.
\end{align*}
Let $n_{uv}(\varphi)$ be the number of vertices in $V_{u,\varphi}$ that are closer to $u$ than to $v$. In particular, $n_{uv} = \sum_{\varphi} n_{uv}(\varphi)$. Therefore, we only need to explain how to compute $n_{uv}(\varphi)$ for any fixed tuple $\varphi = (V_i,X,Y,x,y)$. For that, we first adapt the proof of Lemma~\ref{lem:red-box} in order to compute a family of $2^{O(k)}$ boxes $\mathcal{R}\langle \overrightarrow{\ell_X},\overrightarrow{\ell_Y}\rangle$ with the following properties: $z \in V_{u,\varphi} \Longleftrightarrow \exists (\overrightarrow{\ell_X},\overrightarrow{\ell_Y}) \ \text{s.t.} \ \overrightarrow{p(z,\ell_x,\ell_y)} \in \mathcal{R}\langle \overrightarrow{\ell_X},\overrightarrow{\ell_Y}\rangle$; furthermore, if $z \in V_{u,\varphi}$ then there is a {\em unique} $(\overrightarrow{\ell_X},\overrightarrow{\ell_Y}) \ \text{s.t.} \ \overrightarrow{p(z,\ell_x,\ell_y)} \in \mathcal{R}\langle \overrightarrow{\ell_X},\overrightarrow{\ell_Y}\rangle$. 

Let $r_{\varphi}(u,v) = dist(v,y) - dist(u,x)$. For every $z \in V_{u,\varphi}$ we have:
\begin{align*}
&dist(u,z) < dist(v,z) \Longleftrightarrow \ dist(u,x) + dist(x,z) < dist(v,y) + dist(y,z) \\
&\Longleftrightarrow \  dist(x,z) - dist(y,z) < \left[ dist(v,y) - dist(u,x) \right] \\
&\Longleftrightarrow \  dist(q_{\ell_x}(z),z) - dist(q_{\ell_y}(z),z) < r_{\varphi}(u,v),
\end{align*}
which can be verified by modifying correspondingly the range constraints defining $\mathcal{R}\langle \overrightarrow{\ell_X},\overrightarrow{\ell_Y}\rangle$. We are left performing sum-queries over the family of all modified boxes.
\end{proof}

Finally, we show that our above strategy for computing the values $n_{uv}$ can be generalized to the computation of the values $n^e_{uv}$. 

\begin{lemma}\label{lem:new-e}
For every graph $G = (V,E)$, if we are given a hub labeling with maximum label size $\leq k$, then after a pre-processing in $2^{O(k)}|E|^{1+o(1)}$ time, for any vertices $u$ and $v$ we can compute the value $n^e_{uv}$ in $2^{ O(k)}|E|^{o(1)}$ time.
\end{lemma}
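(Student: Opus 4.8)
The plan is to mimic the proof of Lemma~\ref{lem:new}, but replace vertices by edges as the objects being counted, and replace a single vertex-based ``distance from a hub'' by the edge-to-vertex distance $\min\{dist(x,\cdot),dist(y,\cdot)\}$. First I would fix an arbitrary edge $e=xy\in E$ as the counted object. The crucial point is that, for the purpose of computing its distances to all other vertices, an edge behaves almost like a vertex: a hub labeling of the graph lets us recover $dist(x,z)$ and $dist(y,z)$ for every $z$, hence the edge distance $d(e,z)=\min\{dist(x,z),dist(y,z)\}$, from the labels $L(x),L(y),L(z)$. So the objects to be located in space are the $|E|$ edges rather than the $|V|$ vertices, which is why the running time is stated in terms of $|E|$ rather than $|V|$.

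Concretely, I would build the point sets as in Lemma~\ref{lem:new}, but indexed by edges. For every edge $e=xy$ let $L(e)=L(x)\cup L(y)$ (of size $O(k)$), and order it; then, mirroring the construction of Lemma~\ref{lem:red-box-simpl}, map $e$ to a point set $\{\overrightarrow{p(e,s)}\}$ whose coordinates encode the ordered hubs of $L(e)$ together with the edge-to-hub distances, where the distance from hub $q$ to the edge $e$ is taken to be $\min\{dist(q,x),dist(q,y)\}$ (for undirected graphs this equals the edge-to-vertex distance and is what we need for the ISW/sum-query inequalities). As in Lemma~\ref{lem:new}, for the pair $(u,v)$ we concatenate two such points to obtain $\overrightarrow{p(e,s,t)}$, store them in $O(k^3)$ range trees indexed by the size of $L(e)$ and the two offsets $(s,t)$, and pre-process with Lemma~\ref{lem:range-tree}; since there are $|E|$ edges this pre-processing costs $2^{O(k)}|E|^{1+o(1)}$.

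\emph{Answering a query.} Fix $u,v$. As in Lemma~\ref{lem:new}, I would range over all tuples $\varphi=(V_i,X,Y,x,y)$ with $X\subseteq L(u)$, $Y\subseteq L(v)$, $(x,y)\in X\times Y$, now classifying edges $e$ according to $L(u)\cap L(e)=X$, $L(v)\cap L(e)=Y$, and $x$ (resp.\ $y$) being the least hub of $X$ (resp.\ $Y$) realizing $d(e,u)$ through $u$'s label (resp.\ $d(e,v)$). The defining inequality $d(e,u)<d(e,v)$ unfolds, exactly as in Lemma~\ref{lem:new}, into $dist(q_{\ell_x}(e),e)-dist(q_{\ell_y}(e),e)<r_\varphi(u,v)$ with $r_\varphi(u,v)=dist(v,y)-dist(u,x)$, which is an additional one-dimensional range constraint on the already-computed boxes $\mathcal{R}\langle\overrightarrow{\ell_X},\overrightarrow{\ell_Y}\rangle$ from the edge-variant of Lemma~\ref{lem:red-box}. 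Counting the surviving edges by sum-queries and summing over all $2^{O(k)}$ tuples $\varphi$ yields $n^e_{uv}$ in $2^{O(k)}|E|^{o(1)}$ time.

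The main obstacle is making the ``edge-to-hub distance'' $\min\{dist(q,x),dist(q,y)\}$ cooperate with the range-query machinery, which in Lemma~\ref{lem:red-box} relied on the linearity $dist(u,z)=dist(u,x)+dist(x,z)$ to turn hub-selection conditions into coordinate-wise box constraints. Because the minimum is not additive, I would need to argue that once the least realizing hub $x$ of $X$ (through $u$) and $y$ of $Y$ (through $v$) are fixed by $\varphi$, the relevant distances $d(e,u)$ and $d(e,v)$ are again of the additive form $dist(u,x)+dist(x,e)$ and $dist(v,y)+dist(y,e)$ for each edge in the corresponding class; here $dist(x,e)=\min\{dist(x,\text{endpoints of }e)\}$ is simply a precomputed coordinate of $\overrightarrow{p(e,\cdot)}$, so the inequality stays linear in the stored coordinates and the box decomposition of Lemma~\ref{lem:red-box} carries over verbatim. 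Verifying that this edge-indexed reduction preserves both the covering and the uniqueness guarantees of that lemma is the one place where some care is required; everything else is a routine transcription of Lemma~\ref{lem:new} with $|V|$ replaced by $|E|$.
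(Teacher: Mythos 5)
Your proposal is correct, but it takes a genuinely different route from the paper's. The paper keeps the two endpoints $z,z'$ of an edge $e$ separate: it concatenates the endpoint points $\overrightarrow{p(z,s,t)}$ and $\overrightarrow{p(z',s',t')}$, enlarges the tuple to $\varphi=(V_i,V_j,X,X',Y,Y',x,x',y,y')$ with one hub set and one realizing hub per (endpoint, query vertex) pair, and only then confronts the condition $dist(u,e)<dist(v,e)$, which at that point reads $\min\{dist(u,x)+dist(x,z),\,dist(u,x')+dist(x',z')\}<\min\{dist(v,y)+dist(y,z),\,dist(v,y')+dist(y',z')\}$ and is resolved by a further case split on which endpoint realizes each minimum. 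You instead merge the endpoints up front, defining an ``edge hub label'' $L(e)=L(z)\cup L(z')$ with distances $dist(q,e)=\min\{dist(q,z),dist(q,z')\}$, which reduces the edge case to a rerun of Lemma~\ref{lem:new} with only two hub sets per tuple. The point you flag as needing care is indeed the crux, and it does go through: for every $h\in L(u)\cap L(e)$ the triangle inequality gives $dist(u,h)+dist(h,e)\ge\min\{dist(u,z),dist(u,z')\}=d(e,u)$, and if, say, $d(e,u)=dist(u,z)$, then the hub $h^*\in L(u)\cap L(z)\subseteq L(u)\cap L(e)$ certifying $dist(u,z)$ attains this bound since $dist(h^*,e)\le dist(h^*,z)$; hence $d(e,u)=\min_{h\in L(u)\cap L(e)}\bigl(dist(u,h)+dist(h,e)\bigr)$, the covering and uniqueness guarantees of Lemma~\ref{lem:red-box-simpl} transfer to this edge labeling, and once the least realizing hub is fixed the additivity needed for the box constraints is restored. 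Your variant buys smaller tuples (two hub sets instead of four, hence fewer points and boxes) and avoids the min-of-mins case analysis at query time; the paper's variant avoids having to introduce and justify a new labeling for edges. Both yield the stated $2^{O(k)}|E|^{1+o(1)}$ preprocessing and $2^{O(k)}|E|^{o(1)}$ query bounds.
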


\begin{proof}
We modify the framework of Lemma~\ref{lem:new} so that it applies to edges rather than to vertices.
For that, recall that we totally ordered the vertex-set $V$.
In particular, every edge $e$ has a canonical representation $uv$, where $u < v$.
For every $1 \leq i,j \leq k$, let $E_{i,j} = \{ e =uv \in E \mid u \in V_i, \ v \in V_j \}$.
If $e = uv \in E_{i,j}$ then, let $(\overrightarrow{p(u,s,t)})_{1 \leq s,t \leq i}$ and $(\overrightarrow{p(v,s',t')})_{1 \leq s',t' \leq j}$ be the points corresponding to $u$ and $v$ in the proof of Lemma~\ref{lem:new}; we define a family of $(ij)^2 = {O}(k^4)$ points $\overrightarrow{p(e,s,t,s',t')}$ that are obtained from the concatenation of $\overrightarrow{p(u,s,t)}$ with $\overrightarrow{p(v,s',t')}$ for every $1 \leq s,t \leq i$ and $1 \leq s',t' \leq j$. In doing so, we can adapt the proof of Lemma~\ref{lem:new} in order to compute, for every $u,v \in V$ and $\varphi = (V_i,V_j,X,X',Y,Y',x,x',y,y')$ the value:
\begin{align*}
n^e_{uv}(\varphi) = \sum \{ e = zz' &\in E_{i,j} \mid \ L(u) \cap L(z) = X, \ L(v) \cap L(z) = Y, \\
 & \ L(u) \cap L(z') = X', \ L(v) \cap L(z') = Y', \\
 & \ x \ \text{is the least vertex of} \ X \ \text{on a shortest} \ uz\text{-path} \\
 & \ y \ \text{is the least vertex of} \ Y \ \text{on a shortest} \ vz\text{-path} \\
 & \ x' \ \text{is the least vertex of} \ X' \ \text{on a shortest} \ uz'\text{-path} \\
 & \ y' \ \text{is the least vertex of} \ Y' \ \text{on a shortest} \ vz'\text{-path} \\
 & \ dist(u,e) < dist(v,e)\}
\end{align*}
by reducing to $2^{{O}(k)}$ sum-queries.
We are done, as for any $u$ and $v$ we have $n^e_{uv} = \sum_{\varphi} n^e_{uv}(\varphi)$, and there are $2^{O(k)}$ tuples $\varphi$ to be considered.
\end{proof}

The following result now follows from Lemmas~\ref{lem:new} and~\ref{lem:new-e}.

\begin{corollary}\label{cor:new}
For every graph $G = (V,E)$, if we are given a hub labeling with maximum label size $\leq k$, then we can compute its Padmakar-Ivan and Szeged indices in $2^{{O}(k)} \cdot (|V|+|E|)^{1+o(1)}$ time.
\end{corollary}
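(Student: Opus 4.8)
The plan is to combine the two preceding lemmas exactly as the statement suggests: Lemma~\ref{lem:new} gives the quantities $n_{uv}$ needed for the Szeged index, while Lemma~\ref{lem:new-e} gives the quantities $n^e_{uv}$ needed for the Padmakar--Ivan index. For the Szeged index I would first build the data structure of Lemma~\ref{lem:new}, incurring the stated $2^{{O}(k)}|V|^{1+o(1)}$ pre-processing cost. Then I iterate over the edges $uv \in E$, querying $n_{uv}$ and $n_{vu}$ for each edge at a cost of $2^{{O}(k)}|V|^{o(1)}$ per query, and accumulate $\sum_{uv \in E} n_{uv}n_{vu}$. Summing over $|E|$ edges gives a total of $2^{{O}(k)}|E|\cdot|V|^{o(1)}$ for the query phase, which is absorbed into $2^{{O}(k)}(|V|+|E|)^{1+o(1)}$. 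The only subtlety worth flagging is that the definition of $n_{uv}$ counts vertices strictly closer to $u$ than to $v$, so I must be careful that the orientation of the strict inequality in Lemma~\ref{lem:new} matches the index's convention, and that the two queries $n_{uv}$ and $n_{vu}$ are issued with the roles of $u$ and $v$ exchanged rather than assuming any symmetry.

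For the Padmakar--Ivan index the argument is entirely parallel but runs through Lemma~\ref{lem:new-e}: after a pre-processing in $2^{O(k)}|E|^{1+o(1)}$ time, each value $n^e_{uv}$ is computable in $2^{O(k)}|E|^{o(1)}$ time, so iterating over the $|E|$ edges and forming $\sum_{uv \in E}[n^e_{uv}+n^e_{vu}]$ again costs $2^{{O}(k)}(|V|+|E|)^{1+o(1)}$ in total. Because both indices are summations over the edge set of graph-theoretic quantities that the two lemmas compute on demand, no new geometric machinery is required here: the corollary is purely a matter of invoking the right lemma once per edge and adding up the results.

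I expect no genuine obstacle in this proof, since the heavy lifting was already done in Lemmas~\ref{lem:new} and~\ref{lem:new-e}. If anything is delicate, it is bookkeeping: making sure the per-query time bounds are multiplied by $|E|$ (not $|V|$) edges and that the resulting $2^{{O}(k)}|E|\cdot|V|^{o(1)}$ and $2^{{O}(k)}|E|\cdot|E|^{o(1)}$ terms are correctly folded into the single claimed bound $2^{{O}(k)}(|V|+|E|)^{1+o(1)}$. One should also note that for the Szeged index we reuse the $|V|$-based pre-processing of Lemma~\ref{lem:new} whereas for Padmakar--Ivan we use the $|E|$-based pre-processing of Lemma~\ref{lem:new-e}, and that running both simultaneously is fine because their costs both sit below the stated bound. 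With these routine checks in place, the corollary follows immediately.
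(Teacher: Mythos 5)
Your proposal is correct and matches the paper's argument: the paper likewise derives the corollary immediately from Lemmas~\ref{lem:new} and~\ref{lem:new-e} by querying $n_{uv}, n_{vu}$ (resp.\ $n^e_{uv}, n^e_{vu}$) once per edge and summing. Your bookkeeping of the pre-processing and per-query costs, and the fold into $2^{{O}(k)}(|V|+|E|)^{1+o(1)}$, is exactly right.
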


\section{Graph classes of bounded expansion}\label{sec:nowhere-dense}
A graph $H$ is an {\em $r$-shallow minor} of a graph $G$ if it can be obtained from some subgraph of $G$ by the contraction of pairwise disjoint subgraphs of radius at most $r$; a graph family $\mathcal{G}$ has {\em bounded expansion} if, for any $r$, every $r$-shallow minor for any graph in $\mathcal{G}$ has edge density at most some constant $d_r(\mathcal{G})$~\cite{NeO12}. A remarkable property of graph classes of bounded expansion, which we use in this section, is the existence of low tree-depth decompositions.

\smallskip
\noindent
Specifically, an undirected graph $G=(V,E)$ has tree-depth at most $k$ if there exists a rooted tree $T$ of height $\leq k$ such that $V(T) = V$, and if $u,v \in V$ are adjacent in $G$ then they have an ancestor-descendent relationship in $T$~\cite{NeO12}. A directed graph has tree-depth at most $k$ if its underlying graph (obtained by removing the orientation of its arcs) has tree-depth at most $k$.

\begin{lemma}[\cite{RRVS14}]\label{lem:fpt-td}
For every fixed $k$ and $n$-vertex graph $G$, we can decide whether the tree-depth of $G$ is at most $k$, and if so compute a corresponding rooted tree $T$, in $2^{{O}(k^2)}n$ time.
\end{lemma}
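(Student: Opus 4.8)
The plan is to reduce the problem to a dynamic program over a tree-decomposition of bounded width. First I would treat the connected components of $G$ separately: since the tree-depth of a graph is the maximum of the tree-depths of its components, and a witnessing rooted tree is obtained by attaching the per-component trees under a single common root, this reduction costs only linear time overall and lets me assume that $G$ is connected.

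Second, I would exploit the standard inequality $\mathrm{tw}(G) \le \mathrm{td}(G)-1$ (see~\cite{NeO12}): if $\mathrm{td}(G) \le k$ then $\mathrm{tw}(G) \le k-1$. Running the $2^{O(k)}n$-time algorithm of~\cite{BDDF+16}, I either learn that $\mathrm{tw}(G) > k-1$ — in which case $\mathrm{td}(G) > k$ and I reject — or I obtain a tree-decomposition $(T,\mathcal{X})$ of width $w = O(k)$. Thus, for the remainder of the argument I may assume such a decomposition is in hand.

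Third — the technical heart — I would decide $\mathrm{td}(G) \le k$ by a bottom-up dynamic program over $(T,\mathcal{X})$, using the equivalent formulation of tree-depth as the vertex-ranking number: the least number of ranks in a labeling $\rho\colon V \to [k]$ such that every path joining two vertices of equal rank contains a vertex of strictly larger rank, the induced elimination forest placing larger ranks closer to the root. For each node $t$ with bag $X_t$ (of size $\le w+1 = O(k)$), the table stores the feasible \emph{signatures}: a rank assignment of $X_t$, the ancestor--descendant forest it induces among the bag vertices, and the set of ranks still ``open'' below $t$ (ranks realised by an already-processed vertex that could still be joined, through forgotten vertices, to a vertex introduced later). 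The introduce, forget and join operations update these signatures in the usual way, the crucial constraint being that merging two branches must never create a path between two open vertices of equal rank with no higher rank separating them. Since a signature is described by a forest together with a rank/open-rank labelling on $O(k)$ elements, their number is $2^{O(k\log k)} \le 2^{O(k^2)}$, so the whole DP runs in $2^{O(k^2)}\cdot n$ time; a witnessing tree $T$ is then recovered by the standard back-tracking through the tables.

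I expect the main obstacle to be the correctness and the state-count of this dynamic program. Unlike, say, independent set, the ranking constraint is non-local: the strictly-higher-rank vertex separating two equal-rank vertices may have been forgotten, so the signature must retain exactly enough information about forgotten vertices — which ranks remain ``open'' and how the surviving bag vertices nest — to validate every future join without ever recording the forgotten vertices themselves. Proving that the proposed signature is simultaneously \emph{sufficient} (it produces no false acceptance across joins and forgets) and \emph{compact} (few enough signatures to stay within $2^{O(k^2)}$) is the delicate part; once that is settled, the component reduction, the treewidth pre-processing, and the back-tracking reconstruction of $T$ are all routine.
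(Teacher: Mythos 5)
The paper itself offers no proof of this lemma: it is imported verbatim from~\cite{RRVS14}, so the only meaningful comparison is with the argument in that reference. Your outline follows the same route as that work: split into connected components, use $\mathrm{tw}(G)\le \mathrm{td}(G)-1$ together with the single-exponential treewidth algorithm of~\cite{BDDF+16} to either reject or obtain a width-$O(k)$ tree-decomposition, and then run a dynamic program over that decomposition based on the ranking/elimination-forest characterisation of tree-depth. Up to that point everything is correct and, as you say, routine.

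The gap sits exactly where you flag the difficulty, and it is more than a matter of writing out details: the signature you propose is too coarse. A rank assignment on the bag, the induced ancestor--descendant forest on the bag vertices, and a single \emph{global} set of ``open'' ranks below $t$ do not suffice to validate future steps. A vertex introduced later may have to be placed, in the elimination forest, on a root-to-leaf path strictly between a bag vertex and a forgotten vertex below it, or between two forgotten vertices; to check that this placement creates no pair of equal-rank vertices with no strictly higher rank between them, you must know \emph{which ranks are realised by forgotten vertices on that particular path of the elimination forest}, not merely which ranks occur somewhere below $t$. The signature actually used in~\cite{RRVS14} therefore records, for each edge of the trace of the partial elimination forest on the bag (and for each bag vertex's path up to its trace-root), the subset of $[k]$ of ranks occupied by forgotten vertices on the corresponding segment. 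This is precisely what pushes the number of states up to $(2^{k})^{O(k)}=2^{O(k^2)}$ rather than the $2^{O(k\log k)}$ you compute; the fact that your count comes out strictly better than the best known bound is itself a symptom that your signature has discarded information that the join and forget operations need. Once the signature is enriched in this way, one still has to prove it is lossless (no false acceptances across joins), which is the technical core of the cited paper; your component reduction, treewidth pre-processing, and back-tracking reconstruction of the witness tree are all sound.
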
 

\begin{lemma}[\cite{IOO18}]\label{lem:hub-td}
For every fixed $k$ and $n$-vertex graph $G$, if $G$ has tree-depth at most $k$ and a corresponding rooted tree $T$ is given, then we can compute hub labels of size at most $k$ in ${O}(k^2n)$ time.
\end{lemma}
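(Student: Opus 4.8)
The plan is to take as the label of every vertex $v$ the set $A(v)$ of its ancestors in $T$ (including $v$ itself), setting $L^-(v)=L^+(v)=A(v)$; since $T$ has height at most $k$, each such label contains at most $k$ vertices. First I would establish the structural fact that makes these labels valid: for any two vertices $u,v$, if $P$ is a shortest $u$--$v$ path and $w$ is the vertex of $P$ of smallest depth in $T$, then $w$ is a common ancestor of $u$ and $v$. The proof is a short induction along $P$: consecutive vertices of $P$ are adjacent in $G$, hence $T$-comparable, and because $w$ has minimum depth and the ancestors of any vertex form a chain, every vertex of $P$ is forced to be a descendant of $w$; in particular so are $u$ and $v$. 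Consequently $w \in A(u)\cap A(v) = L^+(u)\cap L^-(v)$, and since $w$ lies on a shortest path we get $dist(u,v)=dist(u,w)+dist(w,v)$, while the reverse inequality for every common label vertex is just the triangle inequality. This shows that $A(\cdot)$ is a correct hub labeling, and it remains only to equip each label with the distances $dist(v,a)$ and $dist(a,v)$ for $a\in A(v)$.

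The remaining, and harder, task is to compute those $O(kn)$ ancestor distances within the time bound. Here I would exploit the separator property of root paths: for every $v$ with children $c_1,\dots,c_d$ in $T$, the edges of $G$ leaving the subtree $T_{c_\ell}$ all land on proper ancestors of $c_\ell$, so $A(v)$ (of size at most $k$) separates each $T_{c_\ell}$ from the rest of $G$. This lets me run a bottom-up dynamic program over $T$ that maintains, for each $v$, a matrix $M_v$ of shortest-walk lengths between pairs of the boundary $A(v)$ whose interior is confined to $T_v$. Leaves are handled directly in $O(k^2)$ time, and at an internal node I would combine the children's matrices with the $G$-edges inside $A(v)\cup\{v\}$, treating each child's matrix as a set of shortcut edges routing through its subtree.

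A single bottom-up pass captures only walks that stay inside a subtree, whereas a shortest $a$--$v$ walk for an ancestor $a$ may leave $T_v$ through $A(v)$ and re-enter. I would therefore add a complementary top-down pass propagating, from parent to child, a matrix $\Omega_v$ of shortest-walk lengths between boundary vertices whose interior avoids $T_v$; merging $M_v$ and $\Omega_v$ on the $O(k)$ boundary vertices yields the true distances $dist_G$ restricted to $A(v)$, from which $dist(v,a)$ and $dist(a,v)$ are read off. For a directed $G$ the same two passes are carried out on the arc set, storing the two orientations separately, and the common-ancestor lemma above is applied to the underlying path of a directed shortest path.

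The main obstacle is the running time. A naive closure of the boundary matrices costs $O(k^3)$ per node and only yields $O(k^3 n)$. Since $\sum_v |A(v)| \leq kn$, to reach $O(k^2 n)$ I can afford only about $O(k\cdot\mathrm{depth}(v))$ work per vertex, which rules out recomputing an all-pairs closure at every node. The key will be to avoid materialising the full boundary matrices and instead to update them incrementally along the current root path; for unweighted $G$ the relevant distances change by at most one between a child and its parent, so I expect a monotone, amortised relaxation that touches each ancestor pair a constant number of times, rather than a Floyd--Warshall step, to be what brings the total down to $O(k^2 n)$.
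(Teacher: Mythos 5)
The paper does not prove this lemma; it is imported verbatim from~\cite{IOO18}, so there is no in-paper argument to compare against. Judged on its own merits, your proposal gets the construction and its correctness right: taking $L^+(v)=L^-(v)=A(v)$ (the ancestors of $v$ in $T$, including $v$) gives labels of size at most $k$, and your chain argument -- consecutive path vertices are $T$-comparable, so the minimum-depth vertex of any shortest $u$--$v$ path is a common ancestor of $u$ and $v$ -- is exactly the standard justification of the $2$-hop property for tree-depth decompositions.

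The genuine gap is the running time, and you essentially concede it yourself. Your bottom-up matrices $M_v$ and top-down matrices $\Omega_v$ require, at each node, a min-plus closure (or a merge of several children's $k\times k$ shortcut matrices) over a boundary of size up to $k$, which is $\Theta(k^3)$ per node and $\Theta(k^3 n)$ overall; nothing in the proposal reduces this. The amortisation you hope for does not obviously exist as stated: the quantity $dist_G(a_i,a_j)$ between two fixed ancestors does not change between a child and its parent at all (it is a global quantity), and the restricted-walk entries of $M_v$ are not unit-incremental in any way you have established -- and in any case the lemma is invoked for weighted graphs elsewhere in this framework, so an argument special to unweighted $G$ would not suffice. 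What is actually needed (and what~\cite{IOO18} supplies) is a scheme whose per-vertex cost is $O(k^2)$: a DFS over $T$ that maintains a distance matrix on the current root path and, upon visiting $v=a_d$, performs only the single pivot relaxation $D[i][j]\leftarrow\min(D[i][j],\,D[i][d]+D[d][j])$ in $O(d^2)\le O(k^2)$ time, together with a separate (and not trivial) argument that a constant number of such sweeps makes every entry equal to the true $dist_G$. Without that last piece -- an explicit $O(k^2)$-per-vertex update rule and a convergence proof -- the claimed $O(k^2 n)$ bound is not established, only $O(k^3 n)$.
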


A {\em low tree-depth decomposition} with parameter $k$ for a graph $G$ is a vertex-coloring such that the union of any $i \leq k$ color classes induces a subgraph of tree-depth at most $i$~\cite{NeO12}. We denote by $\chi_k(G)$ the minimum number of colours in a low tree-depth decomposition with parameter $k$ for $G$ (for $k=1$, this is exactly the chromatic number of $G$).

\begin{lemma}[\cite{NeO12}]\label{lem:td-dec}
A graph class $\mathcal{G}$ has bounded expansion if and only if $\forall k, \ \limsup \{ \chi_k(G) \mid G \in \mathcal{G} \} < \infty$.
\end{lemma}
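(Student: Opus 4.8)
The statement is an equivalence between two ``for all'' quantifications, so the plan is to prove the two implications separately; I expect the direction ``bounded expansion $\Rightarrow$ $\chi_k$ bounded'' to be the hard one, and I would dispatch the converse first.

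\smallskip
\noindent
\emph{Easy direction (bounded $\chi_k$ for all $k$ $\Rightarrow$ bounded expansion).} Suppose that for each $k$ there is a constant $N_k$ with $\chi_k(G) \le N_k$ for every $G \in \mathcal{G}$. Fix $r$; I must bound the edge density of every $r$-shallow minor $H$ of every $G \in \mathcal{G}$. The first ingredient is that bounded tree-depth forces bounded density: if a graph has tree-depth $\le i$, then orienting each edge from a vertex to its ancestor in the height-$\le i$ elimination tree shows it is $(i-1)$-degenerate, hence has fewer than $i$ times as many edges as vertices. The second ingredient is to choose a parameter $p = p(r)$ (a small polynomial in $r$) together with a low tree-depth colouring of $G$ using $\le N_p$ colours, and to argue that, because the branch sets realizing $H$ have radius $\le r$, each edge of $H$ is ``witnessed'' inside the union of some $\le p$ colour classes of $G$, which by the choice of colouring induces a subgraph of tree-depth $\le p$. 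Summing the first ingredient over the $\binom{N_p}{\le p}$ choices of colour-class subsets then bounds $|E(H)|$ by a function of $N_{p(r)}$ times $|V(H)|$, so the $r$-shallow minors have density at most some $d_r(\mathcal{G}) < \infty$. Making the witnessing step precise---that a radius-$\le r$ model localizes each minor edge to boundedly many colours---is the only delicate point on this side.

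\smallskip
\noindent
\emph{Hard direction (bounded expansion $\Rightarrow$ $\chi_k$ bounded).} Assume all the density bounds $d_r(\mathcal{G})$ are finite, fix $k$, and aim to bound $\chi_k(G)$ uniformly over $\mathcal{G}$. The canonical tool is \emph{transitive fraternal augmentation}. I would (i) start from a bounded in-degree acyclic orientation of $G$, which exists because bounded expansion implies bounded degeneracy; (ii) repeatedly augment the digraph by adding \emph{fraternal} arcs between two in-neighbours of a common head and \emph{transitive} arcs short-cutting directed paths of length two, iterating $O(\log k)$ rounds so that the final digraph is transitively closed up to depth $k$; (iii) properly colour the underlying undirected graph with $N = (\text{final maximum in-degree}) + 1$ colours; and (iv) verify that this is a low tree-depth colouring with parameter $k$ for $G$, by reading off an elimination tree of height $\le i$ from the transitively closed augmentation restricted to any $i \le k$ colour classes.

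\smallskip
\noindent
The crux---and the step I expect to be the main obstacle---is showing in (ii) that the maximum in-degree stays bounded throughout the augmentation, by a function of finitely many density bounds $d_0(\mathcal{G}), d_1(\mathcal{G}), \dots, d_{g(k)}(\mathcal{G})$. This is a double-counting argument: the fraternal arcs created at a head $z$ come from pairs of in-neighbours of $z$, so bounding their total number amounts to bounding the density of a shallow minor obtained by contracting short in-stars, while the transitive arcs correspond to short directed paths whose contraction yields shallow minors of slightly larger depth. Thus each augmentation round ``spends'' the density hypothesis at one higher shallow-minor depth, and keeping the number of rounds---hence the maximal depth $g(k)$ invoked and the resulting tree-depth---finite and controlled by $k$ is exactly where bounded expansion is consumed. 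Once the in-degree is uniformly bounded, steps (iii)--(iv) are routine, and combining both directions yields the claimed equivalence.
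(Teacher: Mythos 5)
The paper does not prove this lemma at all: it is imported verbatim from Ne\v{s}et\v{r}il and Ossona de Mendez~\cite{NeO12}, so there is no in-paper argument to compare against. Your sketch is, in outline, a faithful reconstruction of the canonical proof from that source: the easy direction via degeneracy of bounded tree-depth graphs plus localizing each minor edge of an $r$-shallow minor to a witness on at most $4r+2$ vertices (hence at most $4r+2$ colour classes), and the hard direction via transitive fraternal augmentations with the in-degree of each round controlled by shallow-minor densities. You also correctly identify the two genuinely delicate points (the witnessing/localization step, and the double-counting that bounds in-degrees by $d_r(\mathcal{G})$ at increasing depths $r$). Two details are off, though neither is structural. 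First, a digraph of maximum in-degree $d$ has a $2d$-degenerate underlying graph, so the proper colouring needs $2d+1$ colours, not $d+1$. Second, your claim that $O(\log k)$ augmentation rounds suffice is doubtful: transitive shortcuts alone would double reachable depth per round, but the low tree-depth property requires the full interplay of transitive \emph{and} fraternal arcs, and the standard accounting uses $\Theta(k)$ rounds, each roughly doubling the shallow-minor depth consumed -- which is precisely why the paper's Lemma~\ref{lem:compute-dec} invokes $\nabla_{2^{k-2}+1}(G)$ and a polynomial of degree $\approx 2^{2^k}$. With the round count corrected, your plan closes along exactly the lines of~\cite{NeO12}.
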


Before presenting our proof of Theorem~\ref{thm:nowhere-dense}, what now remains to discuss is the computational cost for computing a low tree-depth decomposition for graph classes of bounded expansion. For that, for any graph $G$, let $\nabla_r(G)$ be the maximum density of an $r$-shallow minor of $G$. The following result is a straightforward consequence of the definition of graph classes of bounded expansion.

\begin{lemma}[\cite{NeO12}]\label{lem:grad}
A graph class $\mathcal{G}$ has bounded expansion if and only if $\forall k, \ \limsup\{ \nabla_k(G) \mid G \in \mathcal{G} \} < \infty$.
\end{lemma}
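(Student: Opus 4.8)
The plan is to prove the equivalence level by level, observing that the two conditions are essentially a restatement of one another once we unwind the definition of $\nabla_r$. Recall that $\nabla_r(G)$ is \emph{defined} as the maximum density of an $r$-shallow minor of $G$; hence, for any threshold $d$, the inequality $\nabla_r(G) \leq d$ holds if and only if \emph{every} $r$-shallow minor of $G$ has density at most $d$. For each fixed value $k = r$, I would show that ``$\limsup\{\nabla_r(G) \mid G \in \mathcal{G}\} < \infty$'' is equivalent to ``there is a constant $d_r(\mathcal{G})$ bounding the density of every $r$-shallow minor of every $G \in \mathcal{G}$'', and then quantify over all $r$. Since the defining constant $d_r(\mathcal{G})$ is allowed to depend arbitrarily on $r$ (and likewise the $\nabla$-formulation quantifies separately over each $k$), there is no growth constraint linking different values, so the equivalence reduces to a single fixed level.

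For the forward direction, suppose $\mathcal{G}$ has bounded expansion. Fixing $k$ and applying the definition with $r = k$ yields a constant $d_k(\mathcal{G})$ such that every $k$-shallow minor of every $G \in \mathcal{G}$ has density at most $d_k(\mathcal{G})$. By the definition of $\nabla_k$, this is exactly $\nabla_k(G) \leq d_k(\mathcal{G})$ for all $G \in \mathcal{G}$, so $\sup_{G \in \mathcal{G}} \nabla_k(G) \leq d_k(\mathcal{G}) < \infty$, whence the limsup is finite. As $k$ is arbitrary, this gives the implication.

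For the reverse direction, suppose $\limsup\{\nabla_k(G) \mid G \in \mathcal{G}\} < \infty$ for every $k$, and fix $r$ with $k = r$. The finiteness of the limsup provides a threshold $C$ such that $\nabla_r(G) \leq C$ for all but finitely many $G \in \mathcal{G}$; I would call the exceptional set $F$, which is finite because there are only finitely many graphs on any bounded number of vertices. Each $G \in F$ is a finite graph with finitely many $r$-shallow minors, so $\nabla_r(G)$ is a finite real. Setting $d_r(\mathcal{G}) = \max\left(C, \max_{G \in F} \nabla_r(G)\right)$ yields the required uniform bound: every $r$-shallow minor of every $G \in \mathcal{G}$ has density at most $\nabla_r(G) \leq d_r(\mathcal{G})$.

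The only real subtlety — the single point deserving care — is the passage from a limsup to the \emph{uniform} constant $d_r(\mathcal{G})$ demanded by the definition of bounded expansion: a finite limsup controls only the tail of the class, so one must separately absorb the finitely many exceptional small graphs into the constant, using that each individual finite graph has a finite $\nabla_r$. Everything else is a direct translation of the definitions, which is why the result is attributed as folklore to the theory in~\cite{NeO12}.
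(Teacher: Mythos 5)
Your proof is correct and takes the route the paper intends: the paper offers no proof of this lemma, citing it as ``a straightforward consequence of the definition'' of bounded expansion, and your argument is precisely that definitional unfolding ($\nabla_r(G)\leq d$ iff every $r$-shallow minor of $G$ has density at most $d$, quantified level by level over $r$). Your extra care in absorbing the finitely many exceptional small graphs permitted by the $\limsup$ (as opposed to $\sup$) formulation is a sensible reading of the notation and does not change the substance.
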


We will use the following constructive result of Ne{\v{s}}et{\v{r}}il and Ossona de Mendez for low tree-depth decompositions:

\begin{lemma}[\cite{NeO12}]\label{lem:compute-dec}
For every graph $G$ and integer $k$, there exists a polynomial $P_k$ of degree $\approx 2^{2^k}$ such that we can compute a low tree-depth decomposition with parameter $k$ for $G$ with $P_k(\nabla_{2^{k-2}+1}(G))$ colours, in total ${O}(P_k(\nabla_{2^{k-2}+1}(G)) \cdot |G|)$ time.
\end{lemma}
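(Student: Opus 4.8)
\emph{Overall plan.} I would prove the statement by the classical \emph{transitive fraternal augmentation} technique. The idea is to reduce the construction of a low tree-depth decomposition to the \emph{proper} coloring of an auxiliary directed graph obtained from $G$ by iterated augmentation, and then to control separately (i) the radius of the shallow minors that ``witness'' the augmented arcs, which will produce the grad index $2^{k-2}+1$, and (ii) the growth of the maximum in-degree across rounds, which will produce the number of colours and hence the degree $\approx 2^{2^k}$ of $P_k$. Concretely: first compute a bounded out-degree orientation of $G$; then apply $m \approx 2^{k-2}$ rounds of augmentation; finally, output a greedy proper coloring of the underlying undirected graph of the last augmentation. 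By Lemma~\ref{lem:grad}, bounded expansion guarantees that all the relevant grads $\nabla_r(G)$ with $r \leq 2^{k-2}+1$ are finite, so every quantity below is bounded.

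\emph{Setup and one augmentation round.} Since $\nabla_0(G)$ bounds the degeneracy of $G$ up to a factor two, a degeneracy ordering yields, in $O(|G|)$ time, an acyclic orientation with maximum in-degree at most $2\nabla_0(G)$. One \emph{$1$-step transitive fraternal augmentation} adds to the current orientation: a \emph{transitivity} arc $u \to w$ for every directed path $u \to v \to w$; and a \emph{fraternal} arc between $u$ and $v$ (oriented afterwards so as to keep the in-degree balanced) whenever $u$ and $v$ are non-adjacent and share an out-neighbour. Each round can be carried out in time linear in $|G|$ times the current maximum degree, simply by scanning, for every vertex, the boundedly many pairs of arcs incident to it.

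\emph{The two quantitative controls.} Two invariants must be tracked along the $m$ rounds. \textbf{(a) Radius of the witnesses.} Every arc created after $i$ rounds is certified by a connected subgraph of $G$ of controlled radius; tracking how this radius increases with $i$ shows that, at the last round, the underlying undirected graph $H_m$ of the final augmentation is a shallow minor of $G$ at depth at most $2^{k-2}$. Consequently its edge density -- and therefore the degeneracy of $H_m$ -- is bounded by $\nabla_{2^{k-2}+1}(G)$. \textbf{(b) Degree blow-up.} A fraternal step makes pairwise adjacent the boundedly many in-neighbours of each vertex, so the new maximum in-degree is bounded by a fixed polynomial of degree at least two in the previous one. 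Compounding this recurrence over the $m \approx 2^{k-2}$ rounds turns a single grad $\nabla_{2^{k-2}+1}(G)$ into a bound of the form $P_k(\nabla_{2^{k-2}+1}(G))$ with $\deg P_k \approx 2^{2^{k-2}} \approx 2^{2^k}$. A greedy coloring along the orientation of $H_m$ then uses at most $P_k(\nabla_{2^{k-2}+1}(G))$ colours, matching the claimed bound.

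\emph{The structural heart, and the main obstacle.} It remains to prove that this proper coloring $c$ of $H_m$ is in fact a low tree-depth decomposition with parameter $k$: for every set $I$ of at most $k$ colour classes, the induced subgraph $G[\bigcup_{i \in I} c^{-1}(i)]$ has tree-depth at most $|I|$. I would prove this by induction on $|I| \leq k$. Properness gives the base case $|I|=1$ (an independent set, tree-depth $1$). For the inductive step, the augmentation is designed so that every connected subgraph spanned by more than $|I|$ colours still contains a vertex that is the head of augmented arcs from the rest of the component; removing such an ``apex'' leaves connected components spanned by fewer colours, and the height of the resulting elimination forest is bounded by $|I|$. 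This is exactly where $m$ must be chosen large enough ($\approx 2^{k-2}$) so that all monochromatic paths of length up to $\approx 2^k$ -- the obstruction to tree-depth $k$ -- have been shortcut by transitivity and thus ruled out by properness. I expect the delicate part of the whole argument to be this synchronization: proving the induction with the correct dependence of $m$ on $k$, and carrying the exponent bookkeeping of (a) and (b) so that the grad index comes out exactly $2^{k-2}+1$ and the degree of $P_k$ exactly $\approx 2^{2^k}$. Once the structural lemma is established, the running time is immediate, since the orientation, each of the $m$ rounds, and the final greedy coloring are all linear in $|G|$ up to the current bounded degree, for a total of $O(P_k(\nabla_{2^{k-2}+1}(G)) \cdot |G|)$.
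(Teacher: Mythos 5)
Your proposal targets a statement that the paper itself does not prove: Lemma~\ref{lem:compute-dec} is imported verbatim from~\cite{NeO12}, so the comparison must be with the proof in that reference. You have correctly identified its technique -- iterated transitive fraternal augmentations of a bounded in-degree orientation, followed by a greedy proper coloring, with a structural induction showing that any $i \leq k$ colour classes induce a subgraph of tree-depth at most $i$ -- and your description of a single augmentation round and of the final elimination argument is faithful in spirit to the original.

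However, there is a genuine quantitative error that breaks your two controls (a) and (b): the choice $m \approx 2^{k-2}$ for the number of rounds. A transitivity step composes two arcs, so the length of the path of $G$ witnessing an arc can \emph{double} at each round; after $m$ rounds, arcs are witnessed by paths of length up to $2^m$, hence by shallow minors at depth roughly $2^{m-1}$. With $m = 2^{k-2}$ this gives depth about $2^{2^{k-2}}$, not $2^{k-2}+1$, so your control (a) cannot hold as stated; and, by the same doubling, shortcutting the paths of length $\approx 2^k$ that you invoke in the structural step takes only $\approx k$ rounds, not $2^{k-2}$. In the actual proof of~\cite{NeO12} the number of rounds is \emph{linear} in $k$, which is exactly what makes the grad index come out as $2^{k-2}+1$. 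But then your accounting (b) collapses: squaring the in-degree over $O(k)$ rounds yields a polynomial of degree only $2^{O(k)}$ in the grad. The degree $\approx 2^{2^k}$ arises from a source you do not track: at every round the fraternal edges must be oriented using a degeneracy bound for the \emph{current} augmentation, and the grad of the $i$-th augmentation is bounded by a polynomial (whose degree itself grows with $i$) in grads of $G$ at exponentially increasing depth; it is the \emph{composition} of these per-round polynomial transfer bounds, not the in-degree recurrence alone, that produces the doubly exponential degree of $P_k$. Relatedly, $H_m$ is not literally ``a shallow minor of $G$''; its density must be bounded through a contraction argument applied at every round, since the degeneracy bound is needed each time a fraternal graph is oriented, not only at the end. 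So the skeleton is the right one, but the synchronization you yourself single out as the delicate point is, as proposed, internally inconsistent.
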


We are now ready to present the main result in this section:

\begin{proof}[Proof of Theorem~\ref{thm:nowhere-dense}]
Let $c = P_k(\nabla_{2^{k-2}+1}(G))$.
By Lemma~\ref{lem:compute-dec}, a low tree-depth decomposition with parameter $k$ for $G$ and $c$ colours can be computed in total ${O}(c \cdot n)$ time. Consider all possible unions of $k$ color classes sequentially (there are $\binom{c}{k}$ possibilities). Every one of them induces a subgraph of tree-depth at most $k$. In particular, by Lemma~\ref{lem:fpt-td}, we can compute a corresponding rooted tree for this subgraph in $2^{{O}(k^2)}$ time. Then, by Lemma~\ref{lem:hub-td}, we can compute a hub labeling for this subgraph, with maximum label size $\leq k$, in ${O}(k^2n)$ time. By concatenating all these hub labelings, each vertex of $G$ gets assigned a label of cardinality $\leq k \binom c k$. This resulting labeling is not necessarily a hub labeling, as we may not have $dist_G(u,v) = \min \{ dist_G(u,x) + dist_G(x,v) \mid x \in L^+(u) \cap L^-(v) \}$ for all ordered pairs of vertices $(u,v)$. However, since every path of length at most $k$ is contained in the union of at most $k$ color classes (and so, in one of the $\binom{c}{k}$ subgraphs of bounded tree-depth on which our above algorithm iterates), if $dist_G(u,v) \leq k$, then given $u$ and $v$ this above labeling correctly computes $dist_G(u,v)$. In particular, if the diameter is at most $k$ then we indeed computed a hub labeling for $G$. Finally, we use this labeling as input for the algorithm of Theorem~\ref{thm:hub-label} in order to decide in $2^{{O}\left(k\binom{c}{k}\right)}n^{1+o(1)}$ time whether the diameter of $G$ is at most $k$.

The total running time is in $2^{{O}\left(k\binom{c}{k}\right)}n^{1+o(1)}$. Note that by Lemma~\ref{lem:grad}, $c \approx \left[d_{2^{k-2}+1}(\mathcal{G})\right]^{2^{2^k}}$ for some constant $d_{2^{k-2}+1}(\mathcal{G})$ depending only on $\mathcal{G}$ and $k$.
\end{proof}

\bibliographystyle{abbrv}
\bibliography{hub}

\end{document}